\documentclass{article}

\usepackage[english]{babel}
\usepackage{latexsym}
\usepackage{amssymb}
\usepackage{amsmath}
\usepackage{amsthm}
\usepackage{cite}    % does not work with elsarticle
\usepackage{ifthen}
\usepackage{xspace}
\usepackage{tikz}
\usepackage{paralist}
\usepackage{mathdots}

\usepackage[colorlinks,final,citecolor=blue]{hyperref}

\pagestyle{plain}

% tikz libraries
\usetikzlibrary{%
  arrows,%
  positioning,%
  decorations.pathmorphing,%
  decorations.pathreplacing,%
}

%%%%%%%%%%%%%%%%%%%%%%%%%%%%%%%%%%%%%%%%%%%%%%%%%%%%%%%%%%%%%%%%%%%%%%%%%%%%%%%%
% colors that can easily be changed
\xdefinecolor{uwopurple}{cmyk}{.86,1,0,.12}

\colorlet{colAux}{blue!20!white}
\colorlet{colMod}{orange!80!black}
\colorlet{colModd}{uwopurple}%{cyan!80!black}
\colorlet{colPos}{green!60!black}
\colorlet{colNeg}{red!75!black}
\colorlet{colUni}{yellow}

% uncomment for grayscale printout
%\colorlet{colAux}{lightgray}
%\colorlet{colMod}{darkgray}
%\colorlet{colModd}{darkgray}
%\colorlet{colPos}{black}
%\colorlet{colNeg}{black}

% used in the example pattern
\colorlet{eins}{colPos}
\colorlet{zwei}{white}
\colorlet{drei}{colAux}

%%%%%%%%%%%%%%%%%%%%%%%%%%%%%%%%%%%%%%%%%%%%%%%%%%%%%%%%%%%%%%%%%%%%%%%%%%%%%%%%
% Theorem environments
\theoremstyle{plain}
\newtheorem{theorem}{Theorem}[]
\newtheorem{proposition}[theorem]{Proposition}
\newtheorem{lemma}[theorem]{Lemma}
\newtheorem{corollary}[theorem]{Corollary}

\theoremstyle{definition}

\newtheorem*{prob}{Problem}

\theoremstyle{remark}
\newtheorem{remark}[theorem]{Remark}

\newcommand{\myproblem}[3]{\pagebreak[2]%
	\begin{prob}[#3]\ \\ \vspace{-\baselineskip}
		\begin{compactdesc}
			\item{\sc Given:} #1;
			\item{\sc Output:} #2.
		\end{compactdesc}
	\end{prob}}

%%%%%%%%%%%%%%%%%%%%%%%%%%%%%%%%%%%%%%%%%%%%%%%%%%%%%%%%%%%%%%%%%%%%%%%%%%%%%%%%
% Lazy

% caligraphical

\newcommand{\cI}{\ensuremath{\mathcal{I}}\xspace}

% greek letters
\renewcommand{\phi}{\varphi}

% sets
\newcommand{\sse}{\subseteq}

\newcommand{\sm}{\setminus}

% abbreviations
\newcommand{\ie}{i.\,e.,\xspace}
\newcommand{\eg}{e.\,g.,\xspace}
\newcommand{\resp}{resp.,\ }

% decorations

%%%%%%%%%%%%%%%%%%%%%%%%%%%%%%%%%%%%%%%%%%%%%%%%%%%%%%%%%%%%%%%%%%%%%%%%%%%%%%%%
% Math

% sets { ... | ... }
\newcommand{\sett}[2]{\left\{#1\mathrel{\left|\vphantom{#1}\vphantom{#2}\right.}#2\right\}}
\newcommand{\set}[1]{\left\{\mathinner{#1}\right\}}

% brackets etc
\newcommand{\abs}[1]{\left|\mathinner{#1}\right|}

\newcommand{\gen}[1]{\langle \mathinner{#1} \rangle}

% numbers
\newcommand{\N}{\mathbb{N}}

% oh and Oh

 % {\mathcal{o}}

%%%%%%%%%%%%%%%%%%%%%%%%%%%%%%%%%%%%%%%%%%%%%%%%%%%%%%%%%%%%%%%%%%%%%%%%%%%%%%%%
% complexity classes

\newcommand{\NP}{\ensuremath{\mathrm{NP}}\xspace}

%%%%%%%%%%%%%%%%%%%%%%%%%%%%%%%%%%%%%%%%%%%%%%%%%%%%%%%%%%%%%%%%%%%%%%%%%%%%%%%%
% Glues
\newcommand\bul{\ensuremath{{\bullet}}\xspace}
\newcommand\dia{\ensuremath{{\blacklozenge}}\xspace}
\newcommand\str{\ensuremath{{\bigstar}}\xspace}
\newcommand\tri{\ensuremath{{\blacktriangle}}\xspace}

%%%%%%%%%%%%%%%%%%%%%%%%%%%%%%%%%%%%%%%%%%%%%%%%%%%%%%%%%%%%%%%%%%%%%%%%%%%%%%%%
% renaming
\newcommand\tas{\mbox{DRTAS}\xspace}
\newcommand\pats{{\sc \mbox{Pats}}\xspace}
\newcommand\kpats{{\sc \mbox{$k$-Pats}}\xspace}
\newcommand\mbpats{{\sc \mbox{mbPats}}\xspace}
\newcommand\tmbpats{{\sc \mbox{$3$-mbPats}}\xspace}
\newcommand\kmbpats{{\sc \mbox{$k$-mbPats}}\xspace}
\newcommand\sat{{\sc \mbox{$3$-Sat}}\xspace}
\newcommand\modpats{{\sc Modified} \pats}

%%%%%%%%%%%%%%%%%%%%%%%%%%%%%%%%%%%%%%%%%%%%%%%%%%%%%%%%%%%%%%%%%%%%%%%%%%%%%%%%
% comments

%%%%%%%%%%%%%%%%%%%%%%%%%%%%%%%%%%%%%%%%%%%%%%%%%%%%%%%%%%%%%%%%%%%%%%%%%%%%%%%%
% tikz commands
\newcommand{\smalltile}[4]
{
	\begin{scope}[xshift=#1 cm,yshift=#2 cm]
		\draw [black, fill=#3] (-.225,-.225) rectangle (.225,.225);
		\node [font=\footnotesize] at (0,0) {#4};
	\end{scope}
}

% colors used in the text
\newcommand\colT[2]{%
	\mbox{\hspace{.225cm}\tikz%
	[baseline=-.6ex,overlay,scale=.85,every node/.style={scale=1}]
	{\smalltile{0}{0}{#1}{#2}}\hspace{.225cm}}\xspace}
\newcommand\Cor{\colT{white}{\tl{or}}}
\newcommand\Cra{\colT{white}{$\overset0\rightarrow$}}
\newcommand\Cpos{\colT{colPos}{\color{white}$\pmb+$}}
\newcommand\Cneg{\colT{colNeg}{\color{white}$\pmb-$}}

\newcommand{\fulltile}[9]
{
	\begin{scope}[xshift=#1 cm,yshift=#2 cm]
		\draw [black, fill=#3] (-.7,-.7) rectangle (.7,.7);
		\begin{scope}[#4,inner sep=2pt]
			\node [font=\small] at (0,0) {$#5$};
			\node [font=\footnotesize,anchor=north] at (0,.7) {$#6$};
			\node [font=\footnotesize,anchor=north,rotate=-90] at (.7,0) {$#7$};
			\node [font=\footnotesize,anchor=south] at (0,-.7) {$#8$};
			\node [font=\footnotesize,anchor=south,rotate=-90] at (-.7,0) {$#9$};
		\end{scope}
	\end{scope}
}

\newcommand{\supertile}[4]
{
	\begin{scope}[xshift=#1 cm,yshift=#2 cm]
		\fill [black] (-.85,-.85) rectangle (.85,.85);
		\fill [lightgray] (-.85,.85) rectangle (1,1);
		\fill [lightgray] (.85,-.85) rectangle (1,1);
		\fill [white] (.85,.15*#3-.85) rectangle (1,.15*#3-.7);
		\fill [white] (.15*#3-.85,.85) rectangle (.15*#3-.7,1);
		\draw [black] (-1,-1) rectangle (1,1);
		\node [white,font=\small,text width=1.6cm,align=center] {\sf #4};
	\end{scope}
}

\newcommand{\supertileL}[4]
{
	\begin{scope}[xshift=#1 cm,yshift=#2 cm]
		\fill [black] (-.85,-.85) rectangle (.85,.85);
		\fill [lightgray] (-.85,.85) rectangle (1,1);
		\fill [lightgray] (.85,-.85) rectangle (1,1);
		\fill [white] (.85,.15*#3-.85) rectangle (1,.15*#3-.7);
		\fill [white] (.15*#3-.85,.85) rectangle (.15*#3-.7,1);
		\draw [black] (-.85,-1) rectangle (1,1);
		\node [white,font=\small,text width=1.6cm,align=center] {\sf #4};
	\end{scope}
}

\newcommand{\supertileB}[4]
{
	\begin{scope}[xshift=#1 cm,yshift=#2 cm]
		\fill [black] (-.85,-.85) rectangle (.85,.85);
		\fill [lightgray] (-.85,.85) rectangle (1,1);
		\fill [lightgray] (.85,-.85) rectangle (1,1);
		\fill [white] (.85,.15*#3-.85) rectangle (1,.15*#3-.7);
		\fill [white] (.15*#3-.85,.85) rectangle (.15*#3-.7,1);
		\draw [black] (-1,-.85) rectangle (1,1);
		\node [white,font=\small,text width=1.6cm,align=center] {\sf #4};
	\end{scope}
}

\newcommand{\supertileBL}[4]
{
	\begin{scope}[xshift=#1 cm,yshift=#2 cm]
		\fill [black] (-.85,-.85) rectangle (.85,.85);
		\fill [lightgray] (-.85,.85) rectangle (1,1);
		\fill [lightgray] (.85,-.85) rectangle (1,1);
		\fill [white] (.85,.15*#3-.85) rectangle (1,.15*#3-.7);
		\fill [white] (.15*#3-.85,.85) rectangle (.15*#3-.7,1);
		\draw [black] (-.85,-.85) rectangle (1,1);
		\node [white,font=\small,text width=1.6cm,align=center] {\sf #4};
	\end{scope}
}

% tile labels
\newcommand\tl[1]{\ensuremath{\mathsf{#1}}}

% use to resize pictures
\def\smallscale{1}
\def\fullscale{1}
\def\fullscalenode{1}

%%%%%%%%%%%%%%%%%%%%%%%%%%%%%%%%%%%%%%%%%%%%%%%%%%%%%%%%%%%%%%%%%%%%%%%%%%%%%%%%
%%%%%%%%%%%%%%%%%%%%%%%%%%%%%%%%%%%%%%%%%%%%%%%%%%%%%%%%%%%%%%%%%%%%%%%%%%%%%%%%
%%%%%%%%%%%%%%%%%%%%%%%%%%%%%%%%%%%%%%%%%%%%%%%%%%%%%%%%%%%%%%%%%%%%%%%%%%%%%%%%
\begin{document}

\title{3-color Bounded Patterned Self-assembly%
\texorpdfstring{%
	\thanks{
		The research of L.~K. and S.~K. was supported
		by the NSERC Discovery Grant R2824A01 and
		UWO Faculty of Science grant to L.~K.
		The research of S.~S. was supported
		by the HIIT Pump Priming Project Grant 902184/T30606.}}{}}
		
\author{Lila Kari$^1$ \and Steffen Kopecki$^1$ \and 
	Shinnosuke Seki$^2$}
	
\date{}

\maketitle

{\small \centering
	$^1$ Department of Computer Science, Middlesex College, \\
	The University of Western Ontario \\
	London Ontario N6A 5B7, Canada \\
	{\tt lila@csd.uwo.ca, steffen@csd.uwo.ca}	\\
	\medskip
	$^2$ Helsinki Institute of Information Technology (HIIT), \\
	Department of Computer Science, Aalto University \\
	P.\,O.~Box 15400, FI-00076, Aalto, Finland \\
	{\tt shinnosuke.seki@aalto.fi} \\
}

\begin{abstract}
Patterned self-assembly tile set synthesis (\pats) is the problem of finding a minimal tile set which uniquely self-assembles into a given pattern.
Czeizler and Popa proved the \NP-completeness of \pats and Seki showed that the \pats problem is already \NP-complete for patterns with 60 colors.
In search for the minimal number of colors such that \pats remains \NP-complete, we introduce multiple bound \pats (\mbpats) where we allow bounds for the numbers of tile types of each color.
We show that \mbpats is \NP-complete for patterns with just three colors and, as a byproduct of this result, we also obtain a novel proof for the \NP-completeness of \pats which is more concise than the previous proofs.
\end{abstract}

%%%%%%%%%%%%%%%%%%%%%%%%%%%%%%%%%%%%%%%%%%%%%%%%%%%%%%%%%%%%%%%%%%%%%%%%%%%%%%%%
%%%%%%%%%%%%%%%%%%%%%%%%%%%%%%%%%%%%%%%%%%%%%%%%%%%%%%%%%%%%%%%%%%%%%%%%%%%%%%%%
\section{Introduction}\label{sec:intro}

Tile self-assembly is the autonomous formation of a structure from individual {\em tiles} controlled by local attachment rules.
One application of self-assembly is the implementation of nanoscopic tiles by DNA strands forming double crossover tiles with four unbounded single strands \cite{WinfreeLWS1998}.
The unbounded single strands control the assembly of the structure as two, or more, tiles can attach to each other only if the bonding strength between these single strands is big enough.
The general concept is to have many copies of the same tile types in a solution  which then form a large crystal-like structure over time; often an initial structure, the {\em seed}, is present in the solution from which the assembly process starts.

A mathematical model describing self-assembly systems is the {\em abstract tile self-assembly model} (aTAM), introduced by Winfree \cite{WinfreePhD}.
Many variants of aTAMs have been studied: a main distinction between the variants is whether the {\em shape} or the {\em pattern} of a self-assembled structure is studied.
In this paper we focus on the self-assembly of patterns, where a property, modeled as color, is assigned to each tile; see for example \cite{RothemundPW2004} where fluorescently labeled DNA tiles self-assemble into Sierpinski triangles.
Formally, a pattern is a rectilinear grid where each vertex has a color: a $k$-colored $m\times n$-pattern $P$ can be seen as a function $P\colon [m]\times [n] \to [k]$, where $[i] = \set{1,2,\ldots,i}$.
The optimization problem of {\em patterned self-assembly tile set synthesis} (\pats), introduced by Ma and Lombardi \cite{MaL2008}, is to determine the minimal number of tile types needed to uniquely self-assemble a given pattern starting from an $L$-shaped seed.
In this paper, we consider the decision variant of \pats, defined as follows:

\myproblem{A $k$-colored pattern $P$ and an integer $m$}
	{``Yes'' if $P$ can uniquely be self-assembled by using $m$ tile types}
	{\kpats}

Czeizler and Popa proved that \pats, where the number of colors on an input pattern is not bounded, is \NP-hard \cite{CzeizlerP2012}, but the practical interest lies in \kpats. Seki proved $60$-\pats is \NP-hard \cite{shin}.
By the nature of the biological implementations, the number of distinct colors in a pattern can be considered small.
In search for the minimal number $k$ for which \kpats remains \NP-hard, we investigate a modification of \pats: {\em multiple bound \pats} (\mbpats) uses individual bounds for the number of tile types of each color.

\myproblem{A pattern $P$ with colors from $[k]$ and $m_1,\ldots,m_k\in\N$}
	{``Yes'' if $P$ can uniquely be self-assembled by using $m_i$ tile types of color $i$, for $i\in[k]$}
	{\kmbpats}

The main contribution of this paper is a polynomial-time reduction from \pats to \tmbpats which proves the \NP-hardness of \tmbpats.
However, our reduction does not take every pattern as input, we only consider a restricted subset of patterns for which \pats is known to remain \NP-hard.
The patterns we use as input are exactly those patterns that are generated by a polynomial-time reduction from \sat to \pats.
Using one of the reductions which were presented in \cite{CzeizlerP2012,shin} as a foundation for our main result turned out to be unfeasible.
Therefore, we present a novel proof for the \NP-hardness of \pats which serves well as foundation for our main result.
Furthermore, our reduction from \sat to \pats is more concise compared to previous reductions in the sense that in order to self-assemble a pattern $P$  we only allow three more tile types than colors in $P$.
In Czeizler and Popa's approach the number of additional tile types is linear in the size of the input formula and Seki uses 84 tile types with 60 colors.

Let us note first that the decision variants of \pats and \mbpats can be solved in \NP by simple ``guess and check'' algorithms.
Before we prove \NP-hardness of \pats, Corollary~\ref{cor:pats} in Sect.~\ref{sec:pats}, and \tmbpats, Corollary~\ref{cor:mbpats} in Sect.~\ref{sec:bwg}, we introduce the formal concepts of patterned tile assembly systems, in Sect.~\ref{sec:RTAS}.

%%%%%%%%%%%%%%%%%%%%%%%%%%%%%%%%%%%%%%%%%%%%%%%%%%%%%%%%%%%%%%%%%%%%%%%%%%%%%%%%
%%%%%%%%%%%%%%%%%%%%%%%%%%%%%%%%%%%%%%%%%%%%%%%%%%%%%%%%%%%%%%%%%%%%%%%%%%%%%%%%
\section{Rectilinear Tile Assembly Systems}
\label{sec:RTAS}

In this section we formally introduce patterns and rectilinear tile assembly systems.
An excellent introduction to the fundamental model aTAM is given in \cite{RothemundW2000}.

Let $C$ be a finite {\em alphabet of colors}.
An $m\times n$-{\em pattern} $P$, for $m,n\in\N$, with colors from $C$ is a mapping $P\colon [m]\times[n] \to C$.
By $C(P) \sse C$ we denote the colors in the pattern $P$, \ie the codomain or range of the function $P$.
The pattern $P$ is called {\em $k$-colored} if $\abs{C(P)} \le k$.
The width and height of $P$ are denoted by $w(P) = m$ and $h(P) = n$, respectively.
We call $(x,y)\in[m]\times [n]$ a {\em position} in $P$.
The pattern is arranged such that position $(1,1)$ is on the bottom left, $(m,1)$ is on the bottom right, $(1,n)$ is on the top left, and $(m,n)$ is on the top right of the pattern $P$.
Fig.~\ref{fig:ex:pattern} (left side) shows an example pattern.

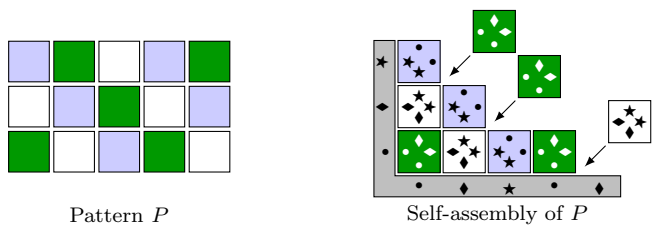
\begin{figure}[th]
	\vphantom x\hfill
	\begin{tikzpicture}[baseline=0,scale=1.2]
		\smalltile{0}{0}{eins}{}
		\smalltile{0}{.5}{zwei}{}
		\smalltile{0}{1}{drei}{}
		\smalltile{.5}{0}{zwei}{}
		\smalltile{.5}{.5}{drei}{}
		\smalltile{.5}{1}{eins}{}
		\smalltile{1}{0}{drei}{}
		\smalltile{1}{.5}{eins}{}
		\smalltile{1}{1}{zwei}{}
		\smalltile{1.5}{0}{eins}{}
		\smalltile{1.5}{.5}{zwei}{}
		\smalltile{1.5}{1}{drei}{}
		\smalltile{2}{0}{zwei}{}
		\smalltile{2}{.5}{drei}{}
		\smalltile{2}{1}{eins}{}
		\node [font=\footnotesize] at (1,-.7) {Pattern $P$};
	\end{tikzpicture}
	\hfill
	\begin{tikzpicture}[baseline=0]
	\begin{scope}[scale=2/5,every node/.style={scale=.7},
			font=\footnotesize]
		\draw [fill=lightgray] (-.8,-.8) -- (-.8,3.7) -- (-1.5, 3.7)
			-- (-1.5,-1.5) -- (6.7,-1.5) -- (6.7,-.8) -- cycle;
		\node [anchor=north] at (0,-.8) {\bul};
		\node [anchor=north] at (1.5,-.8) {\dia};
		\node [anchor=north] at (3,-.8) {\str};
		\node [anchor=north] at (4.5,-.8) {\bul};
		\node [anchor=north] at (6,-.8) {\dia};
		\node [anchor=north,rotate=-90] at (-.8,0) {\bul};
		\node [anchor=north,rotate=-90] at (-.8,1.5) {\dia};
		\node [anchor=north,rotate=-90] at (-.8,3) {\str};
		\fulltile{0}{0}{eins}{white}{}{\dia}{\dia}{\bul}{\bul}
		\fulltile{0}{1.5}{zwei}{black}{}{\str}{\str}{\dia}{\dia}
		\fulltile{0}{3}{drei}{black}{}{\bul}{\bul}{\str}{\str}
		\fulltile{1.5}{0}{zwei}{black}{}{\str}{\str}{\dia}{\dia}
		\fulltile{1.5}{1.5}{drei}{black}{}{\bul}{\bul}{\str}{\str}
		\fulltile{3}{0}{drei}{black}{}{\bul}{\bul}{\str}{\str}
		\fulltile{4.5}{0}{eins}{white}{}{\dia}{\dia}{\bul}{\bul}

		\fulltile{2.5}{4}{eins}{white}{}{\dia}{\dia}{\bul}{\bul}
		\fulltile{4}{2.5}{eins}{white}{}{\dia}{\dia}{\bul}{\bul}
		\fulltile{7}{1}{zwei}{black}{}{\str}{\str}{\dia}{\dia}
		\draw [-latex] (1.7,3.2) -- (1,2.5);
		\draw [-latex] (3.2,1.7) -- (2.5,1);
		\draw [-latex] (6.2,.2) -- (5.5,-.5);
	\end{scope}
		\node [font=\footnotesize] at (1.04,-.84) {Self-assembly of $P$};
	\end{tikzpicture}
	\hfill\vphantom x
	\caption{Pattern $P$ and how it can be self-assembled by three tile types.}
	\label{fig:ex:pattern}
\end{figure}

Let $\Sigma$ be a finite {\em alphabet of glues}.
A {\em colored Wang tile}, or simply {\em tile}, $t\in C\times \Sigma^4$ is a unit square with a color from $C$ and four glues from $\Sigma$, one on each of its edges.
The color of $t$ is denoted by $\chi(t)\in C$ and we denote the glues on the north, east, west, and south edges of $t$ by $t(N)$, $t(E)$, $t(W)$, and $t(S)$, respectively.
We also call the south and west glues the {\em inputs} of $t$ while the north and east glues are called {\em outputs} of $t$; this notation will become clear in the next paragraph.
Wang tiles are not allowed to rotate.

A {\em rectilinear tile assembly system} (RTAS) $(T,\sigma)$ over $C$ and $\Sigma$ consists of a set of colored Wang tiles $T\sse C\times \Sigma^4$ and an $L$-shaped seed $\sigma$.
The seed $\sigma$ covers positions $(0,0)$ to $(m,0)$ and $(0,1)$ to $(0,n)$ of a two-dimensional Cartesian grid and it has north glues from $\Sigma$ on the positions $(1,0)$ to $(m,0)$ and east glues from $\Sigma$ on positions $(0,1)$ to $(0,n)$.
We will frequently call $T$ an RTAS without explicitly mentioning the seed, but we keep in mind that a unique seed is assigned to each RTAS.
%Coming from the biological motivation, 
The RTAS $T$ describes the self-assembly of a structure: starting with the seed, a tile $t$ from $T$ can attach to the structure at position $(x,y)\in[m]\times[n]$, if its west neighbor at position $(x-1,y)$ and south neighbor at position $(x,y-1)$ are present and the inputs of $t$ match the adjacent outputs of its south and west neighbors; the self-assembly stops when no more tiles in $T$ can be attached by this rule.
Arbitrarily many copies of a each tile type in $T$ are considered to be present while the structure is self-assembled, thus, one tile type can appear in multiple positions.
Fig.~\ref{fig:ex:pattern} shows the process of self-assembling a pattern by an RTAS with three tiles.
We are only interested in structures that fully tile the rectangle that is spanned by the seed.
A {\em tile assignment} in $T$ is a function $f\colon [m]\times[n] \to T$ such that $f(x,y)(W) = f(x-1,y)(E)$ and $f(x,y)(S) = f(x,y-1)(N)$ for $(x,y)\in[m]\times[n]$.
The RTAS self-assembles a pattern $P$ if there is a tile assignment $f$ in $T$ such that the color of each tile in the assignment $f$ is the color of the corresponding position in $P$, \ie $\chi \circ f = P$.
A terminological convention is to call the elements in $T$ {\em tile types} while the elements in a tile assignment are called {\em tiles}; each tile in a tile assignment is the copy a tile type from the corresponding RTAS $T$.

A {\em directed RTAS} (\tas) $T$ is an RTAS where any two distinct tile types $t_1,t_2\in T$ have different inputs, \ie $t_1(S) \neq t_2(S)$ or $t_1(W) \neq t_2(W)$.
A \tas has at most one tile assignment and can self-assemble at most one pattern.
If $T$ self-assembles an $m\times n$-pattern $P$, it defines the function $P_T\colon [m]\times[n] \to T$ such that $P_T(x,y)$ is the tile in position $(x,y)$ of the tile assignment given by $T$.
In this paper, we investigate minimal RTASs which uniquely self-assemble one given pattern $P$.
An observation from \cite{GoosO2011} allows us to focus on \tas only when searching for minimal RTAS that uniquely self-assemble a given pattern:

\begin{proposition}
If a pattern $P$ can uniquely be self-assembled by an RTAS $T$ with $m$ tile types, then there is also a \tas $T'$ with $m$ tile types which (uniquely) self-assembles $P$.
\end{proposition}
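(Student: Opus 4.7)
The plan is to realize $T'$ as a directed subsystem of $T$: for each pair of input glues $(s,w) \in \Sigma \times \Sigma$ that arises at some position while self-assembling $P$ via $T$, we select one representative tile type from $T$. The resulting set will be directed by construction and satisfy $\abs{T'} \le \abs{T} = m$, so the real task is to verify that it still self-assembles $P$. After discarding from $T$ any tile type that never appears in a valid tile assignment (which preserves unique self-assembly), the crucial observation is that any two $t_1,t_2 \in T$ sharing the same south and west glues must share the same color. Indeed, since both are used somewhere, the common input pair $(s,w)$ arises at some position $(x,y)$ of a valid assembly; the local attachment rule then allows either tile at $(x,y)$, and any such partial assignment extends to a full assembly of $T$ which, by the uniqueness hypothesis, produces $P$, forcing $\chi(t_1) = P(x,y) = \chi(t_2)$.

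Given this, I would build $T'$ incrementally by scanning the positions of $[w(P)] \times [h(P)]$ in row-major order while maintaining a partial tile assignment. At position $(x,y)$ the inputs $(s,w)$ are determined by the seed and by the tiles already placed at $(x-1,y)$ and $(x,y-1)$. If $T'$ already contains a tile type with inputs $(s,w)$ we reuse it; otherwise we add to $T'$ an arbitrary $t \in T$ with these inputs, which exists because the partial assignment so far extends to a full $T$-assembly of $P$. The color observation above guarantees that the placed tile has color $P(x,y)$, so the resulting tile assignment self-assembles $P$. By construction $T' \sse T$ is a \tas with $\abs{T'} \le m$; if equality is desired one may pad with arbitrary unused tile types from $T$.

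The main obstacle is the pair of extension claims underlying the construction: that the greedy partial assignment can always be extended within $T$, and that a representative tile already in $T'$ remains a legitimate choice whenever its input pair $(s,w)$ recurs at a later position. Both boil down to the strong uniqueness hypothesis---every valid tile assignment of $T$ produces $P$---which prevents nondeterministic choices in $T$ from producing divergent colors or blocked downstream inputs. Packaging this intuition into a clean induction on the processed positions is where the real care is required.
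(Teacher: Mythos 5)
The paper never proves this proposition at all: it is imported verbatim as an observation from \cite{GoosO2011}, so your attempt can only be measured against the standard argument from that source. Your construction essentially reproduces it --- one representative tile type per realized input pair, with correctness driven by the fact that unique self-assembly forbids both divergent colors and dead ends --- and it is sound, modulo two inaccuracies. First, the hypothesis you name, ``every valid tile assignment of $T$ produces $P$,'' is weaker than what your induction actually consumes. A tile assignment in this paper is by definition total on the grid; knowing that all \emph{total} assignments color to $P$ does not exclude reachable partial assemblies that block and extend to no tile assignment whatsoever, and both of your ``extension claims'' fail without that exclusion. What you need is the dynamic formulation under which uniqueness is defined in \cite{GoosO2011}: every terminal assembly of the attachment process is the full rectangle colored as $P$. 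Under that reading your greedy step is sound (any reachable partial assembly extends to a full one, which forces the representative's color to agree with $P$ wherever its input pair recurs), and your preliminary discarding step is also justified, since a tile attachable somewhere then necessarily occurs in some full assignment. You gesture at this (``blocked downstream inputs''), but the hypothesis as you state it does not deliver it.

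Second, padding $T'$ with ``arbitrary unused tile types from $T$'' to reach exactly $m$ can destroy directedness: an unused tile may share its input pair with a chosen representative --- indeed, a discarded duplicate does exactly that. Pad instead with tiles carrying fresh input glues (harmless, since they can never attach), or simply observe that $\abs{T'}\le m$ is all the minimization context requires. As an aside, the cited source's route is slightly lighter than your position-by-position scan: repeatedly delete one of two tile types with equal inputs; every attachment of the deleted type can be simulated by the kept one, so no new blocked or terminal assemblies arise, unique self-assembly of $P$ is preserved at each step, and after finitely many deletions the system is a \tas. That removal argument avoids maintaining a partial-assignment induction altogether, though your representative-selection version proves the same statement.
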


\begin{remark}\label{rem:glues}
In the following proofs, a central concept is to show that the design of a pattern $P$ enforces that any \tas $T$ which self-assembles $P$ (and maybe respects some tile type bounds) contains {\em tile types of a certain form}.
As we are flexible with choosing the set of glues, we can always obtain a \tas with different tile types by applying some bijection on the set of glues.
A subtler point is that glues used on horizontal edges and glues used on vertical edges can be seen as separate sets of glues as these edges can never glue to each other.
For the ease of notation, we will use the same glue labels for horizontal and vertical glues, but keep in mind that we may apply one bijection to all horizontal glues and another bijection to all vertical glues in order to obtain an {\em isomorphic} \tas.
\end{remark}

%%%%%%%%%%%%%%%%%%%%%%%%%%%%%%%%%%%%%%%%%%%%%%%%%%%%%%%%%%%%%%%%%%%%%%%%%%%%%%%%
%%%%%%%%%%%%%%%%%%%%%%%%%%%%%%%%%%%%%%%%%%%%%%%%%%%%%%%%%%%%%%%%%%%%%%%%%%%%%%%%
\section{\texorpdfstring{\NP-hardness of \pats}{NP-hardness of PATS}}
\label{sec:pats}

In this section, we prove the \NP-hardness of \pats.
The proof we present uses many techniques that have already been employed in \cite{CzeizlerP2012,shin}.
Let us also point out that we do not intend to minimize the number of colors used in our patterns or the size of our patterns.
Our motivation is to give a proof that is easy to understand and serves well as a foundation for the results in Sect.~\ref{sec:bwg}.

A boolean formula $F$ over variables $V$ in {\em conjunctive normal form with three literals per clause}, 3-CNF for short, is a boolean formula such that
\[
	F = ( c_{1,1} \lor c_{1,2} \lor c_{1,3}) \land 
		( c_{2,1} \lor c_{2,2} \lor c_{2,3}) \land \cdots \land
		( c_{\ell,1} \lor c_{\ell,2} \lor c_{\ell,3})
\]
where $c_{i,j} \in \sett{v,\neg v}{v\in V}$ for $i\in[\ell]$ and $j=1,2,3$.
It is well known that the problem \sat, to decide whether or not a given formula $F$ in 3-CNF is satisfiable, is \NP-complete; see \eg \cite{Papadimitriou2003}.
The \NP-hardness of \pats follows by the polynomial-time reduction from \sat to \pats, stated in Theorem~\ref{thm:reduction:pats}.

\begin{theorem}\label{thm:reduction:pats}
For every formula $F$ in 3-CNF there exists a pattern $P_F$ such that $F$ is satisfiable if and only if $P_F$ can be self-assembled by a \tas with at most $\abs{C(P_F)} +3$ tile types.
Moreover, $P_F$ can be computed from $F$ in polynomial time.
\end{theorem}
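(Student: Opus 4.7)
The plan is to prove the theorem by a polynomial-time many-one reduction from \sat to \pats. Given a 3-CNF formula $F$ over variables $v_1,\ldots,v_n$ and clauses $C_1,\ldots,C_\ell$, I would construct a single rectangular pattern $P_F$, of size polynomial in $|F|$, out of three kinds of composable gadgets: \emph{variable gadgets} against the left column that read each $v_i$ from a designated east glue of the seed; \emph{propagation gadgets} that broadcast these values across the rectangle so that every clause position can see all three of its literals; and \emph{clause gadgets}, one per $C_j$, whose coloring at a distinguished ``accept'' cell matches $P_F$ exactly when $C_j$ is satisfied.

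To achieve the tight bound of $|C(P_F)|+3$ tile types, I would arrange the gadgets so that every color used in the ``static'' parts (backgrounds, separators, wires that locally carry a single value) has its tile type uniquely forced by its local context, and concentrate all nontrivial two-valued computation into a single \emph{evaluation} color whose four tile types implement a binary cell such as XOR, one tile per pair of input glues $(0,0),(0,1),(1,0),(1,1)$. These four tile types contribute exactly the $+3$ excess tile types over colors, and together with the forced static tile types yield a canonical \tas $T_F$. For the forward direction, given a satisfying assignment $\tau$ of $F$, I would choose the seed $\sigma_\tau$ whose left-column east glues encode $\tau(v_1),\ldots,\tau(v_n)$ at the variable rows, and verify by tracing through the gadgets that the unique tile assignment of $T_F$ from $\sigma_\tau$ colors $P_F$ correctly.

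For the converse, I would assume some \tas $T$ with at most $|C(P_F)|+3$ tile types self-assembles $P_F$ and use Remark~\ref{rem:glues} to argue that the combinatorial structure of $P_F$ forces $T$ to be isomorphic to $T_F$; the induced seed then encodes a truth assignment, and correct coloring at every clause gadget's accept cell forces this assignment to satisfy $F$. The main obstacle is precisely this rigidity step: I must prove that each non-evaluation color appears somewhere in $P_F$ in a context pinning its tile type down up to glue isomorphism, and that the evaluation color already exhibits all four distinct input-glue pairs, so that the three excess tile types have no place to go other than into the XOR-like cells of $T_F$. Without this rigidity the three spare tiles could be redistributed and potentially assemble $P_F$ from an unsatisfying seed, so establishing it by a row-by-row analysis of forced tile placements growing from the seed is the technical heart of the reduction.
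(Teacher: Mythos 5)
Your high-level architecture matches the paper's in one key respect: the $+3$ excess is realized by a single ``evaluation'' color with exactly four tile types (one per input pair), all other colors being forced to a single tile type, and the hard work is a rigidity argument pinning these down. But two of your concrete choices break. First, the gate cannot be XOR-like: a network of two-input XOR cells computes an affine function over $\mathrm{GF}(2)$ of the literal values, and the indicator of a clause $(c_1\lor c_2\lor c_3)$ is not affine, so no ``accept'' cell in such a gadget can have the correct color exactly when the clause is satisfied; moreover, if the accept conditions were affine, the existence question would reduce to solving a linear system, which is polynomial-time and cannot capture \sat. The paper's four tile types implement an \tl{or}-gate (east output $w\lor s$), and each clause gadget $s(C)$ is simply a chain of three \tl{or} cells with west input $0$ terminating in a cell whose color forces east output $1$.

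Second, encoding the assignment in the seed and then \emph{broadcasting} it through propagation gadgets is incompatible with your own budget: any cell whose glue values depend on the assignment needs at least two tile types of its color, but all three excess types are already consumed by the gate color, so your ``static wires'' cannot carry assignment-dependent values. The paper avoids propagation entirely by baking the assignment into the tile set rather than the seed: each color $v$ and $\neg v$ has exactly one tile type whose north glue is left free in $\{0,1\}$ (subpatterns $r_1(v),r_2(v),r_3(v)$ force these to be binary and complementary), and the literal-colored cells are placed directly inside each clause row, so the values are read locally wherever those colors occur --- the whole pattern has height $6$. This also shows your rigidity claim is too strong: the paper's Lemmas do \emph{not} force $T$ isomorphic to one canonical $T_F$; the subpatterns $p$ and $q_1$--$q_5$ force the \tl{or}-gate semantics and one tile type per remaining color, while deliberately leaving the $v/\neg v$ north outputs undetermined --- those residual degrees of freedom \emph{are} the truth assignment, which is exactly what makes both directions of the equivalence go through. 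As written, your proposal defers the forcing lemmas (the analogues of Lemmas~\ref{lem:p_or}, \ref{lem:q_or}, and~\ref{lem:sub:r}) and, with the XOR gate and seed-broadcast design, they could not be proved in the stated form.
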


Theorem~\ref{thm:reduction:pats} follows by Lemmas~\ref{lem:3sat:poly} and~\ref{lem:pats}, which are presented in the following.

\begin{corollary}\label{cor:pats}
\pats is \NP-hard.
\end{corollary}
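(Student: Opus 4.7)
The plan is to package Theorem~\ref{thm:reduction:pats} as a Karp reduction from \sat, which is \NP-complete, to the decision problem \pats. Given a 3-CNF formula $F$, the theorem produces in polynomial time a pattern $P_F$; I would map $F$ to the \pats instance $(P_F, m_F)$ with $m_F = \abs{C(P_F)} + 3$. The only thing left to check is that $(P_F, m_F)$ is a ``Yes''-instance of \pats exactly when $F$ is satisfiable.

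The one small gap to bridge is that Theorem~\ref{thm:reduction:pats} is phrased in terms of \tas (directed RTAS), whereas \pats asks about unique self-assembly by an arbitrary RTAS. I would handle the two directions separately. For the forward direction: if $F$ is satisfiable then the theorem yields a \tas $T$ with at most $m_F$ tile types assembling $P_F$; since a \tas has at most one tile assignment, $T$ uniquely self-assembles $P_F$, and because every \tas is in particular an RTAS, $(P_F, m_F)$ is a ``Yes''-instance. For the backward direction: if $(P_F, m_F)$ is a ``Yes''-instance, some RTAS with at most $m_F$ tile types uniquely self-assembles $P_F$, and then the cited Proposition (the observation from \cite{GoosO2011}) gives a \tas with the same tile-type count self-assembling $P_F$, whence Theorem~\ref{thm:reduction:pats} makes $F$ satisfiable.

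Since $P_F$ is computable from $F$ in polynomial time by the theorem, and $m_F$ is trivially computable from $P_F$ in polynomial time, the whole map $F \mapsto (P_F, m_F)$ is a polynomial-time many-one reduction from \sat to \pats; this yields \NP-hardness. (The matching \NP upper bound, and hence \NP-completeness, was already noted in the introduction via the ``guess-and-check'' algorithm, though the corollary only claims hardness.) The only real obstacle is therefore hidden inside Theorem~\ref{thm:reduction:pats}, i.e.\ inside Lemmas~\ref{lem:3sat:poly} and~\ref{lem:pats}; at the level of the corollary the argument is a short, essentially mechanical unpacking.
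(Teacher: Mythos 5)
Your proposal is correct and matches the paper's route exactly: the paper derives Corollary~\ref{cor:pats} immediately from the reduction in Theorem~\ref{thm:reduction:pats}, relying (implicitly) on the same Proposition from \cite{GoosO2011} to bridge between \tas{}s and unique self-assembly by arbitrary RTASs. You merely spell out this bridging step and the trivial computability of $m_F$ more explicitly than the paper does, which is fine.
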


The pattern $P_F$ consists of several rectangular {\em subpatterns} which we will describe in the following.
None of the subpatterns will be adjacent to another subpattern.
The remainder of the pattern $P_F$ is filled with {\em unique colors}; a color $c$ is unique in a pattern $P$ if it appears only in one position in $P$, \ie $\abs{P^{-1}(c)} = 1$.
As a technicality that will become useful only in the proof of Theorem~\ref{thm:reduction}, we require that each position adjacent to the $L$-shaped seed or to the north or east border of pattern $P_F$ has a unique color.
Clearly, for each unique color in $P_F$ we require exactly one tile in any \tas which self-assembles $P_F$.
%We do not bother to count the number of unique colors in $P_F$ as it is clear that this number cannot exceed the size $w(P_F)\cdot w(P_F)$ of $P_F$ which will turn out to be in $\Oh(\ell+\abs V)$.
Since each subpattern is surrounded by a frame of unique colors, the subpatterns can be treated as if each of them would be adjacent to an $L$-shaped seed and we do not have to care about the glues on the north border or east border of a subpattern.
The placement of the tiles with unique colors is simple, as for each unique color we find a path of unique colors to the seed, using west and south steps, and we may assume that this path uses unique glues (glues which are not used anywhere else in the tile assignment).

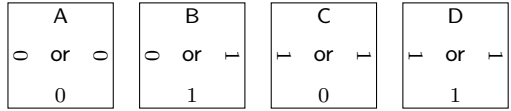
\begin{figure}[th]
	\centering
	\begin{tikzpicture}
			[scale=\fullscale,every node/.style={scale=\fullscalenode}]
		\fulltile{0}{0}{white}{black}{\tl{or}}{\tl A}{0}{0}{0}
		\fulltile{1.75}{0}{white}{black}{\tl{or}}{\tl B}{1}{1}{0}
		\fulltile{3.5}{0}{white}{black}{\tl{or}}{\tl C}{1}{0}{1}
		\fulltile{5.25}{0}{white}{black}{\tl{or}}{\tl D}{1}{1}{1}
	\end{tikzpicture}
	\caption{The four tile types used to implement the \tl{or}-gate.}
	\label{fig:or-tiles}
\end{figure}

As stated earlier, the number of tile types $m$ that is required to self-assemble $P_F$, if $F$ is satisfiable, is $m = \abs{C(P_F)} +3$.
Actually, every color in $C(P_F)$ will require one tile type only except for one color which is meant to implement an \tl{or}-gate; see Fig.~\ref{fig:or-tiles}.
Each of the tile types with color \Cor is supposed to have west input $w\in\set{0,1}$, south input $s\in \set{0,1}$, east output $w \lor s$, and an independent north output.

Our first subpattern $p$, shown in Fig.~\ref{fig:sub:p}, ensures that every \tas which self-assembles the subpattern $p$ contains at least three tile types with color \Cor.
For the upcoming proof of Theorem~\ref{thm:reduction} we need a more precise observation which draws a connection between the number of distinct output glues and the number of distinct tile types with color \Cor.

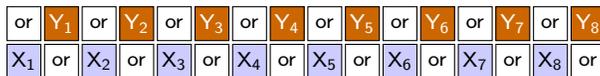
\begin{figure}[th]
	\centering
	\begin{tikzpicture}[scale=\smallscale]
		\begin{scope}
			\foreach \i in {1,...,7,8}
			{
				\smalltile{\i cm-1}{0}{colAux}{\tl{X_{\i}}}
				\smalltile{\i cm-.5}{0}{white}{\tl{or}}
				\smalltile{\i cm-1}{.5}{white}{\tl{or}}
				\smalltile{\i cm-.5}{.5}{colMod}{\color{white}\tl{Y_{\i}}}
			}
		\end{scope}
	\end{tikzpicture}
	\caption{The subpattern $p$.}
	\label{fig:sub:p}
\end{figure}

\begin{lemma}\label{lem:p_or}
A \tas $T$ which self-assembles a pattern including the subpattern $p$ contains either
\begin{compactenum}[i.)]
	\item three distinct tile types $o_1,o_2,o_3\in T$ with color \Cor all having distinct north and east glues,
	\item four distinct tile types $o_1,o_2,o_3,o_4\in T$ with color \Cor all having distinct north glues and together having at least two distinct east glues, 
	\item four distinct tile types $o_1,o_2,o_3,o_4\in T$ with color \Cor all having distinct east glues and together having at least two north glues, or
	\item eight distinct tile types $o_1,\ldots,o_8\in T$ with color \Cor all having distinct east or north glues.
\end{compactenum}
\end{lemma}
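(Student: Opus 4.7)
The plan is to exploit the distinctness of the eight tile types placed at the $Y_i$-positions of $p$, forced by their eight distinct colors, to bound from below the variety of north and east output glues appearing on the \tl{or}-tile types of $T$, and then case-split on how this variety is distributed.

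First I would note that, by the adjacency inside $p$, the $(W,S)$-input of the tile $y_i$ at position $(2i,2)$ equals $\bigl(t_i(E),b_i(N)\bigr)$, where $t_i$ and $b_i$ denote the \tl{or}-tiles at the two \Cor-positions of the $i$th block. Since $Y_1,\ldots,Y_8$ are eight distinct colors, the tiles $y_1,\ldots,y_8$ are eight distinct tile types of $T$, so these eight $(W,S)$-input pairs must be pairwise distinct. Abbreviating $\Gamma=|\{t_i(E):i\in[8]\}|$ and $\Delta=|\{b_i(N):i\in[8]\}|$, the eight distinct pairs live inside a $\Gamma\times\Delta$ grid, forcing $\Gamma\cdot\Delta\ge 8$. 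Moreover, the tile types realizing the $\Gamma$ distinct east outputs on the top row are themselves pairwise distinct \tl{or}-tile types of $T$ (their east glues differ), and symmetrically the bottom row supplies $\Delta$ pairwise distinct \tl{or}-tile types with pairwise distinct north glues.

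Next I would case-split on $(\Gamma,\Delta)$. If $\min(\Gamma,\Delta)\le 1$, the other value is at least $8$: that side alone delivers eight pairwise distinct \tl{or}-tile types with eight pairwise distinct east or north glues, giving (iv). If $\min(\Gamma,\Delta)=2$, the other is at least $4$: the side with $\ge 4$ supplies four pairwise distinct \tl{or}-tile types with four distinct north (resp.\ east) glues, and the two distinct east (resp.\ north) glues witnessed on the opposite row force at least two distinct east (resp.\ north) glues among the chosen four tiles — yielding (ii) (resp.\ (iii)). Finally, if $\min(\Gamma,\Delta)\ge 3$, I would combine top- and bottom-row witnesses to exhibit three \tl{or}-tile types that simultaneously carry three distinct north glues and three distinct east glues, landing in (i).

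The hard part will be the last case: in the worst configuration the three witnesses from the top row share one north glue while the three witnesses from the bottom row share one east glue, so neither triple alone certifies (i), and the triples must be mixed. The remedy I plan to pursue is a swap argument between the rows, checking case by case that after exchanging one or two tiles between the top and the bottom witnesses one always obtains three \tl{or}-tile types with pairwise distinct north and pairwise distinct east glues. Remark~\ref{rem:glues} is used throughout to ensure that horizontal and vertical glue alphabets cannot accidentally coincide during the swap. If the direct swap still fails in some edge configuration, I would fall back on the distinctness of $x_1,\ldots,x_8$, combined with the relation $b_i(E)=x_{i+1}(W)$, as a second source of variety in the east outputs of the bottom-row \tl{or}-tiles.
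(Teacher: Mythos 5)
Your opening is exactly the paper's core argument: the eight $Y_i$-tiles are pairwise distinct types because their colors differ, their $(W,S)$-inputs are pairs (east glue of an \tl{or}-tile, north glue of an \tl{or}-tile), and hence the product of the numbers of distinct east and north glues on \tl{or}-types is at least $8$. Your $\min(\Gamma,\Delta)\le 1$ and $\min(\Gamma,\Delta)=2$ cases are sound (in the latter, your swap to import a witness of a second east glue into the four north-witnesses is in fact more careful than the paper, which there records only $\abs O\ge 4$). The genuine gap is the case $\min(\Gamma,\Delta)\ge 3$: the transversal you aim for --- three \tl{or}-types carrying \emph{simultaneously} pairwise distinct north glues and pairwise distinct east glues --- simply does not follow from $\Gamma,\Delta\ge 3$, and no swap argument can produce it. Concretely, a \tas whose \tl{or}-types have (north, east) outputs $(1,a)$, $(1,b)$, $(2,c)$, $(3,c)$ can self-assemble $p$: place east-glue witnesses $a,a,a,b,b,b,c,c$ on the top-row \tl{or}-positions and north-glue witnesses $1,2,3,1,2,3,1,2$ on the bottom-row \tl{or}-positions, so that the $Y$-tiles receive the eight distinct input pairs $\set{a,b,c}\times\set{1,2,3}\sm\set{(c,3)}$; all remaining glues sit on tiles of pairwise distinct (indeed unique) colors $X_i$, $Y_i$ or on the seed, so they are free and the assignment extends to a directed one. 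Here any three types with pairwise distinct norths must include both $(2,c)$ and $(3,c)$, which share the east glue $c$, so no exchange between row witnesses helps; and your fallback via $b_i(E)=x_{i+1}(W)$ yields nothing, because the $X_i$ are already distinguished by their free south glues and that relation constrains the \tl{or}-easts not at all.

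The paper never runs into this because its proof claims strictly less than the lemma's wording suggests: in each regime it concludes only the cardinality bound on $O$ together with the glue counts taken over \emph{all} of $O$ --- for case {\it i.)} merely ``$i\ge 3$ and $j\ge 3$, therefore $\abs O\ge 3$'' --- and never exhibits tiles with jointly distinct output coordinates. That weaker form is also all that is consumed downstream: Lemma~\ref{lem:q_or} uses only $i\cdot j\ge 8$ (with $j=2$ forcing four distinct north outputs), and the white-tile counting in Lemma~\ref{lem:if} depends only on the numbers of distinct north and east outputs among the \tl{or}-supertiles (your counterexample configuration costs $4+2\cdot 3+2\cdot 3=16$ white types with four control tiles, consistent with that count). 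So the fix is not a cleverer swap: you should prove, and use, the lemma in this weaker reading --- your $\Gamma\cdot\Delta\ge 8$ pigeonhole plus the trivial observation that $k$ distinct glues require $k$ distinct tile types already does it, which is precisely the paper's one-paragraph proof.
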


\begin{proof}
In the subpattern $p$, each of the eight tiles $t_1,\ldots,t_8$ with colors \colT{colMod}{\color{white}\tl{Y_1}} to \colT{colMod}{\color{white}\tl{Y_8}} has south and west neighbors colored by \Cor.
Since these tiles have mutually distinct colors they are all of different types  as their inputs (south and west edges) have to differ.
Therefore, the combination of outputs (north and east edges) of the tile types with color \Cor cannot be less than eight.

More formally, let $O\sse T$ be the set of tiles with color \Cor.
Let $i = \abs{\sett{t(N)}{t\in O}}$ be the number of distinct north glues on tiles from $O$ and let $j = \abs{\sett{t(E)}{t\in O}}$ be the number of distinct east glues.
If $i\cdot j$ were less than $8$, at least the inputs of two of the eight tiles $t_1,\ldots,t_8$ would coincide as their placement solely depends on the outputs of tiles from $O$.
There are four possibilities:
\begin{compactenum}[\it i.)]
	\item $i\ge 3$ and $j\ge 3$, therefore, $\abs O \ge 3$,
	\item $i\ge 4$ and $j = 2$, therefore, $\abs O \ge 4$,
	\item $i = 2$ and $j\ge 4$, therefore, $\abs O \ge 4$, or
	\item $i\ge 8$ or $j\ge 8$, therefore, $\abs O \ge 8$.
\end{compactenum}
\end{proof}

We aim to have statement {\it ii.)}\ of Lemma~\ref{lem:p_or} satisfied, but so far all four statements are possible.
Note that this lemma is independent of the number of tile types in the \tas $T$, which is a crucial difference to the observations that will follow.
The subpatterns $q_1$ to $q_5$ in Fig.~\ref{fig:sub:q} will enforce the functionality of the \tl{or}-gate tile types.

\begin{figure}[th]
	\centering
	\begin{tikzpicture}[scale=\smallscale]
		\foreach \i/\w/\s/\val/\sym/\L in {1/0/0/colNeg/-/A,2/0/1/colPos/+/B,3/1/0/colPos/+/C,4/1/1/colPos/+/D}
		{
		\begin{scope}[xshift=\i*3cm - 2.5cm]
			\smalltile{0}{0}{colAux}{\tl{Z_1}}
			\smalltile{.5}{0}{colAux}{\tl{Z_2}}
			\smalltile{0}{.5}{colAux}{\tl{Z_3}}
			\smalltile{.5}{.5}{colAux}{\tl a}

			\smalltile{0}{1}{colModd}{\color{white}$\overset{\w}\rightarrow$}
			\smalltile{.5}{1}{white}{$\overset{\w}\rightarrow$}
			\smalltile{1}{0}{colModd}{\color{white}$\uparrow\!_{\s}$}
			\smalltile{1}{.5}{white}{$\uparrow\!_{\s}$}
			\smalltile{1}{1}{white}{\tl{or}}
			
			\smalltile{1.5}{0}{colModd}{\color{white}\tl b}
			\smalltile{1.5}{.5}{colAux}{\tl b}
			\smalltile{1.5}{1}{\val}{\color{white}$\pmb{\sym}$}
			\smalltile{1.5}{1.5}{colAux}{\tl{d}}

			\smalltile{0}{1.5}{colModd}{\color{white}\tl c}
			\smalltile{.5}{1.5}{colAux}{\tl c}
			\smalltile{1}{1.5}{colMod}{\color{white}\tl{\L}}
			
			\node at (.75,-.5) {$q_{\i}$};
		\end{scope}
		}
		\begin{scope}[xshift = 4.5cm,yshift=-2.25cm]
			\smalltile{0}{0}{colAux}{\tl{a}}
			\smalltile{.5}{0}{white}{$\uparrow\!_{0}$}
			\smalltile{1}{0}{white}{$\uparrow\!_{1}$}
			\smalltile{1.5}{0}{white}{$\uparrow\!_{0}$}
			\smalltile{2}{0}{white}{$\uparrow\!_{1}$}
			\smalltile{2.5}{0}{colAux}{\tl{b}}
			
			\smalltile{0}{.5}{white}{$\overset{0}\rightarrow$}
			\smalltile{.5}{.5}{white}{\tl{or}}
			\smalltile{1}{.5}{white}{\tl{or}}
			\smalltile{1.5}{.5}{white}{\tl{or}}
			\smalltile{2}{.5}{white}{\tl{or}}
			\smalltile{2.5}{.5}{colPos}{\color{white}$\pmb+$}

			\smalltile{0}{1}{colAux}{\tl c}
			\smalltile{.5}{1}{colMod}{\color{white}\tl A}
			\smalltile{1}{1}{colMod}{\color{white}\tl B}
			\smalltile{1.5}{1}{colMod}{\color{white}\tl C}
			\smalltile{2}{1}{colMod}{\color{white}\tl D}
			\smalltile{2.5}{1}{colAux}{\tl{d}}
			\node at (1.25,-.5) {$q_5$};
		\end{scope}
	\end{tikzpicture}
	\caption{The subpatterns $q_1$ to $q_5$.}
	\label{fig:sub:q}
\end{figure}
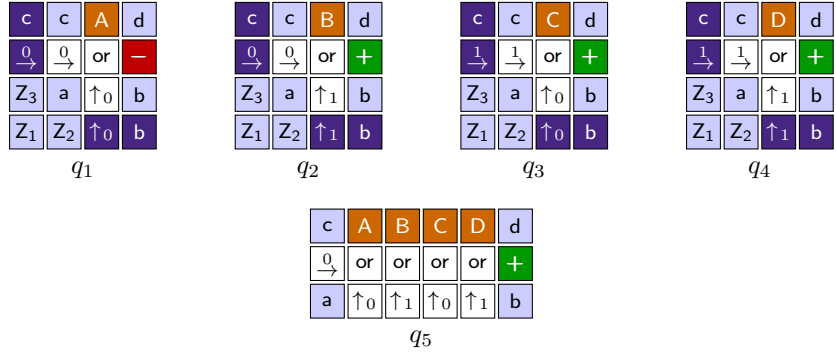

\begin{lemma}\label{lem:q_or}
Let $P$ be a pattern that contains the subpatterns $p$ and $q_1$ to $q_5$, and let $m = \abs{C(P)}+3$.
A \tas $T$ with at most $m$ tile types which self-assembles pattern $P$ contains four tile types with color \Cor of the forms shown in Fig.~\ref{fig:or-tiles}.
For every other color in $C(P)$ there exists exactly one tile type in $T$.
Moreover, the tile type with color \Cra has east output $0$
%, the tile type with color \Cneg has west input $0$, 
and the tile type with color \Cpos has west input $1$.
\end{lemma}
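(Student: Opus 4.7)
The plan is to combine a tile-count budget argument with the case analysis of Lemma~\ref{lem:p_or} applied to subpattern $p$, then to use the subpatterns $q_1,\ldots,q_5$ to pin down the exact form of the four \Cor tile types. Since every color in $C(P)$ requires at least one tile type in $T$, the bound $m \le |C(P)|+3$ allows at most three ``extra'' tile types distributed across all colors. Applied to $p$, case $(iv)$ of Lemma~\ref{lem:p_or} alone would cost seven extras on color \Cor and is ruled out, leaving cases $(i)$, $(ii)$, $(iii)$.

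Next I would show that subpatterns $q_1,\ldots,q_4$ force at least four \Cor tile types with specified inputs and outputs. In each $q_i$ the single \Cor tile has its west input determined by the east output of the white $\overset{w}{\rightarrow}$ tile on its left (with $w=0$ for $i=1,2$ and $w=1$ for $i=3,4$) and its south input determined analogously by the white $\uparrow_s$ tile below (with $s=0$ for $i=1,3$ and $s=1$ for $i=2,4$); call these glues $g_w$ and $h_s$. The tiles above (of color \tl A, \tl B, \tl C, \tl D respectively) and to the east (\Cneg for $i=1$, \Cpos otherwise) then fix the corresponding north and east outputs. Assuming $g_0 \ne g_1$ and $h_0 \ne h_1$, the four \Cor tiles in $q_1,\ldots,q_4$ have four pairwise distinct $(W,S)$-pairs, so they form four distinct tile types in $T$ and case $(i)$ of Lemma~\ref{lem:p_or} is excluded. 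The inequalities themselves are enforced by subpattern $q_5$: it chains four \Cor tiles in a row whose top colors are \tl A, \tl B, \tl C, \tl D in order and whose leftmost west input is $g_0$; if either inequality collapsed, the chain would recycle a single input pair and yield at most two distinct north outputs, contradicting the four distinct colors above it.

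This lands us in case $(ii)$: the north glues of the four \Cor tile types are pairwise distinct (as they are inherited by the unique tile types of colors \tl A, \tl B, \tl C, \tl D as their south glues), and the east glues take exactly two values (the west inputs of \Cneg and \Cpos, which differ since the two colors admit distinct tile types), ruling out case $(iii)$. As the three extras are now entirely consumed by \Cor, every other color has exactly one tile type in $T$. Renaming the glues $g_0,g_1,h_0,h_1$ as $0,1,0,1$ (as Remark~\ref{rem:glues} allows), the four \Cor tile types match Fig.~\ref{fig:or-tiles} exactly. The final two claims then fall out: by $q_1$ the east output of the \Cra tile equals the west input of tile \tl A, which is $0$; by $q_4$ (or the rightmost position of $q_5$) the west input of \Cpos equals the east output of tile \tl D, which is $1$.

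The main obstacle is the second step, specifically proving $g_0 \ne g_1$ and $h_0 \ne h_1$: here subpattern $q_5$, rather than any single $q_i$, does the essential work, since it forces all four \Cor behaviors along a single chained row and thereby compels pairwise distinctness of the input glues.
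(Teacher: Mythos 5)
Your proposal has the right skeleton (budget plus Lemma~\ref{lem:p_or}, then $q_1$--$q_4$ to get four distinct \tl{or}-types, then glue renaming), but the step you yourself flag as essential --- deriving $g_0\neq g_1$ and $h_0\neq h_1$ from $q_5$ --- does not work. First, if only one inequality collapses (say $g_0=g_1$ but $h_0\neq h_1$), the $q_5$ chain has south inputs $h_0,h_1,h_0,h_1$ and nothing forces any east output to repeat, so no input pair recycles. Second, even in the fully collapsed case where all four chained \Cor tiles coincide and share one north output, the four tiles of colors \tl A, \tl B, \tl C, \tl D above are \emph{not} contradictory on the strength of $q_5$ alone: distinct colors force distinct input \emph{pairs}, and in $q_5$ their west inputs vary along the row (the east glue of the \tl A-tile feeds the \tl B-tile, and so on), so four distinct types with a common south glue are perfectly consistent there. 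The paper instead extracts the inequalities from $q_1$--$q_4$: since there is a single tile type of color \colT{colAux}{\tl b}, the tiles of colors \Cneg and \Cpos east of $o_1$ and $o_2$ share their south input, so $o_1=o_2$ would give two distinct tile types identical inputs; hence $o_1\neq o_2$, and since $o_1,o_2$ share their west neighbor type, the north glues of the $\uparrow_0$- and $\uparrow_1$-tiles differ; symmetrically $o_1\neq o_3$ yields $g_0\neq g_1$. Your parenthetical that the four north outputs are distinct because they are ``inherited by the unique types of \tl A--\tl D'' is a similar non sequitur --- distinct tile types need not have distinct south glues; the paper gets this from Lemma~\ref{lem:p_or}: four \Cor types with only two east glues force four distinct north glues.

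Two further gaps. (a) You never track where the one spare tile type (beyond the three forced onto \Cor) may sit: a priori it could be a second type of color $\uparrow_0$, $\overset0\rightarrow$, \colT{colAux}{\tl b}, or \colT{colAux}{\tl c}, which destroys the well-definedness of your glues $g_w,h_s$ and the uniqueness facts your argument silently uses; your closing sentence (``the three extras are now entirely consumed by \Cor, so every other color has exactly one type'') is circular without this. The paper pins the spare down first: if there were only three \Cor types and one \colT{colAux}{\tl c} type, two of the tiles colored \tl A--\tl D in $q_1$--$q_4$ would have identical inputs; hence either four \Cor types or two \colT{colAux}{\tl c} types, with every other color unique --- and this dichotomy licenses every later appeal to uniqueness. (b) You never prove the identifications $0=0'$ and $1=1'$, i.e., that the east outputs of $o_1$ and of $o_2,o_3,o_4$ are \emph{the same glues} as the input glues $0,1$; a bijective renaming (Remark~\ref{rem:glues}) cannot merge distinct glues, so ``matching Fig.~\ref{fig:or-tiles} exactly'' and ``\Cpos has west input $1$'' do not follow from relabeling. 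This identification is the actual role of $q_5$ in the paper: once the four north outputs are known distinct, the unique types of \tl A--\tl D force the chained gates in $q_5$ to be $o_1,o_2,o_3,o_4$ in order, and reading off the matched edges gives $0=0'$ and $1=1'$.
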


\begin{proof}
By Lemma~\ref{lem:p_or}, the \tas $T$ contains at least three tile types with color \Cor.
Since we need at least one tile type for each color in $C(P)$, there is one tile type left in $T$ whose color is not determined yet.
By $o_i\in T$ we denote the tile of color \Cor in subpattern $q_i$ for $i=1,2,3,4$.
We will show that all four tiles $o_1,o_2,o_3,o_4$ are of different types and are the tile types shown in Fig.~\ref{fig:or-tiles}, from left to right.

Suppose by contradiction, there were only three tile types with color \Cor and only one tile type with color \colT{colAux}{\tl c}.
Two of the tiles with colors \colT{colMod}{\color{white}\tl A}, \colT{colMod}{\color{white}\tl B}, \colT{colMod}{\color{white}\tl C}, and \colT{colMod}{\color{white}\tl D} in patterns $q_1$ to $q_4$ would have the same south and west neighbors and, hence, the inputs of the two corresponding tile types would coincide --- a property which is forbidden for \tas{}s.
Thus, either there are four tile types with color \Cor or two tile types with color \colT{colAux}{\tl c}.
For every other color there is exactly one tile type.

As there is only one tile type with color \colT{colAux}{\tl b} the tile $o_1$ and $o_2$ have to be of different types; otherwise, the tiles with colors \Cneg and \Cpos would have the same inputs in subpatterns $q_1$ and $q_2$.
Because tiles $o_1$ and $o_2$ have the same west neighbor, the tiles with colors \colT{white}{$\uparrow\!_0$} and \colT{white}{$\uparrow\!_1$} are responsible for the placement of $o_1$ and $o_2$, respectively;
this means the north glues of the tile types with colors \colT{white}{$\uparrow\!_0$} and \colT{white}{$\uparrow\!_1$} differ.
Symmetrically, $o_1$ and $o_3$ are of different types and the east glues of the tile types with colors \Cra and \colT{white}{$\overset1\rightarrow$} differ.
Next, we see that the four tiles $o_1$ to $o_4$ all have different inputs, thus, the types of $o_1$ to $o_4$ are mutually distinct and their is only one tile type with color \colT{colAux}{\tl c}.

By the freedom of naming the used glues, see Remark~\ref{rem:glues}, we assume that the tile type with color \Cra has east output $0$, and the tile type with \colT{white}{$\overset1\rightarrow$} has east output $1$, the tile type with color \colT{white}{$\uparrow\!_0$} has north output $0$, and the tile type with color \colT{white}{$\uparrow\!_1$} has north output $1$.
The two tiles with colors \Cpos and \Cneg have the same south input whence their west input differs and depends on the neighboring \tl{or}-gate tile.
We assume that the tile with color \Cneg has west input $0'$ and the tile with color \Cpos has west input $1'$.
We have $o_1(E) = 0'$ and $o_2(E) = o_3(E) = o_4(E) = 1'$.
By Lemma~\ref{lem:p_or}, the north outputs of the four tiles $o_1$ to $o_4$ have to be distinct.

Next, we take a look at subpattern $q_5$.
The tiles with colors \colT{colMod}{\color{white}\tl A}, \colT{colMod}{\color{white}\tl B}, \colT{colMod}{\color{white}\tl C}, and \colT{colMod}{\color{white}\tl D}, in the top row of subpattern $q_5$ enforce that the four \tl{or}-gates below are of the same types as $o_1$, $o_2$, $o_3$, and $o_4$, from left to right; otherwise the south inputs of the top row cannot match the north outputs of the middle row.
By the placement of the four \tl{or}-gate tiles it is clear that $0 = 0'$ and $1 = 1'$ as desired.
\end{proof}

\begin{figure}[th]
	\centering
	\begin{tikzpicture}[scale=\smallscale]
		\begin{scope}
			\smalltile{0}{0}{colAux}{\tl{Z_4}}
			\smalltile{0}{.5}{colAux}{$v$}
			\smalltile{.5}{0}{white}{$v$}
			\smalltile{.5}{.5}{colMod}{\color{white}$v$}

			\node at (.25,-.5) {$r_1(v)$};
		\end{scope}
		\begin{scope}[xshift=2cm]
			\smalltile{0}{0}{colAux}{\tl{Z_4}}
			\smalltile{0}{.5}{colAux}{$v$}
			\smalltile{.5}{0}{white}{$\neg v$}
			\smalltile{.5}{.5}{colModd}{\color{white}$\tilde v$}

			\node at (.25,-.5) {$r_2(v)$};
		\end{scope}

		\begin{scope}[xshift = 4cm]
			\smalltile{0}{0}{colAux}{\tl{a}}
			\smalltile{.5}{0}{white}{$v$}
			\smalltile{1}{0}{white}{$\neg v$}
			\smalltile{1.5}{0}{colAux}{\tl{b}}
			
			\smalltile{0}{.5}{white}{$\overset{0}\rightarrow$}
			\smalltile{.5}{.5}{white}{\tl{or}}
			\smalltile{1}{.5}{white}{\tl{or}}
			\smalltile{1.5}{.5}{colPos}{\color{white}$\pmb+$}

			\node at (.75,-.5) {$r_3(v)$};
		\end{scope}
		\begin{scope}[xshift=7cm]
			\smalltile{0}{0}{colAux}{\tl{a}}
			\smalltile{.5}{0}{white}{$c_1$}
			\smalltile{1}{0}{white}{$c_2$}
			\smalltile{1.5}{0}{white}{$c_3$}
			\smalltile{2}{0}{colAux}{\tl{b}}
			
			\smalltile{0}{.5}{white}{$\overset{0}\rightarrow$}
			\smalltile{.5}{.5}{white}{\tl{or}}
			\smalltile{1}{.5}{white}{\tl{or}}
			\smalltile{1.5}{.5}{white}{\tl{or}}
			\smalltile{2}{.5}{colPos}{\color{white}$\pmb+$}

			\node at (1,-.5) {$s(C)$};
		\end{scope}
	\end{tikzpicture}
	\caption{The subpatterns $r_1(v)$ to $r_3(v)$ for a variable $v\in V$ and
	the subpattern $s(C)$ for a clause 
	$C = (c_1\lor c_2 \lor c_3)$ in $F$ where $c_i = v$ or $c_i = \neg v$
	for some variable $v\in V$ and $i=1,2,3$.}
	\label{fig:sub:F}
\end{figure}
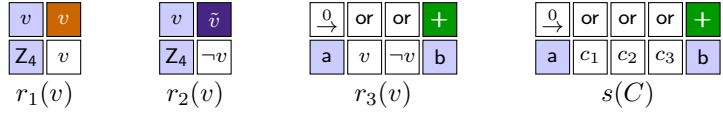

The subpatterns that we defined so far did not depend on the formula $F$.
Now, for each variable $v\in V$ we define three subpatterns $r_1(v)$, $r_2(v)$, $r_3(v)$ and for a clause $C$ from $F$ we define one more subpattern $s(C)$; these patterns are given by Fig.~\ref{fig:sub:F}.
For a formula $F$ in 3-CNF we let $P_F$ be the pattern that contains all the subpatterns $p$, $q_1$ to $q_5$, $r_1(v)$ to $r_3(v)$ for each variable $v\in V$, and $s(C)$ for each clause $C$ from $F$, where each subpattern is placed to the right of the previous subpattern with one column of unique colors in between.
Then, $P_F$ has height $6$, because the top and bottom rows contain unique colors only, and $P_F$ has width $45+11\cdot \abs V + 6\cdot \ell$.
The next lemma follows from this observation.

\begin{lemma}\label{lem:3sat:poly}
Given a formula $F$ in 3-CNF, the pattern $P_F$ can be computed from $F$ in polynomial time.
\end{lemma}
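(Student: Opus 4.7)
The plan is to verify that every step of the construction of $P_F$ from $F$ runs in polynomial time. I would begin by enumerating the constituent subpatterns of $P_F$: one copy of $p$, one copy each of $q_1,\ldots,q_5$, three subpatterns $r_1(v)$, $r_2(v)$, $r_3(v)$ for every variable $v\in V$, and one subpattern $s(C)$ for every clause $C$ of $F$. This yields $6+3\abs{V}+\ell$ subpatterns in total, a quantity linear in the size of $F$. Each of these subpatterns has constant width and constant height, as can be read off directly from Figures~\ref{fig:sub:p}, \ref{fig:sub:q}, and~\ref{fig:sub:F}, so in particular each one can be written down in constant time.

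Next I would bound the overall dimensions of $P_F$. Since every subpattern fits within four rows and the top and bottom rows of $P_F$ consist solely of unique colors, we have $h(P_F)=6$. Placing the subpatterns side by side with a single separating column of unique colors between consecutive ones, and summing the widths (which can be read off from the figures: $16$ for $p$, $4$ each for $q_1,\ldots,q_4$, $6$ for $q_5$, $2+2+4=8$ per variable, and $5$ per clause, plus the separating columns), one obtains $w(P_F)=45+11\abs{V}+6\ell$, again linear in the size of $F$. Hence $P_F$ has only $O(\abs{V}+\ell)$ positions in total.

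It remains to assign a fresh label to each position carrying a unique color; since there are at most $O(\abs{V}+\ell)$ such positions, this can be done with binary encodings of length $O(\log(\abs{V}+\ell))$, so that the entire description of $P_F$ fits in polynomial space. Computing the color of any given position -- either by looking up its cell inside the appropriate subpattern or by emitting a fresh unique label -- takes polynomial time. I do not anticipate any real obstacle in the proof: the argument is a bookkeeping verification of the construction described in the preceding paragraphs, and the only mild point to watch is ensuring that the unique-color labels are encoded compactly enough to keep the total output polynomial.
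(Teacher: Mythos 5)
Your proposal is correct and follows the same route as the paper, which simply declares the lemma ``obvious by the design of the pattern'' after noting $h(P_F)=6$ and $w(P_F)=45+11\abs V+6\ell$; you merely spell out the bookkeeping (constant-size subpatterns, linear count, compact labels for the unique colors) that the paper leaves implicit.
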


\begin{proof}
This is obvious by the design of the pattern.
\end{proof}

The subpatterns $r_1(v)$ and $r_2(v)$ ensure that the two tile types with colors \colT{white}{$v$} and \colT{white}{$\neg v$} have distinct north outputs.
The subpattern $r_3(v)$ then implies that one of the north glues is $0$ and the other one is $1$.

\begin{lemma}\label{lem:sub:r}
Let $P_F$ be the pattern for a formula $F$ over variables $V$ in 3-CNF and let $T$ be a \tas with at most $m = \abs{C(P_F)}+3$ tile types which self-assembles pattern $P_F$.
For all variables $v\in V$, there is a unique tile type $t_v^\oplus\in T$ with color \colT{white}{$v$} and a unique tile type $t_v^\ominus\in T$ with color \colT{white}{$\neg v$} such that either $t_v^\oplus(N) = 1$ and $t_v^\ominus(N) = 0$ or $t_v^\oplus(N) = 0$ and $t_v^\ominus(N) = 1$.
Here, $0$ and $1$ are the south inputs of the \tl{or}-gate tile types as shown in Fig.~\ref{fig:or-tiles}.
\end{lemma}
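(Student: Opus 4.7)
My plan is to derive the lemma from Lemma~\ref{lem:q_or} applied to $P_F$, combined with a short neighbor-matching argument on the subpatterns $r_1(v)$, $r_2(v)$, and $r_3(v)$. Since $P_F$ contains $p$ and $q_1,\ldots,q_5$ as subpatterns and the tile budget is exactly $m=\abs{C(P_F)}+3$, Lemma~\ref{lem:q_or} applies directly: $T$ contains the four \tl{or}-gate tile types of Fig.~\ref{fig:or-tiles} and exactly one tile type per remaining color of $P_F$. In particular, the unique tile types $t_v^\oplus$ of color \colT{white}{$v$} and $t_v^\ominus$ of color \colT{white}{$\neg v$} already exist, so all that is left is to pin down their north glues.

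First I would separate $t_v^\oplus(N)$ from $t_v^\ominus(N)$ using $r_1(v)$ and $r_2(v)$. The top-right tile of $r_1(v)$ has color \colT{colMod}{\color{white}$v$}, while the top-right tile of $r_2(v)$ has color \colT{colModd}{\color{white}$\tilde v$}; these two colors are distinct, so the two top-right tile types must differ and hence disagree on at least one input. Their west neighbors both have color \colT{colAux}{$v$}, which by uniqueness is realized by a single tile type, so the west inputs of the two top-right tiles coincide. The disagreement is therefore forced onto the south edge, whose glues are exactly $t_v^\oplus(N)$ and $t_v^\ominus(N)$; hence $t_v^\oplus(N)\neq t_v^\ominus(N)$.

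Next I would invoke $r_3(v)$ to force both values into $\{0,1\}$. The second and third tiles of the top row of $r_3(v)$ are \tl{or}-gate tiles sitting directly above $t_v^\oplus$ and $t_v^\ominus$. By Lemma~\ref{lem:q_or} the only available \tl{or}-gate types are the four of Fig.~\ref{fig:or-tiles}, whose south inputs all lie in $\{0,1\}$. Combined with the previous inequality this pins $\{t_v^\oplus(N),t_v^\ominus(N)\}=\{0,1\}$, which is exactly the conclusion of the lemma.

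The hard part I expect is the bookkeeping at the very start: one has to check that enlarging the pattern from $p\cup q_1\cup\cdots\cup q_5$ to all of $P_F$ does not slacken the tile budget, so that Lemma~\ref{lem:q_or} still forces exactly one tile type per non-\Cor color of $P_F$. Once that is in hand, the rest is a purely local matching argument between adjacent tiles.
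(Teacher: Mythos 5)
Your proof is correct and matches the paper's own argument essentially step for step: uniqueness of $t_v^\oplus$ and $t_v^\ominus$ via Lemma~\ref{lem:q_or}, the inequality $t_v^\oplus(N)\neq t_v^\ominus(N)$ obtained because the distinctly colored top-right tiles of $r_1(v)$ and $r_2(v)$ share the west input coming from the unique tile type of color \colT{colAux}{$v$} and so must differ in their south inputs, and finally $t_v^\oplus(N),t_v^\ominus(N)\in\set{0,1}$ from the \tl{or}-gate tiles sitting above them in $r_3(v)$. The bookkeeping you flag as the hard part is actually immediate, since Lemma~\ref{lem:q_or} is stated for any pattern $P$ containing $p$ and $q_1$ to $q_5$ with budget $\abs{C(P)}+3$, which $P_F$ satisfies by construction.
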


\begin{proof}
Let $v\in V$, $t_v^\oplus\in T$ be the tile type with color \colT{white}{$v$}, and $t_v^\ominus\in T$ be the tile type with color \colT{white}{$\neg v$}.
The fact that $t_v^\oplus$ and $t_v^\ominus$ are unique is stated Lemma~\ref{lem:q_or} and, furthermore, for every color in $C(P_F)$ except for \Cor there exists just one tile type in $T$ with that color.
In particular, there is only one tile type with color \colT{colAux}{$v$}.
Since the north neighbors of the tile of type $t_v^\oplus$ in subpattern $r_1(v)$ and the tile of type $t_v^\ominus$ in subpattern $r_2(v)$ differ while their north neighbors have the same west input, we conclude that $t_v^\oplus(N) \neq t_v^\ominus(N)$.
Moreover, both of the tile types are a south neighbor of an \tl{or}-gate tile in subpattern $r_3(v)$, hence, $t_v^\oplus(N), t_v^\ominus(N)\in\set{0,1}$.
\end{proof}

Now, these glues serve as input for the \tl{or}-gates in the subpatterns $s(C)$.
The following lemma concludes the proof of Theorem~\ref{thm:reduction:pats}.

\begin{lemma}\label{lem:pats}
Let $P_F$ be the pattern for a formula $F$ over variables $V$ in 3-CNF and let $m = \abs{C(P_F)}+3$.
The formula $F$ is satisfiable if and only if $P_F$ can be self-assembled by a \tas $T$ with at most $m$ tile types.
\end{lemma}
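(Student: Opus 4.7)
The plan is to prove both directions of the biconditional. The $(\Leftarrow)$ direction will be an ``extraction'' argument: Lemmas~\ref{lem:q_or} and~\ref{lem:sub:r} have already pinned down the tile types that any sufficiently small \tas must use, so reading off a truth assignment amounts to inspecting the north glues of the literal tiles. The $(\Rightarrow)$ direction will be a direct construction of a \tas with exactly $m = |C(P_F)|+3$ tile types from a satisfying assignment.

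For the backward direction, suppose $T$ is a \tas with at most $m$ tile types self-assembling $P_F$. By Lemma~\ref{lem:q_or}, \Cor is implemented by exactly the four or-gate tile types of Fig.~\ref{fig:or-tiles}, and every other color has exactly one tile type. By Lemma~\ref{lem:sub:r}, for each $v \in V$ exactly one of $t_v^\oplus(N), t_v^\ominus(N)$ equals $0$ and the other equals $1$. Define $\alpha\colon V \to \set{0,1}$ by $\alpha(v) = t_v^\oplus(N)$; note $t_v^\ominus(N) = 1-\alpha(v)$. Now fix a clause $C = (c_1 \lor c_2 \lor c_3)$ and examine the subpattern $s(C)$. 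The three tiles in the bottom row with colors corresponding to $c_1, c_2, c_3$ are precisely the (unique) literal tiles from Lemma~\ref{lem:sub:r}; their north outputs are exactly the truth values of $c_1, c_2, c_3$ under $\alpha$. These values feed as south inputs into the three \tl{or}-gate tiles in the middle row, whose east output is therefore $c_1 \lor c_2 \lor c_3$ evaluated under $\alpha$. Since the \Cpos tile on the right has west input $1$ by Lemma~\ref{lem:q_or}, the or-chain must output $1$, so at least one literal of $C$ is true. Hence $\alpha$ satisfies every clause, so $F$ is satisfiable.

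For the forward direction, let $\alpha$ satisfy $F$. I construct $T$ with exactly one tile type per non-\Cor color plus the four or-gate tiles of Fig.~\ref{fig:or-tiles}, giving $|C(P_F)| + 3 = m$ tile types. The unique-color tiles pose no problem: each follows a west-south path back to the seed using fresh glues. For the literal tiles set $t_v^\oplus(N) = \alpha(v)$ and $t_v^\ominus(N) = 1-\alpha(v)$, with south and west glues chosen to match the surrounding auxiliary tiles in $r_1(v), r_2(v), r_3(v)$ and in the clause subpatterns $s(C)$. The auxiliary tile types \Cra, \colT{white}{$\overset1\rightarrow$}, \colT{white}{$\uparrow\!_{0}$}, \colT{white}{$\uparrow\!_{1}$}, \Cpos, \Cneg, and the various \colT{colAux}{\tl a}, \colT{colAux}{\tl b}, \colT{colAux}{\tl c} etc.\ receive the glues dictated by the analysis in Lemma~\ref{lem:q_or}. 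The only place where the construction might fail is a clause subpattern $s(C)$: but since $\alpha$ makes at least one of $c_1, c_2, c_3$ true, at least one south input to the or-chain is $1$, whence its east output is $1$, matching the west input of \Cpos.

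The main obstacle is the consistency check in the forward direction. The or-gate tile types, the arrow tile types, and several auxiliary tile types all appear in multiple subpatterns ($p$, $q_1$--$q_5$, $r_1(v)$--$r_3(v)$, and every $s(C)$), so a single global choice of glues must simultaneously satisfy every subpattern. Fortunately the structural consequences of Lemma~\ref{lem:q_or} already fix the relevant input/output behaviour of the or-gates and the arrows, and Remark~\ref{rem:glues} gives us the freedom to rename glues so that the reductions line up cleanly; the clause subpatterns then reduce precisely to a boolean OR, which is the computational content of the reduction.
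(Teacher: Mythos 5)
Your backward direction is exactly the paper's: extract the assignment $f(v)=t_v^\oplus(N)$ via Lemmas~\ref{lem:q_or} and~\ref{lem:sub:r}, then argue in $s(C)$ that since \Cpos has west input $1$, the \tl{or}-chain outputs $1$ and hence some literal tile has north glue $1$. This is complete; the only presentational nit is that your claim that the chain's east output equals $c_1\lor c_2\lor c_3$ also needs the leftmost west input to be $0$, i.e., the fact that \Cra has east output $0$, which Lemma~\ref{lem:q_or} supplies and the paper invokes explicitly.

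The forward direction, however, has a genuine (if repairable) gap. You write that the auxiliary tile types ``receive the glues dictated by the analysis in Lemma~\ref{lem:q_or}'', but that lemma is a necessity statement about any \tas that \emph{already} self-assembles $P_F$; in this direction the burden is to exhibit a concrete \tas and verify it, so citing Lemma~\ref{lem:q_or} as the source of the glues is circular --- and in any case that lemma says nothing about the glues of the colors $\tl{a}$, $\tl{b}$, $\tl{c}$, $\tl{d}$, $\tl{Z_1}$--$\tl{Z_4}$, the $\tl{X_i}$ and $\tl{Y_i}$, beyond one-tile-per-color. Likewise, your claim that ``the only place where the construction might fail is a clause subpattern $s(C)$'' understates the obligation: a single global set of $m$ tile types must simultaneously tile $p$ (which requires the four \tl{or}-gates to realize eight distinct output combinations), $q_1$--$q_5$, and the $r_i(v)$, with shared tiles such as the one for color $\tl{b}$ appearing in many subpatterns at once; Remark~\ref{rem:glues} only licenses renaming and cannot create this consistency. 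The paper discharges precisely this step by explicitly listing all tile types (Fig.~\ref{fig:all:3sat:tiles} together with Fig.~\ref{fig:or-tiles}) and displaying the tiled subpatterns in Sect.~\ref{app:subpatterns}; the residual check for $s(C)$ is then the satisfiability argument you give, which is correct. So your plan coincides with the paper's, but to close the proof you must actually write down (or at least verify) a fixed tile set that assembles every subpattern, rather than appeal to Lemma~\ref{lem:q_or}.
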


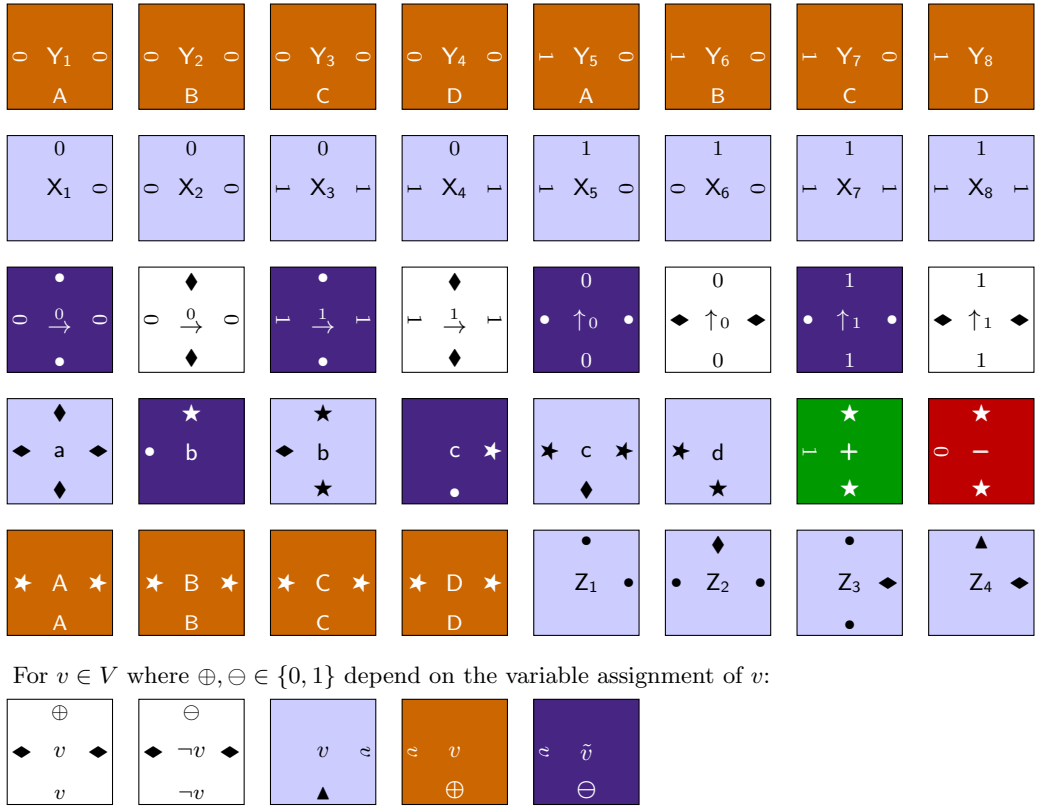
\begin{figure}[th]
	\centering
	\hspace*{-2cm}
	\begin{tikzpicture}
		[scale=\fullscale,every node/.style={scale=\fullscalenode}]
	
%	\begin{scope}[yshift=3.5cm]
%		\fulltile{0}{0}{white}{black}{\tl{or}}{\tl A}{0}{0}{0}
%		\fulltile{1.75}{0}{white}{black}{\tl{or}}{\tl B}{1}{1}{0}
%		\fulltile{3.5}{0}{white}{black}{\tl{or}}{\tl C}{1}{0}{1}
%		\fulltile{5.25}{0}{white}{black}{\tl{or}}{\tl D}{1}{1}{1}
%	\end{scope}
	
	\begin{scope}[yshift=1.75cm]
		\fulltile{0}{0}{colMod}{white}{\tl{Y_1}}{}{0}{\tl A}{0}
		\fulltile{1.75}{0}{colMod}{white}{\tl{Y_2}}{}{0}{\tl B}{0}
		\fulltile{3.5}{0}{colMod}{white}{\tl{Y_3}}{}{0}{\tl C}{0}
		\fulltile{5.25}{0}{colMod}{white}{\tl{Y_4}}{}{0}{\tl D}{0}
		\fulltile{7}{0}{colMod}{white}{\tl{Y_5}}{}{0}{\tl A}{1}
		\fulltile{8.75}{0}{colMod}{white}{\tl{Y_6}}{}{0}{\tl B}{1}
		\fulltile{10.5}{0}{colMod}{white}{\tl{Y_7}}{}{0}{\tl C}{1}
		\fulltile{12.25}{0}{colMod}{white}{\tl{Y_8}}{}{}{\tl D}{1}
	\end{scope}
	
		\fulltile{0}{0}{colAux}{black}{\tl{X_1}}{0}{0}{}{}
		\fulltile{1.75}{0}{colAux}{black}{\tl{X_2}}{0}{0}{}{0}
		\fulltile{3.5}{0}{colAux}{black}{\tl{X_3}}{0}{1}{}{1}
		\fulltile{5.25}{0}{colAux}{black}{\tl{X_4}}{0}{1}{}{1}
		\fulltile{7}{0}{colAux}{black}{\tl{X_5}}{1}{0}{}{1}
		\fulltile{8.75}{0}{colAux}{black}{\tl{X_6}}{1}{0}{}{0}
		\fulltile{10.5}{0}{colAux}{black}{\tl{X_7}}{1}{1}{}{1}
		\fulltile{12.25}{0}{colAux}{black}{\tl{X_8}}{1}{1}{}{1}

	\begin{scope}[yshift=-1.75cm]
		\fulltile{0}{0}{colModd}{white}{\overset0\rightarrow}{\bul}{0}{\bul}{0}
		\fulltile{1.75}{0}{white}{black}{\overset0\rightarrow}{\dia}{0}{\dia}{0}
		\fulltile{3.5}{0}{colModd}{white}{\overset1\rightarrow}{\bul}{1}{\bul}{1}
		\fulltile{5.25}{0}{white}{black}{\overset1\rightarrow}{\dia}{1}{\dia}{1}
		\fulltile{7}{0}{colModd}{white}{\uparrow\!_0}{0}{\bul}{0}{\bul}
		\fulltile{8.75}{0}{white}{black}{\uparrow\!_0}{0}{\dia}{0}{\dia}
		\fulltile{10.5}{0}{colModd}{white}{\uparrow\!_1}{1}{\bul}{1}{\bul}
		\fulltile{12.25}{0}{white}{black}{\uparrow\!_1}{1}{\dia}{1}{\dia}
	\end{scope}

	\begin{scope}[yshift=-3.5cm]
		\fulltile{0}{0}{colAux}{black}{\tl a}{\dia}{\dia}{\dia}{\dia}
		\fulltile{1.75}{0}{colModd}{white}{\tl b}{\str}{}{}{\bul}
		\fulltile{3.5}{0}{colAux}{black}{\tl b}{\str}{}{\str}{\dia}
		\fulltile{5.25}{0}{colModd}{white}{\tl c}{}{\str}{\bul}{}
		\fulltile{7}{0}{colAux}{black}{\tl c}{}{\str}{\dia}{\str}
		\fulltile{8.75}{0}{colAux}{black}{\tl{d}}{}{}{\str}{\str}
		\fulltile{10.5}{0}{colPos}{white}{\pmb+}{\str}{}{\str}{1}
		\fulltile{12.25}{0}{colNeg}{white}{\pmb-}{\str}{}{\str}{0}
	\end{scope}

	\begin{scope}[yshift=-5.25cm]
		\fulltile{0}{0}{colMod}{white}{\tl A}{}{\str}{\tl A}{\str}
		\fulltile{1.75}{0}{colMod}{white}{\tl B}{}{\str}{\tl B}{\str}
		\fulltile{3.5}{0}{colMod}{white}{\tl C}{}{\str}{\tl C}{\str}
		\fulltile{5.25}{0}{colMod}{white}{\tl D}{}{\str}{\tl D}{\str}
		\fulltile{7}{0}{colAux}{black}{\tl{Z_1}}{\bul}{\bul}{}{}
		\fulltile{8.75}{0}{colAux}{black}{\tl{Z_2}}{\dia}{\bul}{}{\bul}
		\fulltile{10.5}{0}{colAux}{black}{\tl{Z_3}}{\bul}{\dia}{\bul}{}
		\fulltile{12.25}{0}{colAux}{black}{\tl{Z_4}}{\tri}{\dia}{}{}
	\end{scope}

	\begin{scope}[yshift=-7.5cm]
		\node [anchor=west,font=\small] at (-.75,1) {For $v \in V$ where $\oplus,\ominus\in\set{0,1}$ depend on the variable assignment of $v$:};
		\fulltile{0}{0}{white}{black}{v}{\oplus}{\dia}{v}{\dia}
		\fulltile{1.75}{0}{white}{black}{\neg v}{\ominus}{\dia}{\neg v}{\dia}
		\fulltile{3.5}{0}{colAux}{black}{v}{}{v}{\tri}{}
		\fulltile{5.25}{0}{colMod}{white}{v}{}{}{\boldsymbol\oplus}{v}
		\fulltile{7}{0}{colModd}{white}{\tilde v}{}{}{\boldsymbol\ominus}{v}
	\end{scope}
	\end{tikzpicture}
	\hspace*{-2cm}
	\caption{All tile types, except for the \tl{or}-gate, used to self-assemble
		the subpatterns in $P_F$.
		Unlabeled edges only appear on the outside borders of subpatterns;
		unique glues on these edges can be used to attach the subpatterns
		to the remaining structure.}
	\label{fig:all:3sat:tiles}
\end{figure}

\begin{proof}
The formula $F$ is satisfiable if and only if there is a variable assignment $f\colon V\to \set{0,1}$ which satisfies every clause in $F$.
Suppose such a variable assignment $f$ exists.
For every variable $v\in V$ we let $t_v^\oplus\in T$ be the tile type with color \colT{white}{$v$} and $t_v^\ominus\in T$ be the tile type with color \colT{white}{$\neg v$}.
In accordance with Lemma~\ref{lem:sub:r}, we let $t_v^\oplus(N) = f(v)$ and $t_v^\ominus(N) = 1-f(v)$.
The remaining glues on $t_v^\oplus$ and $t_v^\ominus$ and the other tile types are given in Fig.~\ref{fig:all:3sat:tiles} plus the four \tl{or}-gate tile types in Fig.~\ref{fig:or-tiles}.
By design, it is clear that these tile types can self-assemble all the subpatterns $p$, $q_1$ to $q_5$, and $r_1(v)$ to $r_3(v)$ for $v\in V$; see also Sect.~\ref{app:subpatterns} where the subpatterns with their tile assignments are presented.
Now, consider the subpattern $s(C)$ for a clause $C=(c_1\lor c_2 \lor c_3)$ in $F$.
It is not difficult to observe that, since the clause $C$ is satisfied by the assignment $f$, at least one of the north glues of the tiles in $s(C)$ with colors \colT{white}{$c_1$}, \colT{white}{$c_2$}, and \colT{white}{$c_3$} is $1$.
The design of the \tl{or}-gate tile types ensures that the right \tl{or}-gate tile in $s(C)$ has east output $1$ and the tile with color \Cpos can be placed --- concluding that $P_F$ can be self-assembled by the given \tas having $m$ tile types.

Conversely, suppose $P_F$ can be self-assembled by a \tas with $m$ tile types.
By Lemma~\ref{lem:sub:r}, for each variable $v\in V$ the tile type $t_v^\oplus$ with color \colT{white}{$v$} has either north output $0$ or $1$.
We define a variable assignment $f\colon V \to \set{0,1}$ by $v\mapsto t_v^\oplus(N)$.
Recall from Lemma~\ref{lem:q_or} that the four \tl{or}-gate tile types in $T$ actually implement an \tl{or}-gate.
Furthermore, in a subpattern $s(C)$ where $C = (c_1\lor c_2\lor c_3)$ is a clause from $F$, the west input of the left \tl{or}-gate tile is $0$, and if the east output of the right \tl{or}-gate tile were $0$, then the tile type with color \Cpos could not be placed in $s(C)$ (instead the tile type with color \Cneg would be placed).
Hence, at least one of the north glues of the tiles with colors \colT{white}{$c_1$}, \colT{white}{$c_2$}, and \colT{white}{$c_3$} is $1$.
Using the fact that for $v\in V$ the north output of the tile type with color \colT{white}{$\neg v$} is $1$ if and only if the north output of the tile type with color \colT{white}{$v$} is $0$, stated in Lemma~\ref{lem:sub:r}, we infer that $f$ is a variable assignment that satisfies every clause in $F$.
Therefore, $F$ is satisfiable.
\end{proof}

\subsection{Subpatterns in \texorpdfstring{$P_F$}{P(F)} with Tile Assignment}
\label{app:subpatterns}

In this section we present the subpatterns $p$, $q_1$ to $q_5$, $r_1(v)$ to $r_3(v)$, $r_1(u)$ to $r_3(u)$, and $s(C)$ in $P_F$ with tile assignments, where $C = ( u \lor v \lor \neg w)$ and $u,v,w\in V$ with the variable assignment $u\mapsto 0$, $v\mapsto 1$, and $w\mapsto 1$.
The used tiles are described in Fig.~\ref{fig:or-tiles} and~\ref{fig:all:3sat:tiles}.
The other subpatterns in $P_F$ can be tiled analogously.

\begin{center}
	\hspace*{-2cm}
	\begin{tikzpicture}
			[scale=\fullscale,every node/.style={scale=\fullscalenode}]
		\begin{scope}[yshift=1.5cm]	
		\fulltile{0}{0}{white}{black}{\tl{or}}{\tl A}{0}{0}{0}
		\fulltile{1.5}{0}{colMod}{white}{\tl{Y_1}}{}{0}{\tl A}{0}
		\fulltile{3}{0}{white}{black}{\tl{or}}{\tl A}{0}{0}{0}
		\fulltile{4.5}{0}{colMod}{white}{\tl{Y_2}}{}{0}{\tl B}{0}
		\fulltile{6}{0}{white}{black}{\tl{or}}{\tl A}{0}{0}{0}
		\fulltile{7.5}{0}{colMod}{white}{\tl{Y_3}}{}{0}{\tl C}{0}
		\fulltile{9}{0}{white}{black}{\tl{or}}{\tl A}{0}{0}{0}
		\fulltile{10.5}{0}{colMod}{white}{\tl{Y_4}}{}{0}{\tl D}{0}
		\end{scope}

		\fulltile{0}{0}{colAux}{black}{\tl{X_1}}{0}{0}{}{}
		\fulltile{1.5}{0}{white}{black}{\tl{or}}{\tl A}{0}{0}{0}
		\fulltile{3}{0}{colAux}{black}{\tl{X_2}}{0}{0}{}{0}
		\fulltile{4.5}{0}{white}{black}{\tl{or}}{\tl B}{1}{1}{0}
		\fulltile{6}{0}{colAux}{black}{\tl{X_3}}{0}{1}{}{1}
		\fulltile{7.5}{0}{white}{black}{\tl{or}}{\tl C}{1}{0}{1}
		\fulltile{9}{0}{colAux}{black}{\tl{X_4}}{0}{1}{}{1}
		\fulltile{10.5}{0}{white}{black}{\tl{or}}{\tl D}{1}{1}{1}

		\begin{scope}[yshift=-3.5cm, xshift=-10.5cm]
		\begin{scope}[yshift=1.5cm]	
		\fulltile{12}{0}{white}{black}{\tl{or}}{\tl B}{1}{1}{0}
		\fulltile{13.5}{0}{colMod}{white}{\tl{Y_5}}{}{0}{\tl A}{1}
		\fulltile{15}{0}{white}{black}{\tl{or}}{\tl B}{1}{1}{0}
		\fulltile{16.5}{0}{colMod}{white}{\tl{Y_6}}{}{0}{\tl B}{1}
		\fulltile{18}{0}{white}{black}{\tl{or}}{\tl B}{1}{1}{0}
		\fulltile{19.5}{0}{colMod}{white}{\tl{Y_7}}{}{0}{\tl C}{1}
		\fulltile{21}{0}{white}{black}{\tl{or}}{\tl B}{1}{1}{0}
		\fulltile{22.5}{0}{colMod}{white}{\tl{Y_8}}{}{}{\tl D}{1}
		\end{scope}

		\fulltile{12}{0}{colAux}{black}{\tl{X_5}}{1}{0}{}{1}
		\fulltile{13.5}{0}{white}{black}{\tl{or}}{\tl A}{0}{0}{0}
		\fulltile{15}{0}{colAux}{black}{\tl{X_6}}{1}{0}{}{0}
		\fulltile{16.5}{0}{white}{black}{\tl{or}}{\tl B}{1}{1}{0}
		\fulltile{18}{0}{colAux}{black}{\tl{X_7}}{1}{1}{}{1}
		\fulltile{19.5}{0}{white}{black}{\tl{or}}{\tl C}{1}{0}{1}
		\fulltile{21}{0}{colAux}{black}{\tl{X_8}}{1}{1}{}{1}
		\fulltile{22.5}{0}{white}{black}{\tl{or}}{\tl D}{1}{1}{1}
		\end{scope}
		
		\node at (12,.75) {$\ldots$};
		\node at (0,-2.75) {$\ldots$};
		\node at (6,-4.5) {$p$};
	\end{tikzpicture}
	\hspace*{-2cm}
	
	\bigskip
	
	\hspace*{-2cm}
	\begin{tikzpicture}
			[scale=\fullscale,every node/.style={scale=\fullscalenode}]
		\fulltile{0}{0}{colAux}{black}{\tl{Z_1}}{\bul}{\bul}{}{}
		\fulltile{1.5}{0}{colAux}{black}{\tl{Z_2}}{\dia}{\bul}{}{\bul}
		\fulltile{3}{0}{colModd}{white}{\uparrow\!_0}{0}{\bul}{0}{\bul}
		\fulltile{4.5}{0}{colModd}{white}{\tl b}{\str}{}{}{\bul}

		\fulltile{0}{1.5}{colAux}{black}{\tl{Z_3}}{\bul}{\dia}{\bul}{}
		\fulltile{1.5}{1.5}{colAux}{black}{\tl a}{\dia}{\dia}{\dia}{\dia}
		\fulltile{3}{1.5}{white}{black}{\uparrow\!_0}{0}{\dia}{0}{\dia}
		\fulltile{4.5}{1.5}{colAux}{black}{\tl b}{\str}{}{\str}{\dia}

		\fulltile{0}{3}{colModd}{white}{\overset0\rightarrow}{\bul}{0}{\bul}{0}
		\fulltile{1.5}{3}{white}{black}{\overset0\rightarrow}{\dia}{0}{\dia}{0}
		\fulltile{3}{3}{white}{black}{\tl{or}}{\tl A}{0}{0}{0}
		\fulltile{4.5}{3}{colNeg}{white}{\pmb-}{\str}{}{\str}{0}

		\fulltile{0}{4.5}{colModd}{white}{\tl c}{}{\str}{\bul}{}
		\fulltile{1.5}{4.5}{colAux}{black}{\tl c}{}{\str}{\dia}{\str}
		\fulltile{3}{4.5}{colMod}{white}{\tl A}{}{\str}{\tl A}{\str}
		\fulltile{4.5}{4.5}{colAux}{black}{\tl{d}}{}{}{\str}{\str}

		\node at (2.25,-1) {$q_1$};
	
	\begin{scope}[xshift=7cm]
			[scale=\fullscale,every node/.style={scale=\fullscalenode}]
		\fulltile{0}{0}{colAux}{black}{\tl{Z_1}}{\bul}{\bul}{}{}
		\fulltile{1.5}{0}{colAux}{black}{\tl{Z_2}}{\dia}{\bul}{}{\bul}
		\fulltile{3}{0}{colModd}{white}{\uparrow\!_1}{1}{\bul}{1}{\bul}
		\fulltile{4.5}{0}{colModd}{white}{\tl b}{\str}{}{}{\bul}

		\fulltile{0}{1.5}{colAux}{black}{\tl{Z_3}}{\bul}{\dia}{\bul}{}
		\fulltile{1.5}{1.5}{colAux}{black}{\tl a}{\dia}{\dia}{\dia}{\dia}
		\fulltile{3}{1.5}{white}{black}{\uparrow\!_1}{1}{\dia}{1}{\dia}
		\fulltile{4.5}{1.5}{colAux}{black}{\tl b}{\str}{}{\str}{\dia}

		\fulltile{0}{3}{colModd}{white}{\overset0\rightarrow}{\bul}{0}{\bul}{0}
		\fulltile{1.5}{3}{white}{black}{\overset0\rightarrow}{\dia}{0}{\dia}{0}
		\fulltile{3}{3}{white}{black}{\tl{or}}{\tl B}{1}{1}{0}
		\fulltile{4.5}{3}{colPos}{white}{\pmb+}{\str}{}{\str}{1}

		\fulltile{0}{4.5}{colModd}{white}{\tl c}{}{\str}{\bul}{}
		\fulltile{1.5}{4.5}{colAux}{black}{\tl c}{}{\str}{\dia}{\str}
		\fulltile{3}{4.5}{colMod}{white}{\tl B}{}{\str}{\tl B}{\str}
		\fulltile{4.5}{4.5}{colAux}{black}{\tl{d}}{}{}{\str}{\str}

		\node at (2.25,-1) {$q_2$};
	\end{scope}
	\end{tikzpicture}
	\hspace*{-2cm}

	\bigskip
	
	\hspace*{-2cm}
	\begin{tikzpicture}
			[scale=\fullscale,every node/.style={scale=\fullscalenode}]
		\fulltile{0}{0}{colAux}{black}{\tl{Z_1}}{\bul}{\bul}{}{}
		\fulltile{1.5}{0}{colAux}{black}{\tl{Z_2}}{\dia}{\bul}{}{\bul}
		\fulltile{3}{0}{colModd}{white}{\uparrow\!_0}{0}{\bul}{0}{\bul}
		\fulltile{4.5}{0}{colModd}{white}{\tl b}{\str}{}{}{\bul}

		\fulltile{0}{1.5}{colAux}{black}{\tl{Z_3}}{\bul}{\dia}{\bul}{}
		\fulltile{1.5}{1.5}{colAux}{black}{\tl a}{\dia}{\dia}{\dia}{\dia}
		\fulltile{3}{1.5}{white}{black}{\uparrow\!_0}{0}{\dia}{0}{\dia}
		\fulltile{4.5}{1.5}{colAux}{black}{\tl b}{\str}{}{\str}{\dia}

		\fulltile{0}{3}{colModd}{white}{\overset1\rightarrow}{\bul}{1}{\bul}{1}
		\fulltile{1.5}{3}{white}{black}{\overset1\rightarrow}{\dia}{1}{\dia}{1}
		\fulltile{3}{3}{white}{black}{\tl{or}}{\tl C}{1}{0}{1}
		\fulltile{4.5}{3}{colPos}{white}{\pmb-}{\str}{}{\str}{1}

		\fulltile{0}{4.5}{colModd}{white}{\tl c}{}{\str}{\bul}{}
		\fulltile{1.5}{4.5}{colAux}{black}{\tl c}{}{\str}{\dia}{\str}
		\fulltile{3}{4.5}{colMod}{white}{\tl C}{}{\str}{\tl C}{\str}
		\fulltile{4.5}{4.5}{colAux}{black}{\tl{d}}{}{}{\str}{\str}

		\node at (2.25,-1) {$q_1$};
	
	\begin{scope}[xshift=7cm]
			[scale=\fullscale,every node/.style={scale=\fullscalenode}]
		\fulltile{0}{0}{colAux}{black}{\tl{Z_1}}{\bul}{\bul}{}{}
		\fulltile{1.5}{0}{colAux}{black}{\tl{Z_2}}{\dia}{\bul}{}{\bul}
		\fulltile{3}{0}{colModd}{white}{\uparrow\!_1}{1}{\bul}{1}{\bul}
		\fulltile{4.5}{0}{colModd}{white}{\tl b}{\str}{}{}{\bul}

		\fulltile{0}{1.5}{colAux}{black}{\tl{Z_3}}{\bul}{\dia}{\bul}{}
		\fulltile{1.5}{1.5}{colAux}{black}{\tl a}{\dia}{\dia}{\dia}{\dia}
		\fulltile{3}{1.5}{white}{black}{\uparrow\!_1}{1}{\dia}{1}{\dia}
		\fulltile{4.5}{1.5}{colAux}{black}{\tl b}{\str}{}{\str}{\dia}

		\fulltile{0}{3}{colModd}{white}{\overset1\rightarrow}{\bul}{1}{\bul}{1}
		\fulltile{1.5}{3}{white}{black}{\overset1\rightarrow}{\dia}{1}{\dia}{1}
		\fulltile{3}{3}{white}{black}{\tl{or}}{\tl D}{1}{1}{1}
		\fulltile{4.5}{3}{colPos}{white}{\pmb+}{\str}{}{\str}{1}

		\fulltile{0}{4.5}{colModd}{white}{\tl c}{}{\str}{\bul}{}
		\fulltile{1.5}{4.5}{colAux}{black}{\tl c}{}{\str}{\dia}{\str}
		\fulltile{3}{4.5}{colMod}{white}{\tl D}{}{\str}{\tl D}{\str}
		\fulltile{4.5}{4.5}{colAux}{black}{\tl{d}}{}{}{\str}{\str}

		\node at (2.25,-1) {$q_2$};
	\end{scope}
	\end{tikzpicture}
	\hspace*{-2cm}

	\bigskip
	
	\begin{tikzpicture}
			[scale=\fullscale,every node/.style={scale=\fullscalenode}]
		\fulltile{0}{0}{colAux}{black}{\tl a}{\dia}{\dia}{\dia}{\dia}
		\fulltile{1.5}{0}{white}{black}{\uparrow\!_0}{0}{\dia}{0}{\dia}
		\fulltile{3}{0}{white}{black}{\uparrow\!_1}{1}{\dia}{1}{\dia}
		\fulltile{4.5}{0}{white}{black}{\uparrow\!_0}{0}{\dia}{0}{\dia}
		\fulltile{6}{0}{white}{black}{\uparrow\!_1}{1}{\dia}{1}{\dia}
		\fulltile{7.5}{0}{colAux}{black}{\tl b}{\str}{}{\str}{\dia}
		
		\fulltile{0}{1.5}{white}{black}{\overset0\rightarrow}{\dia}{0}{\dia}{0}
		\fulltile{1.5}{1.5}{white}{black}{\tl{or}}{\tl A}{0}{0}{0}
		\fulltile{3}{1.5}{white}{black}{\tl{or}}{\tl B}{1}{1}{0}
		\fulltile{4.5}{1.5}{white}{black}{\tl{or}}{\tl C}{1}{0}{1}
		\fulltile{6}{1.5}{white}{black}{\tl{or}}{\tl D}{1}{1}{1}
		\fulltile{7.5}{1.5}{colPos}{white}{\pmb+}{\str}{}{\str}{1}

		\fulltile{0}{3}{colAux}{black}{\tl c}{}{\str}{\dia}{\str}
		\fulltile{1.5}{3}{colMod}{white}{\tl A}{}{\str}{\tl A}{\str}
		\fulltile{3}{3}{colMod}{white}{\tl B}{}{\str}{\tl B}{\str}
		\fulltile{4.5}{3}{colMod}{white}{\tl C}{}{\str}{\tl C}{\str}
		\fulltile{6}{3}{colMod}{white}{\tl D}{}{\str}{\tl D}{\str}
		\fulltile{7.5}{3}{colAux}{black}{\tl{d}}{}{}{\str}{\str}

		\node at (3.75,-1) {$q_5$};
	\end{tikzpicture}

	\bigskip

	\hspace*{-2cm}
	\begin{tikzpicture}
			[scale=\fullscale,every node/.style={scale=\fullscalenode}]

	\begin{scope}
		\fulltile{0}{0}{colAux}{black}{\tl{Z_4}}{\tri}{\dia}{}{}
		\fulltile{1.5}{0}{white}{black}{v}{1}{\dia}{v}{\dia}

		\fulltile{0}{1.5}{colAux}{black}{v}{}{v}{\tri}{}
		\fulltile{1.5}{1.5}{colMod}{white}{v}{}{}{1}{v}
		
		\node at (.75,-1) {$r_1(v)$};
	\end{scope}

	\begin{scope}[xshift=4cm]
		\fulltile{0}{0}{colAux}{black}{\tl{Z_4}}{\tri}{\dia}{}{}
		\fulltile{1.5}{0}{white}{black}{\neg v}{0}{\dia}{\neg v}{\dia}

		\fulltile{0}{1.5}{colAux}{black}{v}{}{v}{\tri}{}
		\fulltile{1.5}{1.5}{colModd}{white}{\tilde v}{}{}{0}{v}
		
		\node at (.75,-1) {$r_2(v)$};
	\end{scope}
	
	\begin{scope}[xshift=8cm]
		\fulltile{0}{0}{colAux}{black}{\tl a}{\dia}{\dia}{\dia}{\dia}
		\fulltile{1.5}{0}{white}{black}{v}{1}{\dia}{v}{\dia}
		\fulltile{3}{0}{white}{black}{\neg v}{0}{\dia}{\neg v}{\dia}
		\fulltile{4.5}{0}{colAux}{black}{\tl b}{\str}{}{\str}{\dia}
		
		\fulltile{0}{1.5}{white}{black}{\overset0\rightarrow}{\dia}{0}{\dia}{0}
		\fulltile{1.5}{1.5}{white}{black}{\tl{or}}{\tl B}{1}{1}{0}
		\fulltile{3}{1.5}{white}{black}{\tl{or}}{\tl C}{1}{0}{1}
		\fulltile{4.5}{1.5}{colPos}{white}{\pmb+}{\str}{}{\str}{1}

		\node at (2.25,-1) {$r_3(v)$};
	\end{scope}
	
	\end{tikzpicture}	
	\hspace*{-2cm}
	
	\bigskip

	\hspace*{-2cm}
	\begin{tikzpicture}
			[scale=\fullscale,every node/.style={scale=\fullscalenode}]

	\begin{scope}
		\fulltile{0}{0}{colAux}{black}{\tl{Z_4}}{\tri}{\dia}{}{}
		\fulltile{1.5}{0}{white}{black}{u}{0}{\dia}{u}{\dia}

		\fulltile{0}{1.5}{colAux}{black}{u}{}{u}{\tri}{}
		\fulltile{1.5}{1.5}{colMod}{white}{u}{}{}{0}{u}
		
		\node at (.75,-1) {$r_1(u)$};
	\end{scope}

	\begin{scope}[xshift=4cm]
		\fulltile{0}{0}{colAux}{black}{\tl{Z_4}}{\tri}{\dia}{}{}
		\fulltile{1.5}{0}{white}{black}{\neg u}{1}{\dia}{\neg u}{\dia}

		\fulltile{0}{1.5}{colAux}{black}{u}{}{u}{\tri}{}
		\fulltile{1.5}{1.5}{colModd}{white}{\tilde u}{}{}{1}{u}
		
		\node at (.75,-1) {$r_2(u)$};
	\end{scope}
	
	\begin{scope}[xshift=8cm]
		\fulltile{0}{0}{colAux}{black}{\tl a}{\dia}{\dia}{\dia}{\dia}
		\fulltile{1.5}{0}{white}{black}{u}{0}{\dia}{u}{\dia}
		\fulltile{3}{0}{white}{black}{\neg u}{1}{\dia}{\neg u}{\dia}
		\fulltile{4.5}{0}{colAux}{black}{\tl b}{\str}{}{\str}{\dia}
		
		\fulltile{0}{1.5}{white}{black}{\overset0\rightarrow}{\dia}{0}{\dia}{0}
		\fulltile{1.5}{1.5}{white}{black}{\tl{or}}{\tl A}{0}{0}{0}
		\fulltile{3}{1.5}{white}{black}{\tl{or}}{\tl B}{1}{1}{0}
		\fulltile{4.5}{1.5}{colPos}{white}{\pmb+}{\str}{}{\str}{1}

		\node at (2.25,-1) {$r_3(u)$};
	\end{scope}
	\end{tikzpicture}
	\hspace*{-2cm}

	\bigskip
	\begin{tikzpicture}
			[scale=\fullscale,every node/.style={scale=\fullscalenode}]
		\fulltile{0}{0}{colAux}{black}{\tl a}{\dia}{\dia}{\dia}{\dia}
		\fulltile{1.5}{0}{white}{black}{u}{0}{\dia}{u}{\dia}
		\fulltile{3}{0}{white}{black}{v}{1}{\dia}{v}{\dia}
		\fulltile{4.5}{0}{white}{black}{\neg w}{0}{\dia}{\neg w}{\dia}
		\fulltile{6}{0}{colAux}{black}{\tl b}{\str}{}{\str}{\dia}
		
		\fulltile{0}{1.5}{white}{black}{\overset0\rightarrow}{\dia}{0}{\dia}{0}
		\fulltile{1.5}{1.5}{white}{black}{\tl{or}}{\tl A}{0}{0}{0}
		\fulltile{3}{1.5}{white}{black}{\tl{or}}{\tl B}{1}{1}{0}
		\fulltile{4.5}{1.5}{white}{black}{\tl{or}}{\tl C}{1}{0}{1}
		\fulltile{6}{1.5}{colPos}{white}{\pmb+}{\str}{}{\str}{1}

		\node at (3,-1) {$s(C)$};
	\end{tikzpicture}
\end{center}

%%%%%%%%%%%%%%%%%%%%%%%%%%%%%%%%%%%%%%%%%%%%%%%%%%%%%%%%%%%%%%%%%%%%%%%%%%%%%%%%
%%%%%%%%%%%%%%%%%%%%%%%%%%%%%%%%%%%%%%%%%%%%%%%%%%%%%%%%%%%%%%%%%%%%%%%%%%%%%%%%
\section{\texorpdfstring{\NP-hardness of \tmbpats}{NP-hardness of 3-MBPATS}}
\label{sec:bwg}

The purpose of this section is to prove the \NP-hardness of \tmbpats.
Let us define a set of restricted input pairs $\cI$ for \pats.
The set \cI contains all pairs $(P,m)$ where $P = P_F$ is the pattern for a formula $F$ in 3-CNF as defined in Sect.~\ref{sec:pats} and $m = \abs{C(P)} + 3$.
Consider the following restriction of \pats.

\myproblem{A pair $(P,m)$ from $\cI$}
	{``Yes'' if $P$ can uniquely be self-assembled by using $m$ tile types}
	{\modpats}
	
As we choose exactly those pairs $(P,m)$ as input for the problem that are generated by the reduction, stated in Theorem~\ref{thm:reduction:pats}, we obtain the following corollary which forms the foundation for the result in this section.

\begin{corollary}
\modpats is \NP-hard.
\end{corollary}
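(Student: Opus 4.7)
The plan is to observe that the reduction constructed in Theorem~\ref{thm:reduction:pats} is already a reduction from \sat to \modpats: nothing new has to be proved, one only has to verify that the output of that reduction always lands inside $\cI$. Given a formula $F$ in 3-CNF, the theorem associates to $F$ the pair $(P_F,\abs{C(P_F)}+3)$, and by the very definition of $\cI$ this pair is an instance of \modpats.

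The two conditions required of a polynomial-time many-one reduction are then supplied by the results already in hand. Polynomial computability of the mapping $F\mapsto (P_F,\abs{C(P_F)}+3)$ is Lemma~\ref{lem:3sat:poly}, together with the trivial observation that $\abs{C(P_F)}$ can be read off from $P_F$ in polynomial time. Correctness, i.e.\ that $F$ is satisfiable if and only if $(P_F,\abs{C(P_F)}+3)$ is a ``Yes''-instance of \modpats, is exactly the equivalence stated in Theorem~\ref{thm:reduction:pats}. Chaining the two yields a polynomial-time reduction from the \NP-hard problem \sat directly into \modpats, which proves the corollary.

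The only point that might superficially look like an obstacle is to check that no side condition hidden in the definition of $\cI$ is left unverified by Theorem~\ref{thm:reduction:pats}. However, $\cI$ was defined to be exactly the image of that reduction, so this check is immediate. In summary, I would write one short paragraph recalling Lemma~\ref{lem:3sat:poly} and Theorem~\ref{thm:reduction:pats}, noting that the reduction in the latter produces only instances in $\cI$, and concluding \NP-hardness of \modpats from the \NP-hardness of \sat.
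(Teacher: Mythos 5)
Your proposal is correct and coincides with the paper's own (essentially proof-free) justification: the paper introduces $\cI$ precisely as the image of the reduction of Theorem~\ref{thm:reduction:pats}, so the \sat-to-\pats reduction, with polynomial computability from Lemma~\ref{lem:3sat:poly}, is by construction a reduction into \modpats. Your extra check that no hidden side condition of $\cI$ remains unverified is exactly the one-line observation the paper makes before stating the corollary.
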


The \NP-hardness of \tmbpats follows by the polynomial-time reduction from \modpats to \tmbpats, stated in Theorem~\ref{thm:reduction}.

\begin{theorem}\label{thm:reduction}
For every input pair $(P,m)\in \cI$ there exist a black/white/gray-colored pattern $Q$ and integers $m_b, m_w, m_g$ such that:
$P$ can be self-assembled by a \tas with at most $m$ tile types if and only if $Q$ can be self-assembled by a \tas with at most $m_b$ black tile types, $m_w$ white tile types, and $m_g$ gray tile types.
Moreover, the tuple $(Q,m_b,m_w,m_g)$ can be computed from $P$ in polynomial time.
\end{theorem}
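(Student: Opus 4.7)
My plan is to construct $Q$ by replacing every tile position of $P$ with a fixed-size rectangular ``super-block'' built from black, white, and gray tiles. The super-block at $(x,y)$ propagates the south and west input glues of the original tile through its internal tiles to north and east output glues, while its internal pattern of black/white/gray tiles encodes the original color $P(x,y)$. I would dedicate one color, say gray, to a structural skeleton that delimits super-block boundaries, use black for positions carrying the color-encoding information (written in unary or binary within one distinguished row of the block), and use white for the remaining positions. Unique-color tiles of $P$, including those framing the subpatterns of $P_F$, are expanded into super-blocks built from dedicated unique-glue tiles.

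For the forward direction, given a \tas $T$ self-assembling $P$ with $m = \abs{C(P)}+3$ tile types, I would build a \tas $T'$ for $Q$ by expanding each $t \in T$ into the collection of black, white, and gray tile types needed to fill the super-block encoding $\chi(t)$, with boundary glues that encode $t(S)$, $t(W)$, $t(N)$, and $t(E)$. Since all super-blocks share the same skeleton, the total number of tile types of each of the three colors is linear in $m$; set $m_b$, $m_w$, $m_g$ to be exactly these counts. Polynomial-time computability of $(Q, m_b, m_w, m_g)$ from $(P,m)$ is then immediate.

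The backward direction is the main obstacle. I must show that any \tas $T'$ self-assembling $Q$ and respecting $(m_b, m_w, m_g)$ can be collapsed back to a \tas $T$ for $P$ with at most $m$ tile types. The plan is first to prove a rigidity lemma: under the three tight color bounds, every valid tile assignment of $T'$ decomposes into super-blocks in the canonical positions of $Q$, each super-block internally using the same relative template of tile types. This decomposition assigns a single ``encoded color'' to each super-block, which then defines the sought \tas $T$ for $P$; the $m$-bound on $T$ follows because each original tile type of $T$ corresponds to a fixed signature $(a,b,c)$ of black/white/gray tile types in $T'$, so the color-wise bounds on $T'$ translate into the combined bound $m$ on $T$.

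The delicate technical point, and the reason the reduction starts from the restricted input set $\cI$, is that $m_b$, $m_w$, and $m_g$ must be calibrated simultaneously tight: any slack in one color class would permit a black, white, or gray tile type to be reused across super-blocks of distinct encoded colors, breaking the rigidity. Exploiting the specific structure of the input patterns $P_F$ --- in particular the subpatterns $p$ and $q_1$--$q_5$ and the $L$-shaped frame of unique colors --- I would set the three bounds so that, in any valid assembly of $Q$, the pigeonhole arguments of Lemmas~\ref{lem:p_or} and~\ref{lem:q_or} lift to the super-block level and force exactly the analogues of three or four \tl{or}-gate super-blocks, as needed. Once the rigidity lemma and the translation between $T$ and $T'$ are established, the equivalence asserted by the theorem follows directly.
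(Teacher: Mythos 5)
Your overall architecture coincides with the paper's: blow each position of $P$ up into a supertile whose interior is a rigid black/gray skeleton, encode the color in unary along one row/column, let a white row and column carry the original glues, and prove a rigidity lemma under simultaneously tight per-color bounds so that supertiles collapse back to tiles (the paper's Lemma~\ref{lem:black:gray}, Remark~\ref{rem:controltile}, and Lemma~\ref{lem:st:to:t}). However, your backward-direction accounting contains a genuine flaw: the claim that each tile type of $T$ corresponds to a \emph{fixed disjoint signature} $(a,b,c)$ of black/white/gray tile types, so that the color-wise bounds translate directly into the bound $m$, is false. White tile types \emph{can} be shared between distinct supertiles: the paper's own forward construction (Fig.~\ref{fig:whiteor}) makes the supertiles for $o_2,o_3,o_4$ share the tile types in positions \tl{B_1^*} and \tl{C_1} because they have equal east outputs, which is exactly how four \tl{or}-gate supertiles fit into $16$ white types rather than $20$; and conversely an adversarial \tas can exploit sharing to realize an \emph{extra} distinct supertile from only three additional white types (fresh \tl A, \tl{B_2^*}, \tl{C_2}, reusing an existing \tl{B_1^*}/\tl{C_1} pair). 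Under a disjoint-signature calibration of $m_w$ this either forbids the four \tl{or}-gate supertiles a satisfying assignment needs, or leaves slack that lets $Q_\Theta$ contain more than $m$ distinct supertiles, breaking the ``if'' direction. The paper closes this hole with a position-wise sharing analysis (Lemmas~\ref{lem:white:positions} and~\ref{lem:white}: no sharing at all on $K=\set{\tl A,\tl{B_1^*},\tl{B_2^*},\tl{C_1},\tl{C_2}}$ across colors, sharing on \tl{B_1^*},\tl{C_1} only under equal east outputs) combined with the case analysis lifted from Lemma~\ref{lem:p_or} via Lemma~\ref{lem:st:to:t}, which shows the $16$ residual white types admit $15$, $16$, or $26$ types in the four cases and hence at most four \tl{or} control tiles. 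Your final paragraph gestures at lifting Lemma~\ref{lem:p_or}, but the signature mechanism you propose as the actual counting argument would not survive this analysis.

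A second, smaller gap: you assert the rigidity lemma follows from tightness of the three bounds, but tight bounds alone do not force the skeleton or the counters; the paper must append three dedicated gadget rows and columns to $Q$ (Fig.~\ref{fig:gadget}) whose tiling under $m_b=1$ and $m_g=2k+3$ forces the unique all-\bul black type, the $2k$ unary counter types, and the three types \tl{F_1},\tl{F_2},\tl G --- only after that does the control-tile determinism of Remark~\ref{rem:controltile} hold. Relatedly, your proposal does not treat the seed interface: the paper uses \emph{incomplete} supertiles (lacking the white row and/or column) for the bottom row and left column of $P$, which both makes the $L$-shaped seed attachment work and produces the correction term $-3(w(P)+h(P))$ in $m_w$; without this the calibration of $m_w$ cannot be made exact, and exactness is what your own rigidity argument requires.
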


Lemma~\ref{lem:if} states the ``if part'' and Lemma~\ref{lem:only-if} states the ``only if part'' of Theorem~\ref{thm:reduction}.
Lemma~\ref{lem:poly:time} states that $(Q,m_b,m_w,m_g)$ can be computed from $P$ in polynomial time.

\begin{corollary}\label{cor:mbpats}
\tmbpats is \NP-hard.
\end{corollary}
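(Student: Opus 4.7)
The plan is to derive the corollary as an immediate composition of the polynomial-time reductions already established in the excerpt; there is essentially no new work to do beyond citation. The chain runs \sat $\to$ \modpats $\to$ \tmbpats. The first link is supplied by Theorem~\ref{thm:reduction:pats}, which sends a 3-CNF formula $F$ to the pair $(P_F,\abs{C(P_F)}+3)$; this pair lies in $\cI$ by the very definition of $\cI$, and Lemma~\ref{lem:3sat:poly} guarantees polynomial-time computability, so the preceding corollary (\NP-hardness of \modpats) applies. The second link is exactly the polynomial-time reduction of Theorem~\ref{thm:reduction}: each $(P,m)\in\cI$ is mapped to a tuple $(Q,m_b,m_w,m_g)$, where $Q$ is a $3$-colored pattern, and $P$ admits a \tas with $m$ tile types if and only if $Q$ admits a \tas respecting the individual bounds $m_b, m_w, m_g$.

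Composing these two reductions yields a polynomial-time many-one reduction from \sat to \tmbpats, which proves \NP-hardness. I would also remark, as noted before Corollary~\ref{cor:pats} in Section~\ref{sec:pats}, that \tmbpats itself lies in \NP via a straightforward guess-and-check algorithm, so in fact \tmbpats is \NP-complete.

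There is no genuine obstacle at the level of this corollary: all of the substantive content — constructing the three-colored pattern $Q$ from a given $(P,m)\in\cI$, setting up the per-color bounds, and verifying the forward and backward implications — is carried out inside Theorem~\ref{thm:reduction} by way of Lemmas~\ref{lem:if}, \ref{lem:only-if}, and~\ref{lem:poly:time}. The corollary is just the standard packaging of that theorem together with the \NP-hardness of \modpats into an \NP-hardness statement for \tmbpats, so the proof is a one-line invocation.
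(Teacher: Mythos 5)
Your proposal is correct and matches the paper exactly: the paper derives Corollary~\ref{cor:mbpats} precisely by composing the \NP-hardness of \modpats (itself a corollary of Theorem~\ref{thm:reduction:pats} via Lemma~\ref{lem:3sat:poly}) with the polynomial-time reduction of Theorem~\ref{thm:reduction}, whose substance lives in Lemmas~\ref{lem:if}, \ref{lem:only-if}, and~\ref{lem:poly:time}. Your added remark on \NP-membership via guess-and-check likewise mirrors the observation made in Sect.~\ref{sec:intro}, so nothing is missing.
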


For the remainder of this section, let $(P,m)\in\cI$ be one fixed pair, let $C=C(P)$ and $k = \abs C$. 
We may assume that $C=[k]$ is a subset of the positive integers.
The tile bounds are
\begin{inparaitem}[]
	\item $m_b = 1$ for black tile types,
	\item $m_w = 5 k - 3(w(P)+h(P)) + 14$ for white tile types, and
	\item $m_g = 2 k + 3$ for gray tile types.
\end{inparaitem}
Note that, due to the pattern design in Sect.~\ref{sec:pats}, $h(P) = 6$ is constant.

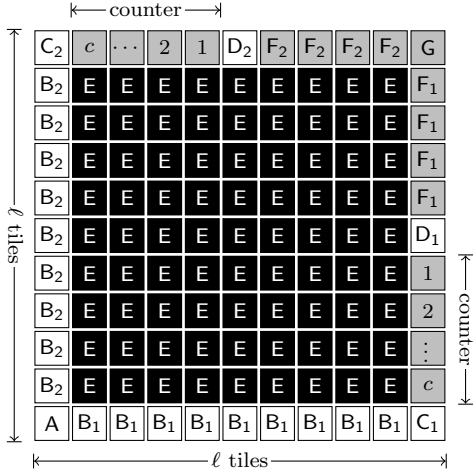
\begin{figure}[ht]
	\centering
	\begin{tikzpicture}[scale=\smallscale]
% white
		\smalltile{0}{0}{white}{\tl{A}}

		\foreach \i in {.5,1,...,4.5}
		{
			\smalltile{\i}{0}{white}{\tl{B_1}}
			\smalltile{0}{\i}{white}{\tl{B_2}}
		}
%		\smalltile{4.5}{0}{white}{\tl{B_1^*}}
%		\smalltile{0}{4.5}{white}{\tl{B_2^*}}
		\smalltile{5}{0}{white}{\tl{C_1}}
		\smalltile{0}{5}{white}{\tl{C_2}}
		\smalltile{2.5}{5}{white}{\tl{D_2}};
		\smalltile{5}{2.5}{white}{\tl{D_1}};

% black
		\foreach \x in {.5,1,...,4.5}
			\foreach \y in {.5,1,...,4.5}
				\smalltile{\x}{\y}{black}{\color{white}\tl{E}};

% gray
		\foreach \i/\l in {.5/c,1/\cdots,1.5/2,2/1}
		{
			\smalltile{\i}{5}{lightgray}{$\l$}
		}
		\foreach \i/\l in {.5/c,1/{\vdots\vphantom{g_g}},1.5/2,2/1}
		{
			\smalltile{5}{\i}{lightgray}{$\l$}
		}
		\foreach \i in {3,3.5,...,4.5}
		{
			\smalltile{\i}{5}{lightgray}{\tl{F_2}}
			\smalltile{5}{\i}{lightgray}{\tl{F_1}}
		}
		
		\smalltile{5}{5}{lightgray}{\tl{G}}

		\begin{scope}[text height=1ex,text depth=0ex,inner sep=1pt,
				font=\footnotesize]
			\draw [|<->|] (.25,5.5) --
				node [fill=white] {counter} (2.25,5.5);
			\draw [|<->|] (5.5,.25) --
				node [fill=white,rotate=-90] {counter} (5.5,2.25);
			\draw [|<->|] (-.25,-.5) --
				node [fill=white] {$\ell$ tiles} (5.25,-.5);
			\draw [|<->|] (-.5,-.25) --
				node [fill=white,rotate=-90] {$\ell$ tiles} (-.5,5.25);
		\end{scope}
	\end{tikzpicture}
	\caption{Black/white/gray supertile which portrays a color $c\in C$.}
	\label{fig:supertile}
\end{figure}

Let $\ell = 5k+8$.
For a color $c\in C$, we define an $\ell\times \ell$ square pattern as shown in Fig.~\ref{fig:supertile}.
We refer to this pattern as well as to its underlying tile assignment as {\em supertile}.
The blowup of such a supertile with a possible tile assignment is shown in Figure~\ref{fig:blowup}.
In contrast to the previous section, the positions in the supertile are labeled which does not mean that the colors or the tiles used to self-assemble the pattern are labeled; the colors are black, white, or gray.
The horizontal and vertical {\em color counters} are the $c$ gray positions in the top row, respectively right column, which are succeeded by a white tile in position \tl{D_2}, respectively \tl{D_1}. 
The color counters illustrate the color $c$ that is {\em portrayed} by the supertile.
The patterns of two supertiles which portray two distinct colors differ only in the place the white tile is positioned in its top row and right column.

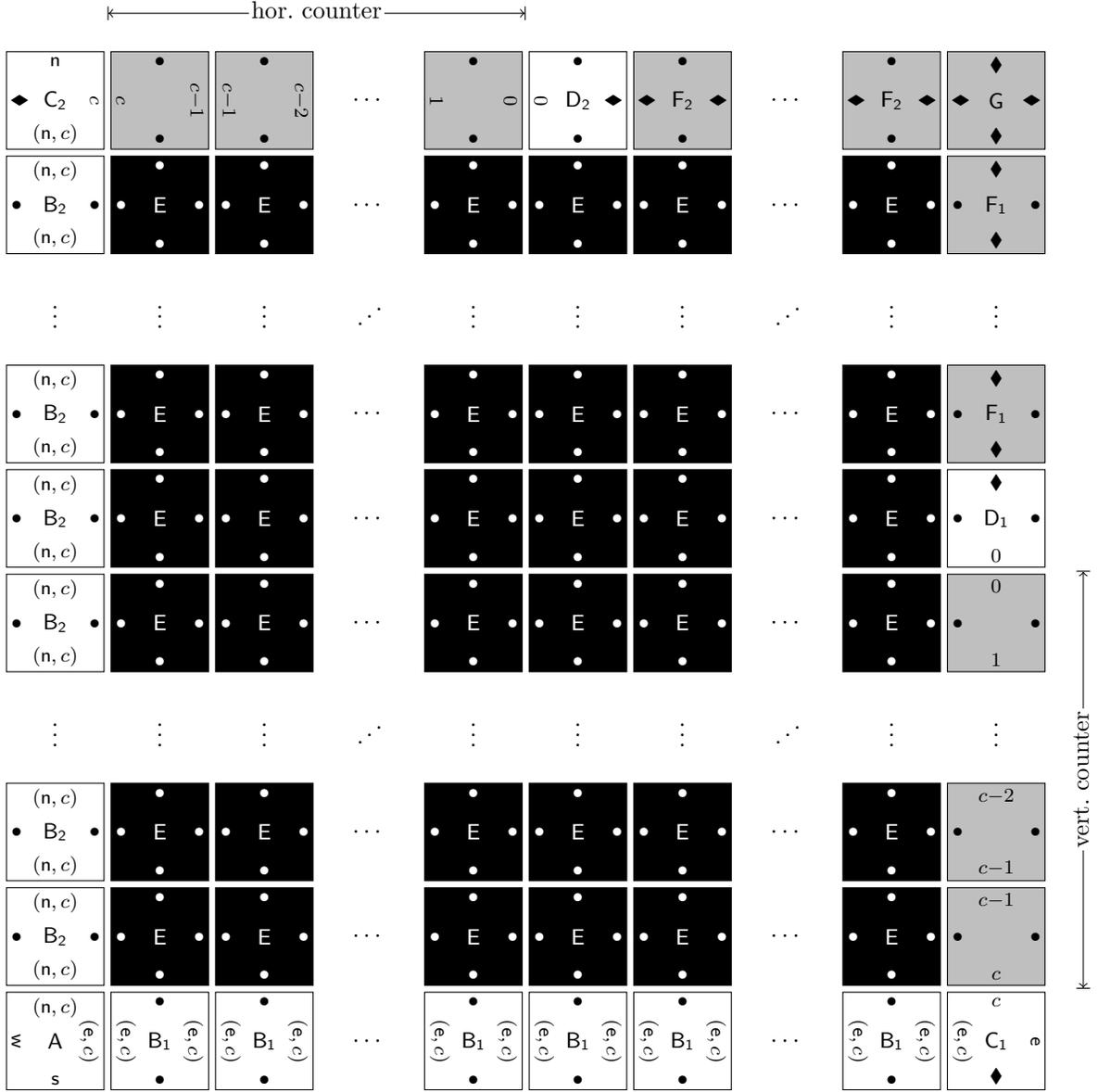
\begin{figure}[p!]
	\centering
	\hspace*{-2cm}%
	\begin{tikzpicture}
			[scale=\fullscale,every node/.style={scale=\fullscalenode}]
% white
		\fulltile{0}{0}{white}{black}{\tl{A}}{(\tl{n},c)}{(\tl{e},c)}{\tl{s}}{\tl{w}}

		\foreach \i in {1.5,3,6,7.5,9,12}
		{
			\fulltile{\i}{0}{white}{black}{\tl{B_1}}{\bul}{(\tl{e},c)}{\bul}{(\tl{e},c)}
			\fulltile{0}{\i}{white}{black}{\tl{B_2}}{(\tl{n},c)}{\bul}{(\tl{n},c)}{\bul}
		}
		\fulltile{13.5}{0}{white}{black}{\tl{C_1}}{c}{\tl{e}}{\dia}{(\tl{e},c)}
		\fulltile{0}{13.5}{white}{black}{\tl{C_2}}{\tl{n}}{c}{(\tl{n},c)}{\dia}
		\fulltile{7.5}{13.5}{white}{black}{\tl{D_2}}{\bul}{\dia}{\bul}{0}
		\fulltile{13.5}{7.5}{white}{black}{\tl{D_1}}{\dia}{\bul}{0}{\bul}

% black
		\foreach \x in {1.5,3,6,7.5,9,12}
			\foreach \y in {1.5,3,6,7.5,9,12}
				\fulltile{\x}{\y}{black}{white}{\tl{E}}{\bul}{\bul}{\bul}{\bul};

% gray
		\fulltile{13.5}{1.5}{lightgray}{black}{}{c{-}1}{\bul}{c}{\bul}
		\fulltile{13.5}{3}{lightgray}{black}{}{c{-}2}{\bul}{c{-}1}{\bul}
		\fulltile{13.5}{6}{lightgray}{black}{}{0}{\bul}{1}{\bul}
		\fulltile{13.5}{9}{lightgray}{black}{\tl{F_1}}{\dia}{\bul}{\dia}{\bul}
		\fulltile{13.5}{12}{lightgray}{black}{\tl{F_1}}{\dia}{\bul}{\dia}{\bul}

		\fulltile{1.5}{13.5}{lightgray}{black}{}{\bul}{c{-}1}{\bul}{c}
		\fulltile{3}{13.5}{lightgray}{black}{}{\bul}{c{-}2}{\bul}{c{-}1}
		\fulltile{6}{13.5}{lightgray}{black}{}{\bul}{0}{\bul}{1}
		\fulltile{9}{13.5}{lightgray}{black}{\tl{F_2}}{\bul}{\dia}{\bul}{\dia}
		\fulltile{12}{13.5}{lightgray}{black}{\tl{F_2}}{\bul}{\dia}{\bul}{\dia}

		\fulltile{13.5}{13.5}{lightgray}{black}{\tl{G}}{\dia}{\dia}{\dia}{\dia}

		\begin{scope}[text height=1ex,text depth=0ex,inner sep=1pt]
			\draw [|<->|] (.75,14.75) --
				node [fill=white] {hor.~counter} (6.75,14.75);
			\draw [|<->|] (14.75,.75) --
				node [fill=white,rotate=90] {vert.~counter} (14.75,6.75);
		\end{scope}
		
		\foreach \i in {0,1.5,3,6,7.5,9,12,13.5}
		{
			\node at (\i,4.5) {$\vdots$};
			\node at (\i,10.5) {$\vdots$};
			\node at (4.5,\i) {$\cdots$};
			\node at (10.5,\i) {$\cdots$};
		}
		\node at (4.5,4.5) {$\iddots$};
		\node at (4.5,10.5) {$\iddots$};
		\node at (10.5,4.5) {$\iddots$};
		\node at (10.5,10.5) {$\iddots$};
	\end{tikzpicture}%
	\hspace*{-2cm}
\caption{
Blowup of one supertile with possible tile assignment, representing a tile with color $c$ and glues $\tl{n}$, $\tl{e}$, $\tl{s}$, and $\tl{w}$ on its north, east, south, and west edges, respectively.}
\label{fig:blowup}
\end{figure}

For colors in the bottom row and left column of the pattern $P$ we use {\em incomplete supertiles}:
a supertile portraying a color $c$ in the bottom row of pattern $P$ lacks the white row with positions \tl A, \tl{B_1}, and \tl{C_1};
a supertile representing a color $c$ in the left column of pattern $P$ lacks the white column with positions \tl A, \tl{B_2}, and \tl{C_2}.
In particular, the supertile portraying color $P(1,1)$ does not contain any of the positions \tl A, \tl{B_1}, \tl{B_2}, \tl{C_1}, and \tl{C_2}.
Recall that all incomplete supertiles portray a color $c$ that is unique in $P$.

\begin{figure}[th]
	\centering
	\begin{tikzpicture}
% bottom right gadget
		\fill [lightgray] (10,-.85) rectangle (10.45,3);
		\fill [black] (10,-.85) rectangle (10.15,.85);
		\fill [black] (10,1) rectangle (10.15,2.85);
		\fill [black] (10.3,-.85) rectangle (10.45,.85);
		\fill [black] (10.3,1) rectangle (10.45,2.85);
		\fill [white] (10.15,.15*6-.85) rectangle (10.3,.15*6-.7);

% top left gadget
		\fill [lightgray] (-.85,6) rectangle (5,6.45);
		\fill [black] (-.85,6) rectangle (.85,6.15);
		\fill [black] (1,6) rectangle (2.85,6.15);
		\fill [black] (3,6) rectangle (4.85,6.15);
		\fill [black] (-.85,6.3) rectangle (.85,6.45);
		\fill [black] (1,6.3) rectangle (2.85,6.45);
		\fill [black] (3,6.3) rectangle (4.85,6.45);
		\fill [white] (.15*6-.85,6.15) rectangle (.15*6-.7,6.3);

% top right gadget
		\fill [lightgray] (6,6) rectangle (10.45,6.45);
		\fill [lightgray] (10,4) rectangle (10.45,6.45);
		\fill [black] (10,4) rectangle (10.15,5.85);
		\fill [black] (10.3,4) rectangle (10.45,5.85);
		\fill [black] (6,6) rectangle (7.85,6.15);
		\fill [black] (8,6) rectangle (9.85,6.15);
		\fill [black] (6,6.3) rectangle (7.85,6.45);
		\fill [black] (8,6.3) rectangle (9.85,6.45);
		\fill [black] (10,6) rectangle (10.15,6.15);
		\fill [black] (10,6.3) rectangle (10.15,6.45);
		\fill [black] (10.3,6) rectangle (10.45,6.15);
		\fill [black] (10.3,6.3) rectangle (10.45,6.45);
		
%		\node [font={\small\sf}] at (6,7.225) {gadget rows};
%		\node [font={\small\sf},rotate=-90] at (11.225,4) {gadget cols};

% bottom left
		\supertileBL{0}{0}{1}{supertile portraying $P(1,1)$}
		\supertileB{2}{0}{2}{supertile portraying $P(2,1)$}
		\supertileB{4}{0}{3}{supertile portraying $P(3,1)$}
		\supertileL{0}{2}{2}{supertile portraying $P(1,2)$}
		\supertile{2}{2}{4}{supertile portraying $P(2,2)$}
		\supertile{4}{2}{3}{supertile portraying $P(3,2)$}

% bottom right
		\supertileB{7}{0}{3}{supertile portraying $P(\tl w{-}1,1)$}
		\supertileB{9}{0}{3}{supertile portraying $P(\tl w,1)$}
		\supertile{7}{2}{5}{supertile portraying $P(\tl w{-}1,2)$}
		\supertile{9}{2}{6}{supertile portraying $P(\tl w,2)$}

% top left
		\supertileL{0}{5}{2}{supertile portraying $P(1,\tl h)$}
		\supertile{2}{5}{6}{supertile portraying $P(2,\tl h)$}
		\supertile{4}{5}{5}{supertile portraying $P(3,\tl h)$}

% top right
		\supertile{7}{5}{1}{supertile portraying $P(\tl w{-}1,\tl h)$}
		\supertile{9}{5}{3}{supertile portraying $P(\tl w,\tl h)$}
		
% dots
		\draw[very thick,dotted] (5.25,0) -- (5.75,0);
		\draw[very thick,dotted] (5.25,2) -- (5.75,2);
		\draw[very thick,dotted] (5.25,5) -- (5.75,5);

		\draw[very thick,dotted] (0,3.25) -- (0,3.75);
		\draw[very thick,dotted] (2,3.25) -- (2,3.75);
		\draw[very thick,dotted] (4,3.25) -- (4,3.75);
		\draw[very thick,dotted] (7,3.25) -- (7,3.75);
		\draw[very thick,dotted] (9,3.25) -- (9,3.75);

		\draw[very thick,dotted] (5.3,3.3) -- (5.7,3.7);
	\end{tikzpicture}
	\caption{Black/white/gray pattern $Q$ defined by the $k$-color pattern $P$ with $\tl w = w(P)$ and $\tl h=h(P)$.}
	\label{fig:patternQ}
\end{figure}
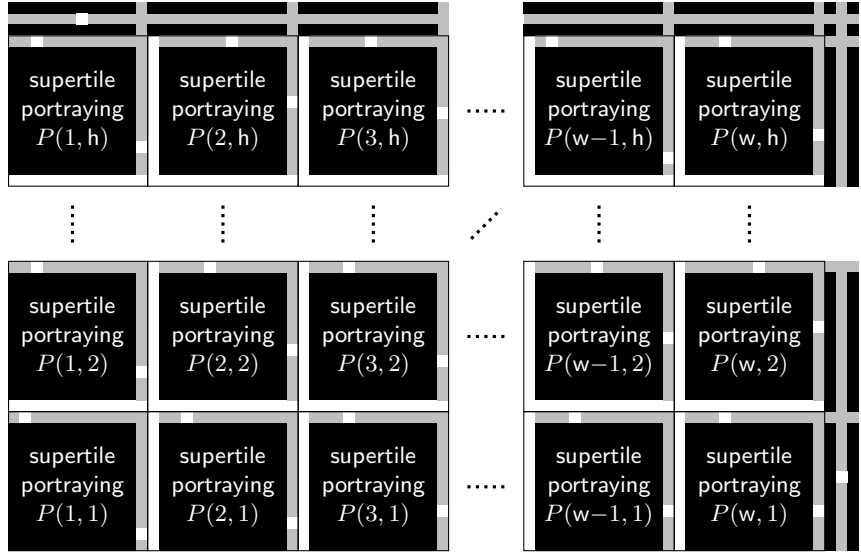

The pattern $Q$ is shown in Fig.~\ref{fig:patternQ}.
By $Q\gen{x,y}$ we denote the pattern of the supertile covering the square area spanned by positions $((x-1)\cdot \ell,(y-1)\cdot \ell)$ and $(x\cdot \ell-1,y\cdot \ell-1)$ in $Q$;
the incomplete supertiles cover one row and/or column less.
The pattern is designed such that supertile $Q\gen{x,y}$ portrays the color $P(x,y)$ for all $x\in[w(P)]$ and $y\in[h(P)]$.
Additionally, $Q$ contains three {\em gadget rows} and three {\em gadget columns} which are explained in Fig.~\ref{fig:gadget}.
The purpose of these gadget rows and columns is to ensure that the color counters can only be implemented in one way when using no more than $m_g$ gray tile types.
All together $Q$ is of dimensions $w(Q) = \ell\cdot w(P) +2$ times $h(Q) = \ell\cdot h(P) +2$.
Obviously, the pattern $Q$ can be computed from $P$ in polynomial time.

\begin{lemma}\label{lem:poly:time}
$(Q,m_b,m_w,m_g)$ can be computed from $P$ in polynomial time.
\end{lemma}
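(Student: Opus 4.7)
The plan is to observe directly that every component of the tuple $(Q, m_b, m_w, m_g)$ has size polynomial in the size of $P$ and can be written down by an obvious procedure. First I would bound the dimensions of $Q$. Recall $\ell = 5k+8$ where $k = \abs{C(P)}$, so $\ell$ is linear in the size of $P$. Since $w(Q) = \ell\cdot w(P) + 2$ and $h(Q) = \ell\cdot h(P) + 2$, and since $h(P) = 6$ is constant by the construction of $P_F$ in Sect.~\ref{sec:pats}, the total number of positions in $Q$ is at most $\oh(k \cdot w(P))$, which is polynomial in the size of $P$.

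Next I would argue that the color at each position of $Q$ can be determined in polynomial (indeed constant, per position) time. Given a position $(X,Y)\in[w(Q)]\times[h(Q)]$, we first decide whether it lies in a supertile $Q\gen{x,y}$ or in one of the three gadget rows or three gadget columns by simple integer division of $X$ and $Y$ by $\ell$. If the position lies inside a supertile portraying color $c = P(x,y)$, the color at that position (black, white, or gray) is determined by the fixed template of Fig.~\ref{fig:supertile}, whose description depends only on $c$ and the local offset within the supertile; this requires a constant-time lookup plus, for the counter cells, a comparison to $c$. Incomplete supertiles in the bottom row and left column are handled by the same lookup with the topmost row and/or leftmost column simply omitted. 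The gadget rows/columns of Fig.~\ref{fig:gadget} are similarly produced by a fixed schema. So the whole coloring of $Q$ is produced in time $\oh(\abs{Q}) = \oh(k\cdot w(P))$.

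Finally I would observe that the integer bounds are trivially computable: $m_b = 1$, $m_w = 5k - 3(w(P)+h(P)) + 14$, and $m_g = 2k+3$ are all explicit arithmetic expressions in $k$, $w(P)$, and $h(P)$, computable in polynomial time (even in logarithmic space) from $P$. Combining these three observations yields a polynomial-time algorithm that outputs $(Q,m_b,m_w,m_g)$ on input $P$, which is exactly the claim of the lemma.

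There is no real obstacle in this proof; the only point where a careless reader might go wrong is forgetting that $h(P)$ is constant for inputs from $\cI$, so I would make this explicit, since without it $\abs{Q}$ would only be polynomial in $\abs{P}$ rather than linear in it. For this reason, a one-sentence proof (``obvious from the construction'') suffices, but I would write out the dimension bound and the per-cell constant-time description in order to make the polynomial-time claim concrete.
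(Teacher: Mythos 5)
Your proof is correct and takes the same route as the paper, whose entire proof is the single sentence ``This is obvious by the design of the pattern''; you have simply made the obvious explicit (polynomial dimensions of $Q$, constant-time per-cell color lookup, and directly computable bounds $m_b,m_w,m_g$). One negligible quibble: polynomiality of $\abs Q$ in $\abs P$ does not actually require $h(P)=6$ to be constant, since $\ell\cdot w(P)\cdot \ell\cdot h(P)$ is polynomial in $\abs P$ regardless --- constancy of $h(P)$ only sharpens the bound, as you yourself note.
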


\begin{proof}
This is obvious by the design of the pattern.
\end{proof}

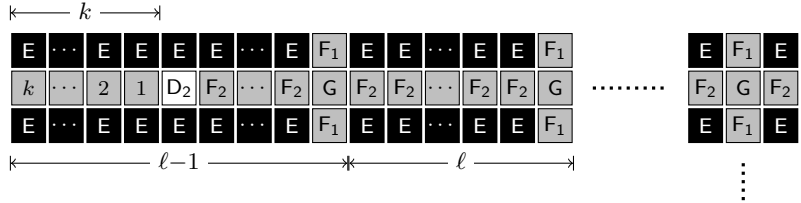
\begin{figure}[th]
	\centering
	\begin{tikzpicture}[scale=\smallscale]
% horizontal
% black
		\foreach \i in {0,1,1.5,2,2.5,3.5,4.5,5,6,6.5,9,10}
		{
			\smalltile{\i}{-.5}{black}{\color{white}\tl{E}}
			\smalltile{\i}{.5}{black}{\color{white}\tl{E}}
		}
		\foreach \i in {.5,3,5.5}
		{
			\smalltile{\i}{-.5}{black}{\color{white}$\cdots$}
			\smalltile{\i}{.5}{black}{\color{white}$\cdots$}
		}

% gray
		\foreach \i/\l in {0/k,.5/{\cdots},1/2,1.5/1,
			2.5/\tl{F_2},3/\cdots,3.5/\tl{F_2},4/\tl{G},
			4.5/\tl{F_2},5/\tl{F_2},5.5/\cdots,6/\tl{F_2},6.5/\tl{F_2},7/\tl{G},
			9/\tl{F_2},9.5/\tl{G},10/\tl{F_2}}
		{
			\smalltile{\i}{0}{lightgray}{$\l$}
		}
		\foreach \i in {4,7,9.5}
		{
			\smalltile{\i}{-.5}{lightgray}{\tl{F_1}}
			\smalltile{\i}{.5}{lightgray}{\tl{F_1}}
		}
		
% white
		\smalltile{2}{0}{white}{\tl{D_2}};

% text
		\begin{scope}[text height=1ex,text depth=0ex,inner sep=3pt,
				font=\small]
			\draw [|<->|] (-.25,1) --
				node [fill=white] {$k$} (1.75,1);
			\draw [|<->|] (-.25,-1) --
				node [fill=white] {$\ell{-}1$} (4.25,-1);
			\draw [|<->|] (4.25,-1) --
				node [fill=white] {$\ell$} (7.25,-1);
		\end{scope}
		
% dots
		\draw[very thick,dotted] (7.5,0) -- (8.5,0);
		\draw[very thick,dotted] (9.5,-1) -- (9.5,-1.5);
	\end{tikzpicture}
	\caption{The gadget rows on the north border of the pattern $Q$, the gadget columns are symmetrical:
	the middle row (\resp column) contains gray tiles except for one white tile in position $k+1$;
	the upper and lower rows (\resp left and right columns) contain gray tiles in positions above the gray column (\resp right of the gray row) of a supertile, the other tiles are black.}
	\label{fig:gadget}
\end{figure}

For a \tas $\Theta$ which self-assembles $Q$, we extend our previous notion such that $Q_\Theta\gen{x,y}$ denotes the tile assignment of supertile $Q\gen{x,y}$ given by $\Theta$.
In the following, we will prove properties of such a \tas $\Theta$.
Our first observation is about the black and gray tile types plus two of the white tile types.

\begin{figure}[th]
	\centering
	\begin{tikzpicture}
			[scale=\fullscale,every node/.style={scale=\fullscalenode}]
		\fulltile{0}{0}{black}{white}{\tl E}{\bullet}{\bullet}{\bullet}{\bullet}
		\fulltile{3.5}{0}{white}{black}{\tl{D_1}}%
			{\blacklozenge}{\bullet}{0}{\bullet}
		\fulltile{5.25}{0}{white}{black}{\tl{D_2}}%
			{\bullet}{\blacklozenge}{\bullet}{0}

	\begin{scope}[yshift=-1.75cm]
%		\node [anchor=west] at (-.75,1) {$\forall i \in \set{1,\ldots,k}$:};
		\fulltile{0}{0}{lightgray}{black}{}{i{-}1}{\bullet}{i}{\bullet}
		\fulltile{1.75}{0}{lightgray}{black}{}{\bullet}{i{-}1}{\bullet}{i}
		\fulltile{3.5}{0}{lightgray}{black}{\tl{F_1}}%
			{\blacklozenge}{\bullet}{\blacklozenge}{\bullet}
		\fulltile{5.25}{0}{lightgray}{black}{\tl{F_2}}%
			{\bullet}{\blacklozenge}{\bullet}{\blacklozenge}
		\fulltile{7}{0}{lightgray}{black}{\tl G}%
			{\blacklozenge}{\blacklozenge}{\blacklozenge}{\blacklozenge}
	\end{scope}
	\end{tikzpicture}
	\caption{The black tile type, two of the white tile types, and all gray tile types: the labeled tile types are used in the corresponding positions of each supertile and the gadget pattern; the unlabeled tile types, called {\em counter tiles} for $i\in[k]$, implement the vertical and horizontal color counters.}
	\label{fig:black:gray}
\end{figure}
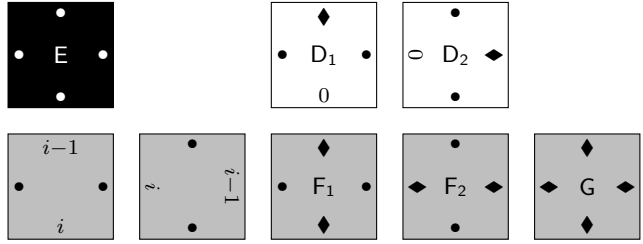

\begin{lemma}\label{lem:black:gray}
Let $\Theta$ be a \tas which self-assembles the pattern $Q$ using at most $m_b=1$ black tile types and $m_g=2k+3$ gray tile types.
The black and gray tile types in $\Theta$ are of the form shown in Fig.~\ref{fig:black:gray} and $\Theta$ contains two white tiles of the form shown in the figure.
In every supertile, the horizontal and vertical color counters are implemented by a subset of the counter tile types and for a position \tl{E}, \tl{D_1}, \tl{D_2}, \tl{F_1}, \tl{F_2}, or \tl G the correspondingly labeled tile type is used.
Furthermore, the glues $\bul,\dia,0,1,\ldots, k$ are all distinct.
\end{lemma}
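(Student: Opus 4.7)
My plan is to attack the lemma by separately constraining the black, gray, and the two indicated white tile types, and then to derive the distinctness of the glues from the \tas property. First, since $m_b=1$, a single tile type $e\in\Theta$ realizes every \tl{E} position. Because the interior of each supertile contains a large block of \tl{E}-positions that are pairwise horizontally and vertically adjacent, the matching condition forces $e(N)=e(S)$ and $e(E)=e(W)$; by Remark~\ref{rem:glues} I may relabel both pairs to a common symbol \bul, so that $e=(\bul,\bul,\bul,\bul)$.

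Next I would exploit the middle gadget row and middle gadget column. By construction, the middle gadget row consists, in each super-column of width $\ell$, of $k$ gray cells (the counter values $k,k{-}1,\ldots,1$), one white cell at position $k+1$, gray \tl{F_2}-cells from position $k+2$ to $\ell-1$, and a gray \tl{G}-cell at position $\ell$. The lower gadget row just below supplies a black tile beneath every cell except the \tl{G}-columns, where it supplies a gray \tl{F_1}-type tile; consequently the south inputs of the middle gadget row are \bul on the counter, white, and \tl{F_2} cells and \dia on the \tl{G}-cell. A left-to-right propagation along this row forces $\Theta$ to contain $k$ distinct horizontal counter tile types of the form $(\bul,i{-}1,\bul,i)$ for $i\in[k]$, together with tile types $\tl{F_2}=(\bul,\dia,\bul,\dia)$ and $\tl{G}=(\dia,\dia,\dia,\dia)$. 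A symmetric argument on the middle gadget column yields $k$ vertical counter tile types $(i{-}1,\bul,i,\bul)$ and $\tl{F_1}=(\dia,\bul,\dia,\bul)$ (reusing \tl{G}). These total $2k+3=m_g$, so by the gray tile bound these are exactly the gray tile types in $\Theta$, and the correspondingly-labeled positions in every supertile must be assembled by the same tile types.

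Finally I would read off the distinctness of the glues $\bul,\dia,0,1,\ldots,k$ from the \tas property. The $(W,S)$ inputs of the identified black and gray tile types are $(\bul,\bul)$, $(i,\bul)$ and $(\bul,i)$ for $i\in[k]$, $(\bul,\dia)$, $(\dia,\bul)$, and $(\dia,\dia)$; any identification among the symbols $\bul,\dia,1,\ldots,k$ would collapse two of these input pairs, e.g.\ $\bul=i$ would give the horizontal and vertical counter tiles with value $i$ the common input $(\bul,\bul)$, contradicting \tas. For the remaining glue $0$, I would observe that $0$ appears as the east (resp.\ north) output of the horizontal (resp.\ vertical) counter with $i=1$, and is therefore the west (resp.\ south) input of the white \tl{D_2} (resp.\ \tl{D_1}) tile placed at position $k+1$ of every supertile's top row (resp.\ right column); these two white tile types with inputs $(0,\bul)$ and $(\bul,0)$ must differ from every gray and black tile type, which forces $0$ to be distinct from every one of $\bul,\dia,1,\ldots,k$. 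The main obstacle is the forced-propagation argument for the gadget row and column: I must verify that the prescribed \bul/\dia pattern of inputs really does enforce $k$ distinct counter tile types rather than some more economical implementation, and I must also check that the shorter counter segments appearing inside supertiles that portray colors $c<k$ consistently use a subset of these same tile types.
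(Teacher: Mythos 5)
Your skeleton coincides with the paper's proof (the black square forces a single \bul-uniform black type; the gadget row and column pin down the gray types; distinctness of glues comes from the \tas input-uniqueness at the end), but the step you yourself flag as ``the main obstacle'' is precisely the core of the paper's argument, and your proposal asserts its conclusion without supplying it. The paper closes it with two short arguments that are missing from your text. First, distinctness of the $k$ counter types: every tile in the first $k$ gray positions of the middle gadget row has south input \bul, so if one gray type occurred twice among them, the sequence of west/east glues would become periodic from the first repetition onward and a white tile could never appear at position $k+1$; hence these $k$ positions use $k$ pairwise distinct gray types with pairwise distinct east outputs. Second, a pigeonhole on the gray budget: the white tile at position $k+1$ is succeeded by $4k+6$ gray positions with south input \bul, and since $4k+6 > m_g = 2k+3$, every type used there repeats and hence lies on a cycle, which makes all of them distinct from the counter types (a counter type reaches the white tile within $k$ steps, contradicting the gray colors); adding the $k$ vertical counter types and at least one type each for \tl{F_1} and \tl{F_2} then leaves exactly \emph{one} gray type for position \tl G. Only this count rules out the ``more economical implementation'' you worry about: for instance, an alternating two-type implementation of the \tl{F_2} stretch is perfectly consistent with local matching and is excluded solely because it would overdraw $m_g$; your direct claim that propagation forces a single type $(\bul,\dia,\bul,\dia)$ does not follow without it.

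There is also a circularity in your setup: you declare the south input of the middle-row \tl G cell to be \dia, i.e., you assume the \tl{F_1} type in the lower gadget row has north output \dia before any \tl{F_1} type has been determined. The paper avoids this by treating \tl G last: once \tl{F_1} and \tl{F_2} are pinned down by the budget count, the tile $\tau$ at position \tl G has both inputs \dia, and $\tau(E)=\dia$ because an east output in $\set{0,1,\ldots,k}$ would restart the horizontal counter and force a white tile among the next $k+1$ positions of the middle gadget row, which are gray; symmetrically $\tau(N)=\dia$. Your closing steps --- deriving the distinctness of $\bul,\dia,0,1,\ldots,k$ from the \tas property via collapsing input pairs, and noting that the (possibly shorter) counters inside supertiles must reuse the gadget's counter types --- are correct, and the latter is immediate once the gray budget is exhausted; but both presuppose the complete inventory of black and gray types, which your proposal asserts rather than proves.
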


\begin{proof}
In every supertile we find a black square that consists of more than just one tile, therefore, the sole black tile type must have the same north and south glues, respectively east and west glues.
We may assume that \bul is the glue on all four edges of the black tile type.

Now, we have a look at the tile assignment of the three gadget rows; see Fig.~\ref{fig:gadget}.
In the middle row, the first $k$ gray tiles are succeeded by a white tile.
As the south input of all $k$ tiles is the north glue \bul of the black tiles, if the same tile type would be used in two of these $k$ positions, there were a cycle in the gray tile types used which would be repeated over and over and no white tile could occur after $k$ gray tiles; furthermore, all the east outputs (\resp west inputs) of these tiles are distinct.
Thus, these tiles implement a horizontal counter which is capable of counting to $k$, or more intuitively, counting from $k$ downto $1$.
Next a white tile is used whose west input is distinct from its east output; otherwise, the same white tile type would be used again.
As the white tile is succeeded by $4k+6$ gray tiles with south input \bul, all gray tiles used in this row have to occur at least twice, therefore, all tiles, used in these positions, have to be distinct from the tiles implementing the horizontal counter; we use the label \dia for the east output of the white tile in position \tl{D_2} and the west input of the gray tile in left-most position \tl{F_2}.
Finally, observe that all tiles described so far have south and north glue \bul and that their east and west glues are distinct from \bul.

From the three gadget columns we obtain analogous results for the vertical counter, the white tile in position \tl{D_1}, and the gray tiles in the first $4k+6$ positions \tl{F_1}.
None of the tiles in the middle column can coincide with a tile in the middle row since they have the glue \bul on different inputs.

Since we need at least one tile for each of the positions \tl{F_1} and \tl{F_2}, there is only one gray tile type left in $\Theta$.
Any tile at position \tl G has north and east outputs from $\dia, 0,1,\ldots, k$ as these edges are adjacent to gray tiles.
We cannot use a tile which we have described so far for any of the positions \tl G.
Therefore, all tiles in positions \tl{F_1} (\resp \tl{F_2}) in the gadget are of the same type whose north and south (\resp east and west) glues equal.
All positions \tl G share the same tile type $\tau$ whose south and west inputs are \dia.
If the east output of $\tau$ were one of $0,1,\ldots, k$, then a white tile were among the $k+1$ tiles succeeding the left-most position \tl G in the middle gadget row; hence, $\tau(E) = \dia$ and, symmetrically, $\tau(N) = \dia$.

All supertiles have to share the same black and gray tile types as there are no other black and gray tile types in $\Theta$.
The color counters have to be implemented by the counter tile types.
As the south and west inputs of the tiles in positions \tl{D_1} and \tl{D_2} are determined by its gray and black neighbors, it is also clear that only the described white tiles can be used in these positions.
Now, the types of the tiles in positions \tl{F_1}, \tl{F_2}, and \tl G are also determined by their inputs.
\end{proof}

\begin{remark}\label{rem:controltile}
Consider a \tas $\Theta$ that self-assembles the pattern $Q$ using most $m_b$ black tile types and $m_g$ gray tiletypes.
If we have a look at the tile assignment of the black square plus the gray column and row in a supertile, we see that this block has inputs \bul on all edges except for edges where the color counters are initialized and it has outputs \bul on all edges, except for its right-most and top-most output edges which are \dia.
This means that all information on how to initialize the color counters has to be carried through the white lines and rows, that are, the tiles in positions \tl A, \tl{B_1}, \tl{B_2}, \tl{C_1}, \tl{C_2}.
Moreover, the tile in position \tl A is the only one with non-generic input from other supertiles.
This tile fully determines the tile assignment of the supertile and can be seen as the {\em control tile} or {\em seed} of the supertile.
Henceforth, for a supertile $s=Q_\Theta\gen{x,y}$ we extend our notion of glues such that $s(S)$ and $s(W)$ denote the south and west input of the tile in position \tl A, respectively, $s(N)$ and $s(E)$ denote the north and east output of the tiles in positions \tl{C_2} and \tl{C_1}, respectively.
For incomplete supertiles only one of $s(N)$ or $s(E)$ is defined.

Two supertiles in $Q_\Theta$ are considered distinct if their tile assignment differs in at least one position.
By the observations above, two complete supertiles are distinct if and only if their control tiles are of distinct types; this is equivalent to require that  the inputs of the two supertiles differ. 
Since incomplete supertiles portray unique colors in $P$, they are distinct from any supertile in $Q_\Theta$ but itself.
\end{remark}

There is some flexibility in how the white tile types are implemented in a \tas $\Theta$ which self-assembles $Q$.
Let us present one possibility which proves the ``only if part'' of Theorem~\ref{thm:reduction}.

\begin{lemma}\label{lem:only-if}
If $P$ can be self-assembled by a \tas $T$ with $m$ tile types, then $Q$ can be self-assembled by a \tas $\Theta$ using $m_b$ black tile types, $m_w$ white tile types, and $m_g$ gray tile types.
\end{lemma}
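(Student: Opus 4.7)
The plan is to build $\Theta$ explicitly, reusing the forced black and gray tile types from Lemma~\ref{lem:black:gray} (which already give $m_b=1$ and $m_g=2k+3$) and designing the white tile types to encode $T$. The central idea is that the $A$-tile of a supertile (its control tile, in the sense of Remark~\ref{rem:controltile}) mirrors one tile type of $T$: for each $\tau\in T$ with color $c=\chi(\tau)$, I would introduce an $A$-tile with south input $\tau(S)$, west input $\tau(W)$, north output $(\tau(N),c)$, and east output $(\tau(E),c)$, so that the native glue of $T$ and the color $c$ are carried together on a single enriched pair-valued glue. These enriched outputs are propagated by $B_1$-tiles eastward along the bottom white row and by $B_2$-tiles northward along the left white column, and are consumed by the $C_1$- and $C_2$-tiles, which emit the native glue $\tau(E)$ (\resp $\tau(N)$) outward to the neighboring supertile's $A$-tile and initialize the vertical (\resp horizontal) color counter at value $c$.

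Under this scheme, one $B_1$-tile and one $C_1$-tile suffice per distinct $(e,c)$ pair arising from the $A$-tiles, and symmetrically for $B_2$ and $C_2$. Supertiles in the top row or rightmost column of $Q$ abut the black gadget and therefore require their $C_2$- or $C_1$-tiles to emit the generic $\bul$ glue at the outer edge; since every position on the top row and rightmost column of $P$ carries a unique color by the design of $P_F$, these boundary $C$-tiles can be specialized to those unique colors without any conflict. Incomplete supertiles in the bottom row or leftmost column of $Q$ are treated analogously: each such supertile (apart from the corner $Q\gen{1,1}$, which needs nothing extra) contributes two dedicated white tiles (a $B_2$ and a $C_2$, or a $B_1$ and a $C_1$), and the necessary south or west inputs at their base are supplied by suitably chosen glues on the seed $\sigma$. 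Finally include the generic white tiles $D_1$ and $D_2$ from Fig.~\ref{fig:black:gray}.

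Two verifications remain: that $\Theta$ is a \tas self-assembling $Q$, and that the white tile count equals $m_w=5k-3(w(P)+h(P))+14$. Consistency is immediate from the flow of glues depicted in Fig.~\ref{fig:blowup} together with the \tas property of $T$ and the use of disjoint glue alphabets for the various roles ($\bul$, $\dia$, the counter values $0,1,\ldots,k$, the pair-valued enriched glues, and the native glues inherited from $T$). The count is a bookkeeping exercise using the structure of $T$ exposed in Section~\ref{sec:pats}: each of the $k-w(P)-h(P)$ non-$\Cor$ colors used at complete-supertile positions contributes $5$ white tile types (one each of $A, B_1, B_2, C_1, C_2$); the color $\Cor$ contributes $4+2+4+2+4=16$ types, since its $4$ tile types produce $2$ distinct east outputs and $4$ distinct north outputs; each of the $w(P)+h(P)-2$ non-corner unique colors on the bottom row and leftmost column contributes $2$ types; and $D_1, D_2$ contribute $2$. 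These sum to $5(k-w(P)-h(P))+16+2(w(P)+h(P)-2)+2 = 5k-3(w(P)+h(P))+14$, as required. The main obstacle, though essentially clerical, is carefully handling the special cases of top-row, rightmost-column, and corner supertiles so that no tile type is double counted and the \tas property is preserved throughout.
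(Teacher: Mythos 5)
Your construction is essentially identical to the paper's proof: the same enriched pair-valued glues $(\tau(N),c)$ and $(\tau(E),c)$ on the \tl A-tile propagated via \tl{B_1}/\tl{B_2} and consumed by \tl{C_1}/\tl{C_2} (Fig.~\ref{fig:white1}), the same sharing of \tl{B_1}/\tl{C_1} types among the \tl{or}-gate tiles with equal east outputs yielding $16$ types (Fig.~\ref{fig:whiteor}), the same treatment of incomplete supertiles, unique boundary colors with $\bul$ outputs, and the seed, and your count reproduces $m_w=5k-3(w(P)+h(P))+14$ exactly as in the paper. The argument is correct as proposed.
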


\begin{proof}
Let the \tas $\Theta$ contain the tile types given in Fig.~\ref{fig:black:gray}.
By Lemma~\ref{lem:black:gray}, the tiles of these types can self-assemble the gadget pattern and the black square plus color counters in every supertile.
Recall from Lemma~\ref{lem:q_or} that $T$ contains four tile types for the \tl{or}-gate and only one tile type for every other color in $C$.

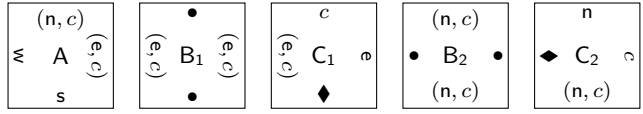
\begin{figure}[th]
	\centering
	\begin{tikzpicture}
			[scale=\fullscale,every node/.style={scale=\fullscalenode}]
		\fulltile{0}{0}{white}{black}{\tl A}{(\tl n,c)}{(\tl e,c)}{\tl s}{\tl w}
		\fulltile{1.75}{0}{white}{black}{\tl{B_1}}{\bul}{(\tl e,c)}{\bul}{(\tl e,c)}
		\fulltile{3.5}{0}{white}{black}{\tl{C_1}}{c}{\tl e}{\dia}{(\tl e,c)}
		\fulltile{5.25}{0}{white}{black}{\tl{B_2}}{(\tl n,c)}{\bul}{(\tl n,c)}{\bul}
		\fulltile{7}{0}{white}{black}{\tl{C_2}}{\tl n}{c}{(\tl n,c)}{\dia}
	\end{tikzpicture}
	\caption{White tile types for the supertile portraying a color $c\in C$, except for the \tl{or}-gate, where $t\in T$ with $c=\chi(t)$, $\tl n =t(N)$, $\tl e = t(E)$, $\tl s=t(S)$, and $\tl w = t(W)$.}
	\label{fig:white1}
\end{figure}

First, consider $c\in C\sm\set{\Cor}$ and let $t\in T$ be the tile type with color $c$.
We use the tile types in Fig.~\ref{fig:white1} for the supertile portraying $c$.
Note that none of the five tile types share the same inputs.
The labels on the them depict the positions in the supertile where each tile is used.
We do not need any of these white tile types for the incomplete supertile representing the color $P(1,1)$.
For the other incomplete supertiles we only need two of these tile types.
Recall that, by design, a color which is portrayed by an incomplete supertile in $Q$ is unique in $P$.
This amounts to $5 \cdot (\abs{C}-1) - 3 \cdot ( h(P)+ w(P)) + 1$ white tile types in $\Theta$ for all the supertiles portraying colors in $C\sm\set{\Cor}$.
We have one remark for the colors in the top row and right column of $P$:
since these colors do not occur in any other position of the pattern and the north and/or east glues of the respective tiles are not used in $P$, we may assume that all these glues are \bul; this allows for the proper attachment of the gadget pattern.

\begin{figure}[th]
	\centering
	\begin{tikzpicture}
			[scale=\fullscale,every node/.style={scale=\fullscalenode}]
		\node [anchor=east] at (-1,0) {$o_1$:};
		\fulltile{0}{0}{white}{black}{\tl A}{(\tl A,c)}{(0,c)}{0}{0}
		\fulltile{1.75}{0}{white}{black}{\tl{B_2}}{(\tl A,c)}{\bul}{(\tl A,c)}{\bul}
		\fulltile{3.5}{0}{white}{black}{\tl{C_2}}{\tl A}{c}{(\tl A,c)}{\dia}

		\fulltile{5.5}{0}{white}{black}{\tl{B_1}}{\bul}{(0,c)}{\bul}{(0,c)}
		\fulltile{7.25}{0}{white}{black}{\tl{C_1}}{c}{0}{\dia}{(0,c)}

	\begin{scope}[yshift=-1.75cm]
		\node [anchor=east] at (-1,0) {$o_2$:};
		\fulltile{0}{0}{white}{black}{\tl A}{(\tl B,c)}{(1,c)}{1}{0}
		\fulltile{1.75}{0}{white}{black}{\tl{B_2}}{(\tl B,c)}{\bul}{(\tl B,c)}{\bul}
		\fulltile{3.5}{0}{white}{black}{\tl{C_2}}{\tl B}{c}{(\tl B,c)}{\dia}
	\end{scope}

	\begin{scope}[yshift=-3.5cm]
		\node [anchor=east] at (-1,0) {$o_3$:};
		\fulltile{0}{0}{white}{black}{\tl A}{(\tl C,c)}{(1,c)}{0}{1}
		\fulltile{1.75}{0}{white}{black}{\tl{B_2}}{(\tl C,c)}{\bul}{(\tl C,c)}{\bul}
		\fulltile{3.5}{0}{white}{black}{\tl{C_2}}{\tl C}{c}{(\tl C,c)}{\dia}
		\fulltile{5.5}{0}{white}{black}{\tl{B_1}}{\bul}{(1,c)}{\bul}{(1,c)}
		\fulltile{7.25}{0}{white}{black}{\tl{C_1}}{c}{1}{\dia}{(1,c)}
	\end{scope}

	\begin{scope}[yshift=-5.25cm]
		\node [anchor=east] at (-1,0) {$o_4$:};
		\fulltile{0}{0}{white}{black}{\tl A}{(\tl D,c)}{(1,c)}{1}{1}
		\fulltile{1.75}{0}{white}{black}{\tl{B_2}}{(\tl D,c)}{\bul}{(\tl D,c)}{\bul}
		\fulltile{3.5}{0}{white}{black}{\tl{C_2}}{\tl D}{c}{(\tl D,c)}{\dia}
	\end{scope}
	
	\draw [decoration={brace,amplitude=\fullscale*6pt},decorate] (4.5cm-3pt,-1cm) -- (4.5cm-\fullscale*3pt,-6cm);

	\end{tikzpicture}
	\caption{White tile types for supertiles portraying the \tl{or}-gate	where $o_1,o_2,o_3,o_4\in T$ are defined in Fig.~\ref{fig:or-tiles}.}
	\label{fig:whiteor}
\end{figure}
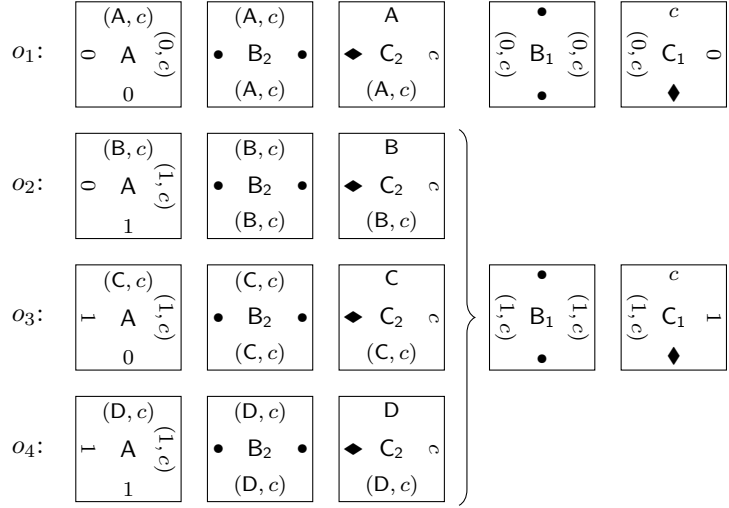

Now, consider $c = \Cor$.
Recall from Lemma~\ref{lem:q_or} that there are $o_1,o_2,o_3,o_4\in T$ with color $c$ as depicted in Fig.~\ref{fig:or-tiles}.
We use the 16 white tile types in Fig.~\ref{fig:whiteor} for the four supertiles portraying $c$.
The supertiles for $o_2$, $o_3$, and $o_4$ share tile types in positions \tl{B_1} and \tl{C_1}, as labelled.
Note that the inputs of the 16 tile types are mutually distinct.

Let $\Theta$ contain all the white tile types we have defined plus the two white tile types from Fig.~\ref{fig:black:gray} and note that the white tile types add up to 
\[
	m_w = 5 \cdot \abs{C} - 3 \cdot ( h(P)+ w(P)) + 14
\]
as desired.
If two distinct tile types in $\Theta$ had the same inputs, it had to be two tile types for position \tl A which implies that two distinct tile types in $T$ would have same inputs as well; thus, $\Theta$ is a properly defined \tas.
The $L$-shaped seed of $\Theta$ is defined such that the incomplete supertiles for the bottom row and left column, as well as the gadgets, can properly attach.

By Remark~\ref{rem:controltile}, it is clear that, starting from its control tile (or the $L$-shaped seed), every supertile properly self-assembles as long as its west and south neighboring supertiles are present.
Let $Q'$ be the pattern that is self-assembled by $\Theta$.
Using induction over $x$ and $y$, the supertile $s=Q'_\Theta\gen{x,y}$ represents the tile $t=P_T(x,y)$ because the respective glues of $s$ and $t$ coincide, for all $x\in[w(P)]$ and $y\in[h(P)]$.
Furthermore, by design of the supertiles in the top row and right column of $Q'$, the gadget rows and columns can self-assemble.
We conclude that $\Theta$ self-assembles $Q = Q'$.
\end{proof}

For the converse implication of Theorem~\ref{thm:reduction}, let us show how to obtain a \tas that self-assembles $P$ from the supertiles in $Q_\Theta$.
The following result follows from the bijection between supertiles in $Q_\Theta$ and tiles in $P_T$.

\begin{lemma}\label{lem:st:to:t}
Let $\Theta$ be a \tas which self-assembles $Q$ using at most $m_b$ black tile types and $m_g$ gray tile types, and let 
\[
	S=\sett{Q_\Theta\gen{x,y}}{x\in[w(P)],y\in[h(P)]}
\]
be the set of all distinct supertiles in $Q_\Theta$.
There exists a \tas $T$ with $\abs S$ tile types which self-assembles $P$ such that for each supertile $s\in S$ there exists a tile type $t_s\in T$ with the same glues on the respective edges and $s$ portrays the color of $t_s$.
For an incomplete supertile the statement holds for the defined glue.
\end{lemma}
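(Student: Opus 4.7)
The plan is to construct $T$ explicitly by introducing one tile type per distinct supertile in $S$. For each $s\in S$, define a tile type $t_s$ whose color is the color that $s$ portrays (well-defined by Lemma~\ref{lem:black:gray}, since the color counters of $s$ determine this color uniquely), and whose glues are copied directly from the supertile boundary glues as set up in Remark~\ref{rem:controltile}: $t_s(S) = s(S)$ and $t_s(W) = s(W)$ come from the south and west inputs of the control tile of $s$, while $t_s(N) = s(N)$ and $t_s(E) = s(E)$ come from the north output of the tile in position $\tl{C_2}$ and the east output of the tile in position $\tl{C_1}$. For incomplete supertiles (those along the bottom row or left column of $Q$), only the glues that are actually defined are copied over; the undefined outputs may be chosen fresh and unique, which is harmless because the corresponding position of $P$ carries a unique color anyway.

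Next I would verify that $T$ is indeed a \tas. By Remark~\ref{rem:controltile}, two distinct complete supertiles in $Q_\Theta$ have control tiles of different types, hence differ in at least one of the two inputs $s(S), s(W)$, so the corresponding $t_s$ differ on inputs. Incomplete supertiles portray unique colors in $P$, which forces them into unique border positions of $P$, so their inputs can be distinguished from all other tile types (using Remark~\ref{rem:glues} to relabel glues if needed). The $L$-shaped seed of $T$ is then read off from the seed of $\Theta$: the north glue on column $(x,0)$ of the seed of $T$ is the glue carried by $\Theta$ across the horizontal interface between its seed (together with the gadget row/column pattern) and the bottom row of supertiles, restricted to the column where supertile $Q_\Theta\gen{x,1}$ sits; similarly for east glues on the left column.

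I would then show that $T$ self-assembles $P$ by induction on positions $(x,y)\in[w(P)]\times[h(P)]$, proving that the tile $f(x,y):= t_{Q_\Theta\gen{x,y}}$ forms a valid tile assignment and that $\chi\circ f = P$. The adjacency conditions $f(x,y)(W)=f(x-1,y)(E)$ and $f(x,y)(S)=f(x,y-1)(N)$ hold because, by Remark~\ref{rem:controltile}, the glue labelled $s(W)$ of a complete supertile coincides with the single glue used on the vertical interface between $Q_\Theta\gen{x-1,y}$ and $Q_\Theta\gen{x,y}$, which in turn is labelled $Q_\Theta\gen{x-1,y}(E)$; analogously for horizontal interfaces. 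The color equation $\chi(f(x,y))=P(x,y)$ is immediate from the design of $Q$ in Fig.~\ref{fig:patternQ}, which places a supertile portraying $P(x,y)$ at position $\gen{x,y}$, together with our definition $\chi(t_s)=$ color portrayed by $s$.

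The statement is essentially bookkeeping on top of Lemma~\ref{lem:black:gray} and Remark~\ref{rem:controltile}, so I do not anticipate a substantial obstacle; the only mild care required is the treatment of the incomplete supertiles along the left column and bottom row of $Q$, where only one of $s(N), s(E)$ is defined. This is handled by the observation that these supertiles correspond to positions of $P$ holding unique colors, so the missing outputs of $t_s$ may be assigned fresh glues without creating input collisions among the tile types of $T$.
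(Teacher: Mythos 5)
Your proposal is correct and follows essentially the same route as the paper: one tile type per distinct supertile with glues copied via Remark~\ref{rem:controltile}, fresh unique glues (and a seed defined to match them) for the undefined edges of incomplete supertiles, directedness from the fact that distinct complete supertiles have distinct control-tile inputs, and an inductive verification that the resulting assignment tiles $P$ with the correct colors. The only cosmetic imprecision is your attempt to ``read off'' the seed glues of $T$ from the interface with $\Theta$'s seed --- a supertile spans $\ell$ columns, so there is no single such glue --- but this is harmless because you also invoke the fresh-unique-glue mechanism, which is exactly what the paper's proof does.
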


\begin{proof}
Note that, except for the tiles with colors in the first row and column of $P$, the tiles in $T$ are fully defined. 
For the undefined glues on tile types, representing incomplete supertiles, we introduce unique matching glues; the $L$-shaped seed of $T$ is defined to match these glues.
Clearly, the first column and row of $P$ can be self-assembled by $T$.
Recall from Remark~\ref{rem:controltile} that the control tile in position \tl A of a complete supertile $s$ fully determines the supertile and its outputs.
The placement of the control tile is determined by the east output of the west neighbor and the north output of the south neighbor of $s$.
Let $P'$ be the pattern which is self-assembled by $T$.
Using induction over $x$ and $y$, we see that if $s = Q_\Theta\gen{x,y}$, then $t_s = P'_T(x,y)$ for all $x\in[w(P)]$ and $y\in[h(P)]$;
thus, $T$ self-assembles $P=P'$.
\end{proof}

We continue with the investigation of the white tile types that are used to self-assemble the pattern $Q$.
The next lemma follows by a case study of what would go wrong if one tile type were used in two of the positions.

\begin{lemma}\label{lem:white:positions}
Let $\Theta$ be a \tas which self-assembles the pattern $Q$ using at most $m_b$ black tile types and $m_g$ gray tile types.
A white tile type from $\Theta$ which is used in one of the positions \tl A, \tl{B_1}, \tl{B_2}, \tl{C_1}, \tl{C_2}, \tl{D_1}, or \tl{D_2} cannot be used in another position in any supertile.
\end{lemma}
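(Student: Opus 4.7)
My plan is to argue by contradiction and case analysis. Suppose a single white tile type $\tau\in\Theta$ is used at position $p_1$ in some supertile and at a different position $p_2$ in some (possibly identical) supertile. Since $\Theta$ is directed, the $(S,W)$-inputs and hence also the $(N,E)$-outputs of $\tau$ coincide at both occurrences, and I walk through all $\binom72=21$ unordered pairs $\{p_1,p_2\}$ to reach a contradiction in each.

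The central ingredient is the list of $(S,W)$-input pairs that, by \tas-uniqueness together with Lemma~\ref{lem:black:gray}, are \emph{unavailable to any white tile} other than the two already-identified \tl{D_1} and \tl{D_2}. Namely, $(\bul,\bul)$ is taken by \tl{E} (black), $(\dia,\dia)$ by \tl{G} (gray), $(\bul,\dia)$ by \tl{F_2} (gray), $(\dia,\bul)$ by \tl{F_1} (gray), and $(\bul,\ell)$, $(\ell,\bul)$ by the horizontal and vertical counter tiles (gray) for $\ell\in[k]$. Lemma~\ref{lem:black:gray} also pins down the gadget-forced input slots at the candidate positions---south $\bul$ at \tl{B_1}, south $\dia$ at \tl{C_1}, west $\bul$ at \tl{B_2}, west $\dia$ at \tl{C_2}, and full inputs $(0,\bul)$ at \tl{D_1}, $(\bul,0)$ at \tl{D_2}---and yields the structural fact that the north output of \tl{C_1} and the east output of \tl{C_2} must carry the supertile colour $c\in[k]$, because they feed the first counter tile in the right column (resp.\ top row).

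Nine pairs are killed by immediate input-mismatch in a fixed slot: \tl{B_1} vs \tl{C_1}, \tl{B_2} vs \tl{C_2}, \tl{D_1} vs \tl{D_2}, and the six pairs between a \tl{B}/\tl{C} and a \tl D whose fixed slot differs from the \tl D slot. Four pairs combine the two fixed slots to land directly on a forbidden input: \tl{B_1} vs \tl{B_2} on $(\bul,\bul)=\tl E$, \tl{C_1} vs \tl{C_2} on $(\dia,\dia)=\tl G$, \tl{B_1} vs \tl{C_2} on $(\bul,\dia)=\tl{F_2}$, and \tl{B_2} vs \tl{C_1} on $(\dia,\bul)=\tl{F_1}$. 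The remaining eight pairs---the six involving \tl A, plus \tl{B_1} vs \tl{D_2} and \tl{B_2} vs \tl{D_1}---each yield a contradiction by propagating $\tau$'s shared outputs one step to a neighbour that then lands on a forbidden input. Representative dispatches: \tl A vs \tl{D_1} forces outputs $(\dia,\bul)$, so the \tl{B_1} east of \tl A gets $(\bul,\bul)=\tl E$; \tl A vs \tl{C_1} forces $\tau(N)=c$, putting $(c,\bul)$ (a vertical-counter input) at the \tl{B_2} above \tl A; \tl A vs \tl{B_1} forces $\tau(N)=\bul$ (the \tl E above the \tl{B_1} occurrence demands $\bul$), again giving $(\bul,\bul)$ at \tl{B_2}; and \tl{B_1} vs \tl{D_2} fixes $\tau=\tl{D_2}$ with outputs $(\bul,\dia)$, so the east neighbour of that \tl{B_1} receives $(\bul,\dia)=\tl{F_2}$ or $(\dia,\dia)=\tl G$. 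The four symmetric companions (\tl A vs \tl{D_2}, \tl A vs \tl{B_2}, \tl A vs \tl{C_2}, \tl{B_2} vs \tl{D_1}) run identically.

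The main subtlety I foresee is assembling a \emph{complete} inventory of forbidden input pairs so that the one-step propagation in the flexible-position cases always closes. In particular, recognising that the gray tiles \tl{F_1} and \tl{F_2} contribute the forbidden inputs $(\dia,\bul)$ and $(\bul,\dia)$---alongside the more obvious $(\bul,\bul)$, $(\dia,\dia)$, and the counter inputs---is what collapses the diagonal and \tl{B}-vs-\tl{D} pairs to a single-step argument rather than demanding a chase along the bottom row or the left column. Once this inventory is in hand, each of the twenty-one cases reduces to a one-line dispatch and the lemma follows.
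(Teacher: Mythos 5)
Your proof is correct and follows essentially the same route as the paper's: the same nine fixed-slot input mismatches, the same four input coincidences with the black/gray tile types \tl E, \tl G, \tl{F_1}, \tl{F_2}, and the same one-step output-propagation argument for the six pairs involving \tl A and for the pairs $(\tl{B_1},\tl{D_2})$ and $(\tl{B_2},\tl{D_1})$. The only difference is presentational --- you enumerate all $21$ pairs explicitly where the paper groups them into four batches.
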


\begin{proof}
Clearly, we do not have to argue about positions which are gray or black.
By Lemma~\ref{lem:black:gray} and the design of supertiles, a tile type used in a position
\begin{compactenum}[1.)]
	\item \tl{B_1} has south and north glue \bul;
	\item \tl{B_2} has west and east glue \bul,
	\item \tl{C_1} has south glue \dia and north glue in $[k]$;
	\item \tl{C_2} has west glue \dia and east glue in $[k]$;
	\item \tl{D_1} has east and west glue \bul, south glue $0$, and north glue \dia;
	\item \tl{D_2} has south and north glue \bul, west glue $0$, and east glue \dia.
\end{compactenum}

First, suppose the same tile type would be used in position $\tl A$ and one of the positions \tl{B_1}, \tl{B_2}, \tl{C_1}, \tl{C_2}, \tl{D_1}, or \tl{D_2}.
The tile type had north or east output from $\bul, \dia, 1,\ldots,k$; therefore, the north or east neighbor of the tile of this type in position \tl A would be black or gray --- a contradiction.

Due to the distinction in their inputs, the same tile type cannot be used in the following pairs of positions:
$(\tl{B_1},\tl{C_1})$, $(\tl{B_1},\tl{D_1})$,
$(\tl{B_2},\tl{C_2})$, $(\tl{B_2},\tl{D_2})$, 
$(\tl{C_1},\tl{D_1})$, $(\tl{C_1},\tl{D_2})$, 
$(\tl{C_2},\tl{D_1})$, $(\tl{C_2},\tl{D_2})$, 
$(\tl{D_1},\tl{D_2})$.

If a tile type were used in any pair of positions 
$(\tl{B_1},\tl{B_2})$, $(\tl{B_1},\tl{C_2})$, 
$(\tl{B_2},\tl{C_1})$, or 
$(\tl{C_1},\tl{C_2})$
it would have the same inputs as one of the gray or black tiles.

If a tile type were used in position $\tl{B_1}$ and $\tl{D_2}$, then the east neighbor of the tile of this type in position \tl{B_1} would be the gray tile labelled $F_2$ or $G$ from Fig.~\ref{fig:black:gray}.
Symmetrically, no tile type can be used in both positions \tl{B_2} and \tl{D_1}.
\end{proof}

Let \tl{B_1^*} be the right-most position \tl{B_1} in a supertile, adjacent to position \tl{C_1}, and let \tl{B_2^*} be the top-most position \tl{B_2} in a supertile, adjacent to position \tl{C_2}.
The following argument is about tiles in the five positions $K = \set{\tl A, \tl{B_1^*}, \tl{B_2^*}, \tl{C_1}, \tl{C_2}}$ of each supertile.
Following Remark~\ref{rem:controltile} it is clear that a tile in position~\tl{A} fully determines the supertile, tiles in positions \tl{B_1^*} and \tl{C_1} carry the color and the east glue of a supertile, whereas tiles in positions \tl{B_2^*} and \tl{C_2} carry the color and the north glue.

\begin{lemma}\label{lem:white}
Let $\Theta$ be a \tas which self-assembles $Q$ using at most $m_b$ black tile types and $m_g$ gray tile types.
Let $s_1$ and $s_2$ be supertiles in $Q_\Theta$.
\begin{compactenum}[\ i.)]
	\item If $s_1$ and $s_2$ portray different colors, they cannot share any tile types in positions from $K$.
	\item If $s_1(E) \neq s_2(E)$, they cannot share any tile types in \tl A, \tl{B_1^*}, or \tl{C_1}.
	\item If $s_1(N) \neq s_2(N)$, they cannot share any tile types in \tl A, \tl{B_2^*}, or \tl{C_2}.
\end{compactenum}
The three statements hold for all available positions in incomplete supertiles.
\end{lemma}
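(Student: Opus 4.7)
The plan is to exploit two facts: by Remark~\ref{rem:controltile}, the tile in position \tl A of a complete supertile fully determines the whole supertile (in particular its color $c$ and the outputs $s(N),s(E)$); and by Lemma~\ref{lem:black:gray}, the glues $\bul,\dia,0,1,\ldots,k$ are pairwise distinct and the color counters above \tl{C_1} (\resp to the right of \tl{C_2}) decrement from the value supplied by \tl{C_1}'s north output (\resp \tl{C_2}'s east output) down to $0$, at which point the white tile \tl{D_1} (\resp \tl{D_2}) is placed. Consequently, the north output of any tile used in position \tl{C_1} equals the portrayed color $c$, and symmetrically the east output of any tile in \tl{C_2} equals $c$. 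Moreover, the south input of any \tl{C_1} tile is pinned to \dia (coming from the gray \tl G-corner of the supertile below or from the gadget column), and the west input of any \tl{C_2} tile is pinned to \dia for the analogous reason.

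For part (i), suppose $s_1$ and $s_2$ portray colors $c_1\neq c_2$. Their \tl A tiles cannot coincide, since by the control-tile principle a common \tl A tile would force $s_1=s_2$. Their \tl{C_1} tiles cannot coincide because the north outputs $c_1$ and $c_2$ would have to agree; symmetrically for \tl{C_2}. For \tl{B_1^*}, its east output equals the west input of the adjacent \tl{C_1} tile; since the two \tl{C_1} tiles are of distinct types but share the same south input \dia, the \tas condition forces their west inputs to differ, so the two \tl{B_1^*} tiles differ on their east output and are therefore of distinct types. The argument for \tl{B_2^*} is symmetric, using the pinned west input \dia of \tl{C_2}.

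Parts (ii) and (iii) follow the same template. Under the hypothesis $s_1(E)\neq s_2(E)$ of (ii), the two \tl{C_1} tiles have different east outputs and hence are of distinct types; this forces distinct \tl{B_1^*} tiles by the \tas argument from the previous paragraph, and distinct \tl A tiles by the control-tile principle (since distinct supertiles have distinct control tiles). Part (iii) is fully symmetric in \tl{C_2}, \tl{B_2^*}, and \tl A. For incomplete supertiles, the argument applies verbatim to whichever of these positions are present. The main subtlety I anticipate is the step that converts ``distinct outputs of \tl{C_1}'' into ``distinct inputs of \tl{C_1}'': this relies on the \tas condition together with the pinned south input \dia, and both directions of the implication between ``distinct types'' and ``distinct glues on a particular edge'' are needed in that translation.
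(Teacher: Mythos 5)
Your argument follows essentially the same route as the paper's proof: the control-tile principle of Remark~\ref{rem:controltile} for position \tl A, the observation that the counter initialization pins the north output of a \tl{C_1} tile to the portrayed color (and its east output to $s(E)$), and the directedness of the \tas combined with the pinned south input \dia to transfer distinctness between \tl{C_1} and \tl{B_1^*}. The paper phrases that last step in the forward direction (a shared \tl{B_1^*} type forces, via the uniform south input \dia, the same \tl{C_1} type and hence the same color and east output), while you take the contrapositive (distinct \tl{C_1} types with equal south inputs must have distinct west inputs); these are equivalent uses of the \tas property, and your closing remark correctly identifies this as the crux. One step is missing, however: the lemma asserts that $s_1$ and $s_2$ share no tile type \emph{in positions from $K$}, which also forbids coincidences across different positions, e.g.\ the \tl A tile of $s_1$ having the same type as the \tl{C_1} tile of $s_2$; your case analysis only ever compares like positions. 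The paper dispatches exactly this in the first sentence of its proof by invoking Lemma~\ref{lem:white:positions}, which rules out any white tile type occupying two distinct labelled positions, and your proof needs the same citation to cover the statement as written. A second, harmless slip: the south input \dia of a \tl{C_1} tile always comes from the \tl G corner tile of the supertile below (bottom-row supertiles have no \tl{C_1} position at all), not from a gadget column --- the gadget rows and columns sit only along the north and east borders of $Q$.
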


\begin{proof}
By Lemma~\ref{lem:white:positions}, we do not have to consider mixups of positions.
Firstly, recall from Remark~\ref{rem:controltile} that the tile in position \tl A determines the supertile and its outputs.
Two supertiles portraying different colors or having different outputs cannot share the same tile type in positions \tl A.

Now, consider a supertile $s_1$ representing the color $c$.
The type $\gamma$ of the tile in position \tl{C_1} defines the east output $s_1(E)=\gamma(E)$ and initializes the vertical color counter.
If $\gamma$ is used in another supertile $s_2$ in position \tl{C_1}, then $s_2$ portrays the same color $c$ and $s_1(E) = s_2(E)$.
As the south input of every tile in a position \tl{C_1} is \dia, the type $\beta$ of the tile in position \tl{B_1^*} of $s_1$ determines the placement of $\gamma$ in position \tl{C_1}.
Thus, if $\beta$ is used in another supertile $s_2$ in position \tl{B_1^*}, then $s_2$ portrays the same color $c$ and $s_1(E) = s_2(E)$.
This concludes the proof of statement {\it ii.)}.
Statements~{\it i.)}\ and~{\it iii.)}\ follow by symmetric arguments on the tile types in positions \tl{C_2} an \tl{B_2^*}.
\end{proof}

\begin{remark}
We do not claim that two supertiles could not share any tile types in positions \tl{B_1} or \tl{B_2} while portraying different colors or having different output.
Indeed, consider two supertiles $s_1$ and $s_2$ portraying different colors and let $\beta$ and $\beta'$ be tile types for positions \tl{B_1} with $\beta(E) = \beta'(W) = \tl x$ and $\beta(W) = \beta'(E) = \tl y$.
If the control tile of $s_1$ has east glue \tl x while the control tile of $s_2$ has east glue \tl y, these supertiles have different tile types in positions \tl{C_1} but share the tile types $\beta$ and $\beta'$ in their positions \tl{B_1}.
\end{remark}

Let us conclude the proof of Theorem~\ref{thm:reduction}.

\begin{lemma}\label{lem:if}
The pattern $P$ can be self-assembled by a \tas $T$ with $m$ tile types if $Q$ can be self-assembled by a \tas $\Theta$ with $m_b$ black tile types, $m_w$ white tile types, and $m_g$ gray tile types.
\end{lemma}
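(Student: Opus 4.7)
By Lemma~\ref{lem:st:to:t} the set $S$ of distinct supertiles in $Q_\Theta$ already yields a \tas $T$ with $|S|$ tile types that self-assembles $P$, so the whole task reduces to the counting bound $|S| \le m = |C|+3$. My plan is to combine (a) a careful accounting of the white tile budget $m_w$ --- which through Lemma~\ref{lem:white} and Remark~\ref{rem:controltile} forces a slack inequality on the output multiplicities of the supertiles --- with (b) the structural constraint on \tl{or}-tiles provided by Lemma~\ref{lem:p_or}, transported from $T$ to the \tl{or}-supertiles via Lemma~\ref{lem:st:to:t}.

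\textbf{White tile accounting.} I partition $C = C_u \sqcup C_n$ into unique colors and non-unique colors of $P$; for each color $c$ let $S_c \subseteq S$ be the supertiles portraying $c$ and let $E_c$, $N_c$ be their sets of east, respectively north, outputs (counting only supertiles that possess the given output). By Lemma~\ref{lem:white}(i), white tile types used at the five key positions $K = \{\tl A, \tl{B_1^*}, \tl{B_2^*}, \tl{C_1}, \tl{C_2}\}$ are disjoint across distinct color classes. Within a class $c \in C_n$ every supertile is complete (non-unique colors avoid the first row and column of $P = P_F$ by construction); Remark~\ref{rem:controltile} forces at least $|S_c|$ distinct tile types at \tl A, while Lemma~\ref{lem:white}(ii)--(iii) contribute at least $|E_c|$ each at \tl{B_1^*} and \tl{C_1} and at least $|N_c|$ each at \tl{B_2^*} and \tl{C_2}. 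A unique color $c \in C_u$ contributes $5$ tiles if its single supertile is complete, $2$ if it is singly incomplete, and $0$ at the corner $(1,1)$. Including the two tile types at \tl{D_1}, \tl{D_2} from Lemma~\ref{lem:black:gray} and writing $|C_u| = a + w(P) + h(P) - 1$ with $a$ the number of interior unique colors, the lower bound on the total white tile count simplifies to $5k - 3(w(P) + h(P)) + 3 + \sum_{c \in C_n} \delta_c$, where $\delta_c := |S_c| + 2|E_c| + 2|N_c| - 5 \ge 0$. Comparing with $m_w = 5k - 3(w(P) + h(P)) + 14$ then yields
\[
    \sum_{c \in C_n} \delta_c \le 11.
\]

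\textbf{Applying Lemma~\ref{lem:p_or}.} Since $T$ self-assembles $P = P_F$, which contains the subpattern $p$, Lemma~\ref{lem:p_or} applies to $T$ and forces one of its four listed cases on the \tl{or}-tiles. A direct inspection shows $|E_{\tl{or}}^T| + |N_{\tl{or}}^T| \ge 6$ in every case (the four lower bounds being $3+3$, $4+2$, $2+4$, and $\ge 8+1$). Since Lemma~\ref{lem:st:to:t} transfers glues faithfully between tiles of $T$ and their corresponding supertiles, the same bound holds at the supertile level: $|E_{\tl{or}}| + |N_{\tl{or}}| \ge 6$.

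\textbf{Combining the bounds.} For $c \in C_n$ I rewrite $|S_c| - 1 = \delta_c - 2(|E_c|-1) - 2(|N_c|-1)$ and sum; using $|E_c|, |N_c| \ge 1$ for every $c \in C_n$,
\[
    \sum_{c \in C_n} (|S_c| - 1) = \sum_{c \in C_n} \delta_c - 2 \sum_{c \in C_n} (|E_c|-1) - 2 \sum_{c \in C_n} (|N_c|-1) \le 11 - 2\bigl(|E_{\tl{or}}| + |N_{\tl{or}}| - 2\bigr) \le 3.
\]
Hence $|S| = |C_u| + \sum_{c \in C_n} |S_c| \le |C_u| + |C_n| + 3 = |C| + 3 = m$, and the \tas $T$ furnished by Lemma~\ref{lem:st:to:t} has at most $m$ tile types, proving the lemma. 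The most delicate part of this argument is the white-tile bookkeeping in the second paragraph: one must verify that the specific constant $+14$ in $m_w$ produces exactly the slack $11$ needed to absorb the $+3$ gap on $|S|$ forced by the bound $|E_{\tl{or}}| + |N_{\tl{or}}| \ge 6$, and in particular that the contribution of incomplete supertiles matches the $-3(w(P)+h(P))$ term in $m_w$.
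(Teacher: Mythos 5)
Your proposal is correct, and it rests on the same pillars as the paper's proof---Lemma~\ref{lem:st:to:t} to reduce the claim to the bound $\abs{S}\le m$ on distinct supertiles, Lemmas~\ref{lem:black:gray}, \ref{lem:white:positions} and~\ref{lem:white} together with Remark~\ref{rem:controltile} to make the white tile types at the positions in $K$ additively countable and disjoint across color classes, and the transfer of Lemma~\ref{lem:p_or} to the \tl{or}-supertiles via Lemma~\ref{lem:st:to:t}---but the decisive counting step is organized genuinely differently. The paper prices the minimal configuration (one supertile per color other than \tl{or}, plus the types for $\tl{D_1}$ and $\tl{D_2}$), finds that exactly $16$ white types remain for the \tl{or}-supertiles, and then case-splits over the four alternatives of Lemma~\ref{lem:p_or} (costing $15$, $16$, $16$, and $\ge 26$ white types, so case {\it iv.)}\ is impossible and at most four types can serve as \tl{or} control tiles), handling the residual slack of one type with the informal remark that it ``might be used as another control tile''. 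You instead introduce the per-color slack $\delta_c=\abs{S_c}+2\abs{E_c}+2\abs{N_c}-5\ge 0$, verify that the budget $m_w$ forces $\sum_{c\in C_n}\delta_c\le 11$ (your arithmetic checks out, including the matching of the incomplete supertiles against the $-3(w(P)+h(P))$ term and the $+2$ for $\tl{D_1},\tl{D_2}$), and compress all four cases of Lemma~\ref{lem:p_or} into the uniform bound $\abs{E_{\tl{or}}}+\abs{N_{\tl{or}}}\ge 6$ (equivalently, $i\cdot j\ge 8$ with integers $i,j\ge 1$ forces $i+j\ge 6$); the identity $\abs{S_c}-1=\delta_c-2(\abs{E_c}-1)-2(\abs{N_c}-1)$ then yields $\sum_{c\in C_n}(\abs{S_c}-1)\le 3$ in one line. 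Your version buys rigor exactly where the paper is loosest: a priori the spare budget could fund an extra supertile of \emph{any} non-\tl{or} color rather than a fourth \tl{or} control tile, and your linear inequality absorbs every such distribution uniformly, whereas the paper's case analysis buys structural insight (it can even identify case {\it ii.)}\ as the only realizable one) and a concrete control-tile picture feeding into Lemma~\ref{lem:st:to:t}. Two hypotheses you use tacitly do hold but deserve explicit citation: $\tl{B_1^*}$ and $\tl{B_2^*}$ are instances of the positions $\tl{B_1}$ and $\tl{B_2}$, so Lemma~\ref{lem:white:positions} rules out cross-position sharing for them; and non-unique colors occur only at complete supertiles because, by the construction in Sect.~\ref{sec:pats}, every position of $P_F$ adjacent to the seed carries a unique color.
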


\begin{proof}
We will prove that $Q_\Theta$ cannot contain more than $m$ distinct supertiles while $\Theta$ respects the given tile bounds.
Then, the claim follows from Lemma~\ref{lem:st:to:t}.
The black, gray, and two white tile types in $\Theta$ are defined by Lemma~\ref{lem:black:gray}.
We are now counting the minimal number of white tile types that we need to self-assemble the pattern $Q$.
The number of distinct tile types used as control tiles, in positions \tl A, equals to the number of distinct complete supertiles of $Q_\Theta$.

Consider a color $c\in C$.
There is at least one supertile in $Q_\Theta$ which portrays color $c$ and, assuming the supertile is complete, we need five white tile types in positions from $K$ of the supertile, by Lemma~\ref{lem:white:positions}, and these five tile types cannot be used for any of positions from $K$ of a supertile portraying another color, by Lemma~\ref{lem:white}.
If the supertile is incomplete but $c\neq P(1,1)$, hence $c$ is unique in $P$, we only need two white tile types which cannot be used in any of the positions from $K$ of another supertile.
We do not need any additional white tile types for the supertile portraying $P(1,1)$.
For the supertiles portraying colors in $C\sm\set{\Cor}$ we need 
%\[
	$5 \cdot (\abs{C} -1) - 3 \cdot (w(P) + h(P)) + 1$
%\]
white tiles which cannot be used in a position from $K$ in a supertile representing the \tl{or}-gate;
furthermore, the two white tile types for positions \tl{D_1} and \tl{D_2} also cannot be used in a position from $K$.
Among these white tile types we find only $\abs{C} - w(P) - h(P))$ types used as control tiles.

We only have 16 white tile types left for the \tl{or}-gate supertiles.
From Lemma~\ref{lem:p_or} and Lemma~\ref{lem:st:to:t} we infer that $Q_\Theta$ contains either
\begin{compactenum}[\it i.)]
	\item three distinct supertiles $s_1,s_2,s_3\in T$ portraying \Cor all having distinct north and east glues,
	\item four distinct supertiles $s_1,s_2,s_3,s_4\in T$ portraying \Cor all having distinct north glues and together having at least two distinct east glues,
	\item four distinct supertiles $s_1,s_2,s_3,s_4\in T$ portraying \Cor all having distinct east glues and together having at least two north glues, or
	\item eight distinct supertiles $s_1,\ldots,s_8\in T$ portraying \Cor all having distinct east or north glues.
\end{compactenum}
Indeed, if none of these conditions were true for $Q_\Theta$, then, using the construction given in Lemma~\ref{lem:st:to:t}, we could generate a \tas $T$ which self-assembled $P$ and invalidated Lemma~\ref{lem:p_or}.

By Lemma~\ref{lem:white}, two distinct supertiles portraying \Cor cannot share tile types in positions \tl A, they can only share tile types in positions \tl{B_1^*} and \tl{C_1} if their east outputs equal, and they can only share tile types in positions \tl{B_1^*}, \tl{C_1} if their north outputs equal.
For case~{\it i.)}\ we need at least 15 white tile types of which three can be used as control tiles; the left over tile type might be used as another control tile.
For cases~{\it ii.)}\ and~{\it iii.)}\ we need at least 16 white tile types of which four can be used as control tiles.
Case~{\it iv.)} is not possible because we would need at least 26 white tile types.
A more involved analysis reveals that only case~{\it ii.)}\ is possible, but for our purpose it is enough that the remaining 16 white tile types contain at most four types that can be used as control tiles.

The number of distinct supertiles in $Q_\Theta$ is limited by the number of tile types that can be used as control tiles plus the number incomplete supertiles.
We obtain that $Q_\Theta$ contains $k + 3 = m$ distinct supertiles as desired.
Using Lemma~\ref{lem:st:to:t}, the pattern $P$ can be self-assembled by a \tas $T$ with $m$ tile types.
\end{proof}

\section*{Conclusions}

We prove that \kmbpats, a natural variant of \kpats, is \NP-complete for $k = 3$.
Furthermore, we present a novel proof for the \NP-completeness of \pats and our proof is more concise than previous proofs.
We introduce several new techniques for pattern design in our proofs, in particular in Sect.~\ref{sec:bwg}, and we anticipate that these techniques can ultimately be used to prove that $2$-\mbpats and also $2$-\pats are \NP-hard.

%\bibliographystyle{abbrv}
%\bibliography{pats}

\end{document}